\numberwithin{equation}{section}
\newtheorem{theorem}{Theorem}[section]
\newtheorem{lemma}[theorem]{Lemma}
\newcommand{\diag}{\mathrm{diag}}
\newtheorem{thm}{Theorem}[section]
\newtheorem{rem}[thm]{Remark}
\newcommand\cB{{\mathcal B}}
\newcommand\cD{{\mathcal D}}
\newcommand\cE{{\mathcal E}}
\newcommand\cH{{\mathcal H}}
\newcommand\cL{{\mathcal L}}
\newcommand\cS{{\mathcal S}}
\newcommand\cP{{\mathcal P}}
\newcommand\cU{{\mathcal U}}
\newcommand\cR{{\mathcal R}}
\newcommand\bC{{\mathbb C}}
\newcommand\bE{{\mathbb E}}
\newcommand\bN{{\mathbb N}}
\newcommand\bP{{\mathbb P}}
\newcommand\bR{{\mathbb R}}
\newcommand\bI{{\mathbb I}}
\newcommand\bZ{{\mathbb Z}}
\newcommand\fr{{\mathfrak{r}}}
\newcommand\fp{{\mathfrak{p}}}
\newcommand\fe{{\mathfrak{e}}}
\newcommand\fz{{\mathfrak{z}}}
\newcommand\frH{{\mathfrak{H}}}
\newcommand\frf{{\mathfrak{f}}}
\newcommand\frP{{\mathfrak{P}}}
\newcommand\frR{{\mathfrak{R}}}
\newcommand\ssr{{\mathsf{r}}}
\newcommand\ssp{{\mathsf{p}}}
\newcommand\sse{{\mathsf{e}}}
\newcommand\ssb{{\mathsf{b}}}
\newcommand\ssa{{\mathsf{a}}}
\newcommand\rmr{{\mathrm{r}}}
\newcommand\rmp{{\mathrm{p}}}
\newcommand\rmb{{\mathrm{b}}}
\newcommand\hmr{{\tilde{r}^{\bullet}}}
\newcommand\hmp{{\tilde{p}^{\bullet}}}
\newcommand\roo{{\varrho_n^{\bar{r},\bar{p},\bar{b}}}}
\newcommand\Roo{{\varrho^n_{\bar{r},\bar{p},\beta}}}
\newcommand{\expval}[1]{\langle#1\rangle}
\newcommand{\ket}[1]{|#1\rangle}
\newcommand{\braket}[2]{\langle#1|#2\rangle}
\newcommand{\Tr}[1]{\mathrm{Tr}(#1)}
\newcommand{\ketbra}[1]{|#1\rangle\langle#1|}
\begin{document}

\title{Derivation of Euler equations from quantum and classical microscopic dynamics}

\author[1]{Amirali Hannani}
\author[2]{Fran\c cois Huveneers}
\affil[1]{
ICERM, Brown University, 121 S Main St, Providence, RI 02903, United States }
\affil[2]{
Ceremade,
UMR-CNRS 7534, 
Universit\'e Paris Dauphine, 
PSL Research University, 
75775 Paris cedex 16, 
France
}

\date{\today}
    \maketitle    
    
\begin{abstract}
We derive Euler equations from a Hamiltonian microscopic dynamics. 
The microscopic system is a one-dimensional disordered harmonic chain, and the dynamics is either quantum or classical. 
This chain is an Anderson insulator with a symmetry protected mode: 
Thermal fluctuations are frozen while the low modes ensure the transport of elongation, momentum and mechanical energy, that evolve according to Euler equations in an hyperbolic scaling limit. 
In this paper, we strengthen considerably the results in \cite{BHO} and \cite{A20}, where we established a limit in mean starting from a local Gibbs state: 
We now control the second moment of the fluctuations around the average, yielding a limit in probability, and we enlarge the class of admissible initial states.
\end{abstract}

\section{Introduction and results}
The deep understanding of transport properties of materials is of fundamental physical interest. 
It is also an utmost challenging task from a mathematical perspective, and only some specific cases have been successfully investigated so far. 
Here we focus on the derivation of \emph{Euler} system of conservation laws in an appropriate scaling limit, 
the so-called hyperbolic space-time rescaling, where both space and time are rescaled by a common factor $n\to\infty$.  

The usual route to Euler equations rests on some form of \emph{ergodicity}: 
The replacement of time averages by a spatial averages allows to close the equations of motion for the conserved fields. 
These fields usually involve at least the number of particles, conveniently replaced by elongation in 1-d systems, momentum and energy. 
However, while the common belief is that ``typical'' anharmonic dynamics are ergodic in the large volume limit, 
it has so far not been possible to establish it at a mathematical level of rigor. 
Instead, most rigorous results rely on the introduction of some noise that mimics the effect of anharmonic interactions and ensure ergodicity
\cite{stefano93,stefano14,stefano16}.
In addition, recently, 
the macroscopic evolution of conserved charges was derived rigorously for a class of 1-d quantum systems in the hyperbolic scaling \cite{doyon}, 
but the number of charges remains unknown as a consequence of the lack of knowledge on the ergodicity. 

In \cite{BHO} and \cite{A20}, we have introduced a completely new approach to this question, that we further investigate in this paper. 
{Indeed, we consider an \emph{integrable}, hence strongly non-ergodic, dynamics and show the validity of Euler equations in an hyperbolic scaling limit.
While the integrability and some particular features of our model render possible a full mathematical treatment, 
our primary objective is not to exploit these specific aspects.
Instead, we want to convey the message that ergodicity needs not to be the crucial ingredient for the emergence of the hydrodynamic behavior described by Euler equations.
In our work, the macroscopic description is derived from the fact that mechanical and thermal modes, corresponding respectively to low and high modes, evolve on strikingly different time scales.
We live to further investigations the possibility of implementing our strategy in the presence of anharmonic interactions.}

\paragraph{Model.}
We study the disordered harmonic chain described by the Hamiltonian 
\begin{equation*}
	H \; = \; \sum_{x=1}^n \frac{p_x^2}{2m_x} + g \frac{(q_{x+1} - q_x)^2}{2} 
\end{equation*}
where $n$ is the number of particles, with canonical coordinates $(p_x,q_x)$ 
(boundary conditions will be specified later on and play no crucial role). 
We will further fix energy units so that $g=1$. 
The equations of motion are given by 
\begin{equation*}
	m_x \frac{d q_x}{dt} \; = \; p_x, \qquad
	\frac{d p_x}{dt} \; = \; (\Delta q)_x
\end{equation*}
where $\Delta$ denotes the lattice Laplacian. 
We consider both classical and quantum dynamics, 
and these equations of motion should thus be interpreted either as Hamilton equations, or as Heisenberg equations. 

Crucially, the masses $m_x$ are assumed to be random and independent. 
This guarantees that the spectrum of $M^{-1}\Delta$ is \emph{Anderson localized}, with a localization length diverging as one approaches the ground state, 
where $M$ is the diagonal matrix with $M_{x,x}=m_x$, cf.\@ \cite{Joel1,Theo,Dhar,FA}. 
The ground state, explicitly given by $\psi(x)=1/\sqrt{N}$ for $x=1,\dots,n$, is an example of a symmetry protected mode, here issuing from the conservations of momentum. 
Anderson localization guarantees that thermal fluctuations are frozen on all time scales, 
so that the evolution of the mechanical energy is entirely slaved to the evolution of momentum and elongation (cf.~below).

The system has three obvious locally conserved quantities (that may be destroyed at the boundary depending on the choice of boundary conditions): 
energy, momentum and elongation: 
\begin{equation} \label{conserved fields}
	H_n \; = \; \sum_{x=1}^n \frac12 \left(\frac{p_x^2}{m_x}+r_x^2 \right) 
	\; =: \; \sum_{x=1}^n e_x,\qquad
	P_n\; = \; \sum_{x=1}^n p_x,  \qquad
	R_n\; = \; \sum_{x=1}^n r_x,
\end{equation}
with 
$$
	r_x = q_{x+1} - q_x .
$$
Since the dynamics is integrable, there exists actually a full set of conserved quantities.
{As said though, we are not primarily interested in studying peculiarities of integrable systems;
these other quantities will actually remain frozen in the macroscopic limit for the type of initial states that we consider,}
cf. \cite{BHO,A20}. 

Our choice of working with a one-dimensional lattice is mainly motivated by technical considerations: 
No detailed description of the spectrum of $M^{-1}\Delta$ is available in higher dimensions, and it is no longer possible to use the convenient elongation variables $r_x$ instead of $q_x$ if $d>1$. 
We expect however that our approach extends to higher dimensions, though one may not be able to carry over a full mathematical proof. 
In particular, in dimension 3 at moderate values of the disorder strength, high modes presumably become diffusive instead of staying localized. 
Since diffusive motion is too slow to be observed in the hyperbolic scaling limit, 
thermal fluctuations would appear frozen as well and we reckon that the hydrodynamic behavior will still be determined by the evolution of mechanical modes. 
The same phenomenology is at play for clean harmonic chains perturbed by a noise~\cite{stefano14,stefano16,stefano18}. 

\paragraph{Initial states.}
We assume that the system is initially in a local equilibrium state,
{that is presumably the most common state of matter in usual circumstances.} 
In our case, it is characterized by three macroscopic (smooth) profiles: the inverse temperature profile $\beta(y)$, 
the momentum profile $\bar p(y)$ and the elongation profile $\bar r(y)$, with $y\in [0,1]$. 
This means that, at the microscopic site $x$, the local inverse temperature is $\beta_x := \beta(x/n)$, 
the average momentum is $\bar p_x := \bar p(x/n)$ and the average elongation is $\bar r_x := \bar r(x/n)$.

For a classical dynamics, this state is given by a local Gibbs state, see \eqref{localgibbsstate} in Section~\ref{sec:modelclassical}.
If the system is quantum, the precise definition is provided by Assumptions \ref{AA1}-\ref{AA3} in Section~\ref{subsec: quantum model and results}, 
that involve some clustering conditions. 
 We notice that a local Gibbs state analogous to the classical local Gibbs state satisfies these conditions, 
see the definition \eqref{locallyGibbs} and Theorem~\ref{thmGibbs} in Section~\ref{subsec: quantum model and results}. 
However, our definition is more general and may in particular be satisfied by pure states. 
Actually, the eigenstate thermalization hypothesis and the notion of quantum typicality \cite{deutsch}\cite{srednicki}\cite{rigol} suggest that, 
if the system is prepared through some thermal process, the state of the system will share many similarities with a thermal state, 
and may satisfy the required conditions.      

\paragraph{Outline of the results.}
On the macroscopic level, the evolution of the three conserved fields $H_n$, $P_n$ and $R_n$ 
is described by the following system of conservation laws, known as Euler equations:
\begin{equation}\label{eq: Euler equations}
	\partial_t \fr(y,t)= \frac{1}{\bar{m}} \partial_y \fp(y,t), \qquad 
	\partial_t \fp(y,t)=\partial_y \fr(y,t),  \qquad
	\partial_t \fe(y,t)=\frac{1}{\bar{m}}\partial_y\left(\fr(y,t)\fp(y,t) \right)
\end{equation}
where $\overline{m}$ denotes the average mass
(one may not immediately recognize Euler equations, because they are here written in Lagrangian coordinates, through the use of the elongation variable $r_x$).
More precisely, we have established in \cite{BHO,A20} that, given a smooth test function $f$ on $[0,1]$, 
the following limit on average holds:  
\begin{equation}\label{eq: limit on average}
	\frac{1}{n} \sum_{x=1}^n f\left(\frac{x}{n}\right) 
	\langle z_x(nt) \rangle_{\rho} \to \int_0^1\ f(y)\fz(y,t)dy
\end{equation}
as $n\to\infty$ almost surely w.r.t.\@ the distribution of the masses, 
uniformly for any time $t$ smaller than some arbitrary time horizon $T$, 
where $z$ stands for $r$, $p$ or $e$, 
where $\fz$ stands for $\fr$, $\fp$ or $\fe$ respectively, 
and where $\langle \cdot \rangle_\rho$ denotes the averaging w.r.t.\@ a local Gibbs state.

{It is interesting to observe that the solution $\fe(y,t)$ to the above Euler equations reads 
$$
	\fe(y,t) \; = \; \frac{\fp^2(y,t)}{2 \bar m} + \frac{\fr^2 (y,t)}{2} + C(y)
$$
where $C(y)$ is a constant that depends on the initial conditions and is simply equal to the local temperature in the classical set-up.
This means that the macroscopic evolution of the energy is \emph{slaved} to the macroscopic evolution of the elongation and momentum, 
with a functional dependence that is the same as on the microscopic level. 
We expect this to be more general: If $O_x = f(r_x,p_x)$ is a microscopic local observable, its macroscopic description would be simply given by $f(\fr(y,t),\fp(y,t))$
(we have here taken a local observable depending on a single site but this restriction can be dropped).
We leave this to further investigations.}

In this paper, we extend our previous results by proving a convergence in quadratic variation and allowing for a broader class of initial states in the quantum set-up. 
In Theorem \ref{thmhighermomentclassical} and Theorem \ref{mainthmQ} below, 
devoted respectively to the classical and quantum dynamics, we establish that 
\begin{align*}
	&\left\langle\left( \frac{1}{n} \sum_{x=1}^n f\left(\frac{x}{n}\right) r_x(nt) -  \int_0^1\ f(y)\fr(y,t)dy\right)^2 \right\rangle_\rho \; \to \; 0, \\
	&\left\langle\left( \frac{1}{n} \sum_{x=1}^n f\left(\frac{x}{n}\right) p_x(nt) -  \int_0^1\ f(y)\fp(y,t)dy\right)^2 \right\rangle_\rho \; \to \; 0, \\
	&\left\langle\left( \frac{1}{n} \sum_{x=1}^n f\left(\frac{x}{n}\right) e_x(nt) -  \int_0^1\ f(y)\fe(y,t)dy\right)^2 \right\rangle_\rho \; \to \; 0.
\end{align*}
as $n\to \infty$, almost surely w.r.t.\@ the distribution of the masses,
and uniformly for any time $t$ smaller than some arbitrary time horizon $T$. 
In addition, $\langle \cdot \rangle_\rho$ refers now to any local equilibrium state as described above. 

The convergence in quadratic variation implies a strong concentration around the average behavior 
and provides concrete support to the claim that ergodic averages suppress fluctuations at the macroscopic scale. 
It is also a first step towards a more global control over out-of-equilibrium fluctuations. 
Indeed, in macroscopic fluctuation theories~\cite{RevModPhys.87.593}, 
the Euler equation would appear as the minimizer of an action functional and macroscopic fluctuations are expected to be exponentially suppressed as a function of the system size. 
We hope that the techniques developed in this paper will prove useful in establishing such a global picture.

As they stand, our bounds already yield a convergence in probability of the empirical distributions to their classical limit: 
For any $\delta > 0$, 
$$
	\mathsf P_\rho \left(\left|  \frac{1}{n} \sum_{x=1}^n f\left(\frac{x}{n}\right) z_x(nt) -  \int_0^1\ f(y)\fz(y,t)dy \right| > \delta \right) \quad \to \quad 0
$$ 
as $n\to \infty$, with $z=r$, $z=p$ or $z=e$.
In the classical set-up,  $\mathsf P_\rho$ denotes the probability induced by the average $\langle \cdot \rangle_\rho$, and the claim follows from Markov inequality. 
Indeed, writing 
$$
	O_n \; = \; \frac{1}{n} \sum_{x=1}^n f\left(\frac{x}{n}\right) z_x(nt) -  \left(\int_0^1\ f(y)\fz(y,t)dy\right), 
$$
we derive
$$
	\mathsf P_\rho(|O_n|> \delta) \; = \;  \mathsf P_\rho(O_n^2> \delta^2) \;\le\; \frac{\langle O_n^2 \rangle_\rho}{\delta^2}
$$
that converges to 0 as $n\to\infty$.

In the quantum set-up, one first needs to properly define the probability $\mathsf P_\rho$ on the left-hand side. 
Let us define the operators
$$
	O_n \; = \; \frac{1}{n} \sum_{x=1}^n f\left(\frac{x}{n}\right) z_x(nt) -  \left(\int_0^1\ f(y)\fz(y,t)dy\right)\mathrm{Id}
$$ 
for any $n\ge 1$. 
Let also $P_\delta^n$ be the projector on the spectral subspace spanned by all normalized states $f$ such that $\| O_n^2f\|_2 \le \delta^2$.
Following the basic postulates of quantum mechanics, we define 
$$ 
	\mathsf P_\rho (|O_n|> \delta) \; = \; \langle\mathrm{Id} - P_\delta^n \rangle_\rho
$$
and we obtain
$$
	 \langle \mathrm{Id} - P_\delta^n \rangle_\rho
	 \; \le \;
	 \frac{\langle (\mathrm{Id} - P_\delta^n) O_n^2 (\mathrm{Id} - P_\delta^n) \rangle_\rho}{\delta^2} \; \le \; \frac{\langle O_n^2 \rangle_\rho}{\delta^2}
$$
that converges here as well to 0 as $n\to\infty$. 
To our knowledge, it is the first time that such a limit in probability is shown starting from a fully quantum system. 
Reminding that $\rho$ may be a pure state, the result shows how the probabilistic quantum description may lead to a purely classical and deterministic behavior in the macroscopic limit.

\paragraph{Remarks on the proofs.}
The proofs of Theorem \ref{thmhighermomentclassical} and Theorem \ref{mainthmQ} rely first on our previous result, the convergence on average \eqref{eq: limit on average},
and on the methods developed in \cite{BHO,A20}. 
We refer to these papers for a heuristic description of these methods.  
Let us here comment on the principal difficulty that needed to be overcome in order to establish the Theorems \ref{thmhighermomentclassical} and \ref{mainthmQ}. 
The main technical step was to derive a sufficient decay of correlations for local observables at the hyperbolic time scale, cf.\@ \eqref{eq:4} for the classical case, and \eqref{Qeq:4} for the quantum case.
Indeed, let us take the observable $p_x$ as an example. 
For a finite time $t$, $\expval{p_x(t)p_y(t)}_{\rho}$ decays sufficiently fast as a function of $|x-y|$, assuming that the initial state has suitable decaying properties, as we do. 
This decay follows from analyticity and holds independently of localization.
However, this argument breaks as soon as we rescale the time $t$ with the length of the system $n$, since faraway observables may in principle become entangled and correlated. 
But correlation is suppressed by localization in our case. 
More specifically, when $|x-y| \geq n^{\theta}$ for some suitable $0<\theta<1$, 
localization estimates \eqref{localizationestimate} in Lemma \ref{localizationlemma} provide the desired decay for the high modes of $p_x(nt)$ at any time-scale,
 cf.\@ \eqref{highmodedef} and \eqref{Qhighmodedef}. 
We prove that this decay holds and that the contributions from the low modes and from the sites with $|x-y|<n^{\theta}$ is vanishing.

\paragraph{Plan of the paper.}
In the rest of this paper, the above claims will be made mathematically rigorous. 
First, we deal with the classical dynamics in Section \ref{sec:classicalcase}, where we state and prove Theorem~\ref{thmhighermomentclassical}. 
Second, Section \ref{section: quantum} is devoted to the study of the quantum dynamics; Theorems \ref{mainthmQ} and \ref{thmGibbs} are shown there.

\section{Classical case} \label{sec:classicalcase}

\subsection{Model and Result} \label{sec:modelclassical}
	We consider a classical disordered chain of $n$ harmonic oscillators in one dimension. 
	The phase space is given by $\{(p,q) \in \bR^{2n}\}$, where $p=(p_1,\dots,p_n)$, $q=(q_1,\dots,q_n) \in \bR^n$ denote the momentum and position vectors, respectively.
	We label particles by $x \in \bI_n=\{1,\dots,n\}$. 
	Let us define the discrete gradients $\nabla_- :\bR^{n-1}\to\bR^n$ and $\nabla_+:\bR^n \to \bR^{n-1}$ such that, 
	for any $v \in \bR^{n-1}$, $w \in \bR^n$,  $x \in \bI_n$, and $y \in \bI_{n-1}$, we have
	\begin{equation} \label{eq:discretegrad}
	(\nabla_- v)_x= v_{x}-v_{x-1}, \quad \quad (\nabla_+ w)_y= w_{y+1}-w_{y},
	\end{equation}
  	with the convention $v_n=0$. 
	Let $\Delta$ denote the discrete Laplacian with free boundary conditions: For any $v \in \bR^n$ and $x \in \bI_n$, 
  	\begin{equation} \label{eq:discreteLap}
  	\Delta = \nabla_-\nabla_+, \quad \quad (\Delta v)_x=v_{x+1}-2v_x+v_{x-1}, 
	\end{equation}  	 
	with the conventions $v_{n+1}=v_n$ and $v_0=v_1$.
	
	The Hamiltonian $H_n$ is defined on $\bR^{2n}$ as:
	\begin{equation} \label{Hamclassical}
	H_n(p,q)= \frac12 \sum_{x=1}^n \left( \frac{p_x^2}{m_x}+ (q_{x+1}-q_x)^2 \right),
	\end{equation}	 
	where $\{m_x\}_{x=1}^{\infty}$ are i.i.d random variables defined on a 
	probability space $(\Omega, \mathcal{F},\bP)$. We denote the expectation w.r.t $\bP$ with $\bE$. We let also
	$$
		\bar m  \; = \; \bE(m_x).
	$$
	Furthermore, we assume the law of the random variables to be smooth and compactly supported in $[m_-,m_+]$ with $0<m_-<m_+<\infty$. 
	These assumptions are the same as in \cite{FA}\cite{BHO} and allow to exploit results on Anderson localization. 
	Notice that we assume free boundary conditions $q_0=q_1$ and $q_n=q_{n+1}$ in 
	\eqref{Hamclassical}, for the sake of concreteness. 
	
	Let us define the elongation variable $r$ as follows: for any $x \in \bI_{n-1}$,
	\begin{equation} \label{eq:elogation}
	r_x = (\nabla_+ q)_x.
	\end{equation}	  
	Free boundary conditions in terms of elongation variable read $r_0=r_n=0$. From 
	now on, we describe our model in terms of these variables.
	The equations of motion (Hamiltonian 
	dynamics) in this coordinate is as follows:
	\begin{equation} \label{eqofmotion}
	\dot{r}_x= (\nabla_+ M^{-1}p)_x, \quad x \in \bI_{n-1},
	\qquad
	\quad \dot{p}_x= (\nabla_-r)_x, \quad x \in \bI_n, 
	\end{equation}
	where $M= \diag(m_1,\dots,m_n) $ is the diagonal matrix of masses, in the 
	first equation $r_0=r_n=0$ is considered, and $\dot{a}$ denotes the time derivative, i.e.\@ $\dot{a}={da}/{dt}$.
	
	Despite the existence of a full set of conserved quantities, we focus on the evolution of the locally conserved quantities $H_n$, $P_n$ and $R_n$ defined in \eqref{conserved fields} 
	(let us notice that while $H_n$ and $P_n$ are also globally conserved, the conservation of $R_n$ is broken at the boundaries).
	As initial state of the chain, we take a locally Gibbs state parametrized by the local values of these three quantities. 
	This state is thus parameterized by a temperature profile $\beta \in C^0([0,1])$ with $0< \beta_- \leq \beta(y) \leq \beta_+ < \infty$, 
	a momentum profile $\bar{p} \in C^1([0,1])$ and an elongation profile $\bar{r} \in C^1([0,1])$ with $\bar{r}(0)=\bar{r}(1)=0$.
	Given $\beta, \bar{p},\bar{r}$, as above, the locally Gibbs state is a probability distribution 
	defined on $\bR^{2n-1}$, with the following probability density 
	$\rho_{\beta,\bar{p},\bar{r}}^n$:
	\begin{equation} \label{localgibbsstate}
	 \rho_{\beta,\bar{p}, \bar{r}}^n(r,p) = \frac{1}{Z_n} \exp\left(-\frac12 
	 \sum_{x=1}^n \left[\frac{\beta_x}{m_x} \left(p_x- \bar{p}_x
	 \frac{m_x}{\bar{m}}\right)^2 + \beta_x \left(r_x -
	 \bar{r}_x\right)^2 \right] \right),
	\end{equation}	   
	where we used the shorthand notation $\beta_x:= \beta(\frac{x}{n})$, 
	$\bar{p}_x:=\bar{p}(\frac{x}{n})$, $\bar{r}_x:=\bar{r}(\frac{x}{n})$ and where $Z_n$ is a normalizing constant. 
	We will drop the subscripts and superscripts from $\rho$ whenever there is no risk of confusion. 
	We denote the expectation w.r.t $\rho$ by $\expval{.}_{\rho}$:
	For any suitable observable $O(r,p): \bR^{2n-1} \to \bR$, we define 
	\begin{equation} \label{eq:expectation}
	 \langle O\rangle_{\rho} := \int_{\bR^{2n-1}} O(r,p) 
	 \rho_{\beta,\bar{p},\bar{r}}^n(r,p)
	 dr dp.
	\end{equation}
	
	At the macroscopic level, fix the macroscopic time $T>0$ and let the initial profile of elongation and momentum evolve according to the following system of conservation laws: 
	\begin{equation} \label{eq:macro}	
	\partial_t \fr(y,t)= \frac{1}{\bar{m}} \partial_y \fp(y,t), \quad 
	\partial_t \fp(y,t)=\partial_y \fr(y,t),   \quad 
	\partial_t \fe(y,t)=\frac{1}{\bar{m}}\partial_y\left(\fr(y,t)\fp(y,t) \right),
	\end{equation}
	where $\fr,\fp \in C^1([0,1]\times [0,T])$, with the following initial and boundary conditions: 
	\begin{equation} \label{eq:macro3}
	\begin{split}
		&\fr(y,0)=\bar{r}(y), \quad \fp(y,0)=\bar{p}(y), \quad \fe(y,0)=
		\frac{\bar{p}^2(y)}{2\bar{m}}+\frac{\bar{r}^2(y)}{2}
		+ \frac{1}{\beta(y)}, \\
		&\fr(0,t)=\fr(1,t)=0 \quad \forall t \in [0,T]. 	
	\end{split}
	\end{equation}
	The hydrodynamic limit on average \eqref{eq: limit on average} was proven in \cite{BHO} and we now improve this result into 
	\begin{theorem} \label{thmhighermomentclassical}
		Let $f \in C^0([0,1])$ be an arbitrary test function. Fix $T>0$. Initially 
		let the chain to be in the locally Gibbs state $\rho^n_{\beta,\bar{p},\bar{r}}$
		  \eqref{localgibbsstate}, corresponding to the profiles $\beta$, $\bar{p}$, 
		$\bar{r}$ satisfying the assumptions stated in the definition of 
		\eqref{localgibbsstate}, and recall $\expval{\cdot}_{\rho^n}$ 
		to be the average w.r.t
		this state \eqref{eq:expectation}. Denote the solution to the microscopic evolution 
		equation \eqref{eqofmotion} by $(r(t),p(t),e(t))$. Moreover, let 
		$\fr(y,t), \fp(y,t), \fe(y,t)$ denote the solution to the macroscopic 
		evolution equation \eqref{eq:macro} with
		initial and boundary conditions \eqref{eq:macro3}. 
		Then for any $t \in [0,T] $ we have: 
	\begin{equation} \label{highermomentclassicalr}
	\left\langle \left(\frac1n \sum_{x=1}^n f(\frac{x}{n})r_x(nt) -\int_0^1 
	f(y) \fr(y,t) dy \right)^2 \right\rangle_{\rho^n} \to 0,
	\end{equation}
	\begin{equation}\label{highermomentclassicalp}
	\left\langle \left(\frac1n \sum_{x=1}^n f(\frac{x}{n})p_x(nt) -\int_0^1 
	f(y) \fp(y,t) dy \right)^2 \right\rangle_{\rho^n} \to 0,
	\end{equation}
	\begin{equation}\label{highermomentclassicale}
	\left\langle \left(\frac1n \sum_{x=1}^n f(\frac{x}{n})e_x(nt) -\int_0^1 
	f(y) \fe(y,t) dy \right)^2 \right\rangle_{\rho^n} \to 0,
	\end{equation}
	as $n \to \infty$ almost surely w.r.t the distribution of the masses.
	\end{theorem}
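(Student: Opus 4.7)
The strategy is to split
\[
\left\langle (S_n^z - I^z)^2\right\rangle_\rho \;=\; \Var_\rho(S_n^z) \;+\; \bigl(\langle S_n^z\rangle_\rho - I^z\bigr)^2,
\]
where $S_n^z = \tfrac1n\sum_x f(x/n)\,z_x(nt)$ and $I^z = \int_0^1 f(y)\fz(y,t)dy$, for $z\in\{r,p,e\}$. The squared-mean term vanishes by the convergence on average \eqref{eq: limit on average} established in \cite{BHO}, so the whole task reduces to controlling the variance.

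The structural fact I would lean on is that the dynamics \eqref{eqofmotion} preserves the energy quadratic form $\tfrac12(r^T r + p^T M^{-1} p)$; equivalently, in the rescaled variables $\tilde z := (r, M^{-1/2}p)$ the generator is antisymmetric and the propagator $\tilde U(t)$ is orthogonal. In these coordinates the local Gibbs state \eqref{localgibbsstate} becomes a product of Gaussians whose variances lie in $[\beta_+^{-1},\beta_-^{-1}]$ and whose means are bounded in terms of $\|\bar r\|_\infty,\|\bar p\|_\infty,m_\pm$. Consequently the covariance matrix $\tilde C(t) := \tilde U(t)\tilde C(0)\tilde U(t)^T$ satisfies, uniformly in $t$,
\[
\|\tilde C(t)\|_{\mathrm{op}} \;\le\; \beta_-^{-1},\qquad \|\tilde C(t)\|_F^2 \;=\; \|\tilde C(0)\|_F^2 \;\le\; \frac{2n}{\beta_-^2},
\]
and transferring back to the $(r,p)$ variables through $(r,p)=\diag(I,M^{1/2})\tilde z$ preserves these bounds up to multiplicative constants depending only on $m_\pm$.

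For the linear observables $z\in\{r,p\}$, the random variable $S_n^z$ is a Gaussian linear functional of $\tilde z(0)$ whose defining coefficient vector is the image under the isometry $\tilde U(nt)^T$ of the rescaled test vector $(f(x/n)/n)_x$, of $\ell^2$-norm $O(n^{-1/2}\|f\|_\infty)$. Combined with the operator-norm bound on $\tilde C(0)$ this gives $\Var_\rho(S_n^z) = O(n^{-1})$. For the quadratic observable $z=e$ I would write $e_x = \tfrac12(\tilde z_{r,x}^2 + \tilde z_{p,x}^2)$ and apply Wick's identity: for any jointly Gaussian $(X,Y)$ with means $\mu_X,\mu_Y$ and covariance $K_{XY}$,
\[
\Cov(X^2,Y^2) \;=\; 2 K_{XY}^2 + 4\mu_X\mu_Y K_{XY}.
\]
Summing against $f(x/n)f(y/n)/n^2$ produces two contributions. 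The first is bounded by $n^{-2}\|f\|_\infty^2\|\tilde C(nt)\|_F^2 = O(n^{-1})$, by the Frobenius estimate above. The second is a bilinear form against $\tilde C(nt)$ in coefficient vectors built from $f\cdot\mu$, of $\ell^2$-norm $O(\sqrt n)$, and is bounded by $n^{-2}\|\tilde C(nt)\|_{\mathrm{op}}\|f\mu\|_2^2 = O(n^{-1})$. Hence $\Var_\rho(S_n^e) = O(n^{-1})$ as well.

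All the variance estimates above are deterministic in the disorder once $m_x\in[m_-,m_+]$ and $\beta_x\in[\beta_-,\beta_+]$, so the almost-sure convergence in the masses is inherited purely from \eqref{eq: limit on average}; the uniformity in $t\in[0,T]$ comes for free because the Frobenius bound is time-independent. The step I would guard most carefully is the quadratic case $z=e$: there it is essential to control the \emph{Hilbert--Schmidt} norm of the two-point function, not just pointwise entries, so that the Wick sum $\sum_{x,y}K_{xy}^2$ stays $O(n)$. This Frobenius identity is the classical-case incarnation of the correlation-decay estimate announced in the introduction; the paper's chosen route through a low-mode/high-mode split with Anderson-localization bounds for the high modes yields a sharper pointwise decay in $|x-y|$, which is genuinely needed in the quantum companion Theorem \ref{mainthmQ} where the initial covariance need no longer be diagonal and the orthogonality argument alone does not close.
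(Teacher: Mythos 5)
Your proposal is correct, but it takes a genuinely different and more elementary route than the paper. The paper's proof decomposes each fluctuation into low modes (controlled by mode counting, $O(n^{-\gamma})$ contributions), high modes at near pairs $|x-y|\lesssim n^\theta$ (Cauchy--Schwarz plus pointwise thermal bounds, giving $O(n^{\theta-1})$), and high modes at far pairs $|x-y|\gtrsim n^\theta$, which is where the Anderson localization estimate \eqref{localizationestimate} enters; a Borel--Cantelli argument then closes the almost-sure convergence. Your argument bypasses all of this: the antisymmetry of the generator $G=\bigl(\begin{smallmatrix}0 & \nabla_+ M^{-1/2}\\ M^{-1/2}\nabla_- & 0\end{smallmatrix}\bigr)$ in the scaled coordinates $\tilde z=(r,M^{-1/2}p)$ (i.e.\ conservation of energy) means the propagator is orthogonal, so $\|\tilde C(t)\|_{\mathrm{op}}$, $\|\tilde C(t)\|_F$, and $\|\mu(t)\|_2$ are all exactly conserved; since the local Gibbs covariance $\tilde C(0)=\diag(\beta_x^{-1})$ has operator norm $\le\beta_-^{-1}$ and Frobenius norm squared $O(n)$, the linear cases reduce to a quadratic form bound against $\|\tilde C(nt)\|_{\mathrm{op}}$ on a vector of $\ell^2$-norm $O(n^{-1/2})$, and the energy case closes by Wick's identity split into a Frobenius term and an operator-norm term against the evolved means. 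This is strictly sharper than the paper's rate (genuine $O(1/n)$ versus $O(n^{-\gamma}+n^{\theta-1})$), requires no localization and no Borel--Cantelli, and is deterministic in the disorder within $[m_-,m_+]$. What you lose, and what the paper gains by going through localization, is exactly what you flag at the end: the localization route yields a pointwise decay of $\expval{\tilde p_x(nt)\tilde p_y(nt)}$ in $|x-y|$, not merely a Hilbert--Schmidt control, and that pointwise estimate is what survives when the initial state is not Gaussian and Wick's theorem is unavailable (the quantum Assumption~\ref{AA3} only provides polynomial clustering, and the four-point function in \eqref{Qeq:13} must be handled term-by-term via \eqref{decayassum}). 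So your simpler argument is a valid and cleaner proof of Theorem~\ref{thmhighermomentclassical} as stated, while the paper's heavier machinery is there to serve the quantum Theorem~\ref{mainthmQ}.
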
		
	Before proceeding, let us briefly mention that the 
	usual mathematical tool to control moments similar to the above ones is the relative entropy method (cf. \cite{Yau91}\cite{stefano93}). 
	However, it is not clear that this method could be applied here, and we are able to handle them with a novel approach, i.e.\@ by using localization estimates.
We devote the rest of this section to the proof of Theorem \ref{thmhighermomentclassical}.

\subsection{Solution to the equation of motion and localization}
		The proof of Theorem \ref{thmhighermomentclassical} rests on localization of the high eigenmodes of the chain. 
		In this section, we solve the equation of motion \eqref{eqofmotion} explicitly and state the localization estimates. 
		Consequently, we can obtain proper decay estimates. 
	
\subsubsection{Solution to the equation of motion}	 
	We denote the inner product in $\bR^n$ by $\langle,\rangle_n$, and we drop the subscript 
	whenever it is convenient.	
	Recall the definition of matrices $\nabla_-$, $\nabla_+$, $\Delta$, and $M$ 
	\eqref{eq:discretegrad}, \eqref{eq:discreteLap}, \eqref{eqofmotion}. 
	Taking the time derivative of \eqref{eqofmotion} we get: 
	\begin{equation} \label{eqofmotion2}
	\ddot{r}_x=(\nabla_+M^{-1}\nabla_-r)_x, \quad 1 \leq x \leq n-1; 
	\quad \ddot{p}_x=(\Delta M^{-1}p)_x, \quad 1 \leq x \leq n,
	\end{equation}
	where one should recall $r_0=r_n$ in the first equation, and free boundary conditions 
	($p_0/m_0=p_1/m_1$, $p_n/m_n=p_{n+1}/m_{n+1}$) in the second equation.
	
	Notice that 
	$\Delta^{\dagger}=\Delta$ and $\nabla^{\dagger}_+=-\nabla_-$. Define 
	\begin{equation} \label{mainmatrices}
		A_p:=M^{-\frac12}(-\Delta)M^{-\frac12} \in \bR^{n\times n}, \quad
		A_r:=- \nabla_+ M^{-1}\nabla_- \in \bR^{(n-1)\times(n-1)}. 
	\end{equation}	   
	$A_p$ is symmetric, positive semidefinite, 
	and it has a non-degenerate spectrum almost 
	surely (cf. proposition II.1 of \cite{Kunz}). We denote this spectrum by 
	$0=\omega_0^2< \omega_1^2<\dots<\omega_{n-1}^2$, and designate their corresponding 
	eigenvectors by $\{ \varphi^k \}_{k=0}^{n-1}$, such that they form an orthonormal basis 
	for $\bR^n$. In particular, $\expval{\varphi^k,\varphi^{j}}=\delta_{k,j}$.
	
	On the other hand, one can check that $A_r$ is a positive symmetric matrix. 
	Furthermore, if for 
	$k \in \bI_{n-1}$, we denote 
	$\phi^k:= \frac{1}{\omega_k} \nabla_+M^{-\frac12} \varphi^k$, then by using the fact 
	that $\Delta=\nabla_-\nabla_+$,  it is straightforward to 
	observe that $\{ \phi^k \}_{k=1}^{n-1}$ is a complete set of eigenvectors for $A_r$
	 with eigenvalues $\omega_1^2<\omega_2^2<\dots<\omega_{n-1}^2$ similar to the 
	 eigenvalues of $A_p$, and they form 
	 an orthonormal basis for $\bR^{n-1}$ i.e., 
	 $$A_r \phi^k=\omega_k^2 \phi^k, \quad 1 \leq k \leq n-1; \quad 
	 \langle \phi^k, \phi^j \rangle_{n-1} =\delta_{k,j}, \quad 1 \leq k,j \leq n. $$
	  We obtain
	the normal modes of the chain as follows: define 
	 $\hat{p}_k$ for $k \in \bI_{n-1}^o=\{0,\dots,n-1\}$, and $\hat{r}_k$ for $k \in \bI_{n-1}$   as 
	 \begin{equation} \label{rphat}
	 \hat{p}_k:= \langle \varphi^k, M^{-\frac12} p \rangle_n= 
	 \langle \tilde{\varphi}^k,p \rangle_{n}, \quad  
	 \hat{r}_k:= \langle \phi^k,r \rangle_{n-1}. 
	\end{equation}	    
	Notice that we implicitly defined $\tilde{\varphi}^k := M^{-\frac12}\varphi^k$. We take 
	$\hat{r}_0=0$ by convention. Let us mention that from the definition we have 
	$\varphi^0_x=m_x^{1/2} (\sum_{y=1}^n m_y)^{-1/2}$ for all $x\in\mathbb I_n$.
	
	From equation of motion \eqref{eqofmotion2}, and  definition of $\hat{p}_k$,
	 $\hat{r}_k$ \eqref{rphat}, thanks to the fact that  $\phi^k$, $\varphi^k$,
	  are eigenvetors of $A_r$, $A_p$ with eigenvalue $\omega_k^2$ respectively,  
	 we obtain: 
	 \begin{equation} \label{hatevolve}
	 \ddot{\hat{p}}_k=-\omega_k^2 \hat{p}_k, \qquad \ddot{\hat{r}}_k= -\omega_k^2 \hat{r}_k.
	\end{equation}	 	
	Solving the latter yields:  for $k \in \bI_{n-1}^o$ (recall the convention $\hat{r}_0=0$),
	\begin{equation*}
	\begin{split}	
	\hat{p}_k(t)= \hat{p}_k(0) \cos(\omega_kt)-\hat{r}_k(0)\sin(\omega_kt)
	=\langle M^{-\frac12} \varphi^k, p(0)\rangle_n \cos(\omega_kt)-
	\langle \phi^k,r(0) \rangle_{n-1} \sin(\omega_kt), \\
	\hat{r}_k(t)= \hat{r}_k(0) \cos(\omega_kt)+\hat{p}_k(0)\sin(\omega_kt)
	=\langle \phi^k,r(0) \rangle_{n-1} \cos(\omega_kt) + \langle M^{-\frac12} \varphi^k,
	p(0) \rangle_n \sin(\omega_kt).  \\
	\end{split}	
	\end{equation*}
	Therefore, thanks to the above expressions, by using the inverse of \eqref{rphat},
	time evolution of $r,p$ has the following explicit expression: 
	\begin{equation} \label{soleqmotion}
	\begin{split}		
	&p(t)= \sum_{k=0}^{n-1} M^{\frac12} \varphi^k\hat{p}_k(t) = \sum_{k=0}^{n-1}
	\left(\cos(\omega_kt) \hat{p}_k(0) - \sin(\omega_kt) \hat{r}_k(0) \right) M^{\frac12}
	\varphi^k, \\
	& r(t)= \sum_{k=1}^{n-1}  \phi^k\hat{r}_k(t) = \sum_{k=1}^{n-1}
	\left(\cos(\omega_kt) \hat{r}_k(0) + \sin(\omega_kt) \hat{p}_k(0) \right) 
	\phi^k.
	\end{split}
	\end{equation}

\subsubsection{Localization}	
	The proof of the limit on average \eqref{eq: limit on average} relies on localization properties of 
	$\tilde{\varphi}^k=M^{-\frac12}\varphi^k$ for $k \gg \sqrt{n}$. 
	We use similar properties in order to obtain proper decay estimates and deduce \eqref{highermomentclassicalr}-
	\eqref{highermomentclassicale}. Therefore, 
	we express the desired localization estimates directly from \cite{BHO} and \cite{A20} (cf.\@ \cite{A98}\cite{Theo}  for general theory, 
	and \cite{FA}\cite{Theo}  for more precise 
	estimates). 
	\begin{lemma} \label{localizationlemma}
		Recall the definition of the vectors $\varphi^k$,  as 
		ordered eigenvetors of $A_p$ \eqref{mainmatrices} and 
		$\tilde{\varphi}^k= M^{-\frac12}\varphi^k$. 
		Let $0<\alpha<\frac12$, and define $I(\alpha):=]n^{(1-\alpha)},n] \cap \bZ$. There exists $C,c>0$ 
		independent of $n$ such that 
		\begin{equation} \label{localizationestimate}
		\bE \left(\sum_{k \in I(\alpha)} |\tilde{\varphi}^k_x \tilde{\varphi}^k_y| \right)
		\leq C \exp(-\frac{c|x-y|}{\xi(\alpha)}),  	\quad \text{with} \quad 
		\xi(\alpha)=n^{2 \alpha},	
		\end{equation}
	for $x,y \in \bI_n$, 
	where we remind that $\bE$ is the expectation with respect to the distribution of the masses. 
	Notice that thanks to the fact that masses are compactly supported, the same estimate holds for $\varphi^k$ with possibly different constants $c,C>0$ independent of $n$. 	
	\end{lemma}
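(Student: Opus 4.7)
The plan is to reduce the sum $\sum_{k \in I(\alpha)} |\tilde{\varphi}^k_x \tilde{\varphi}^k_y|$ to an exponential decay estimate for a spectral quantity of $A_p$, and then to prove that decay via one-dimensional Anderson localization, with the localization length controlled by the Lyapunov exponent of the transfer matrices at the relevant energy. Since the masses are compactly supported in $[m_-,m_+]\subset(0,\infty)$, the factor $m_x^{-1/2}m_y^{-1/2}$ is bounded above and below, and one may replace $\tilde{\varphi}^k$ by $\varphi^k$ throughout at the cost of harmless constants.

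First, I would relate the eigenfunction sum to moments of the Green's function $G(z)=(A_p-z)^{-1}$. Since $\{\varphi^k\}$ diagonalizes $A_p$ with simple spectrum almost surely, the eigenfunction correlator bound (Aizenman--Warzel) gives, for a suitable $s\in(0,1)$,
\begin{equation*}
\bE \sum_{k \in I(\alpha)} |\varphi^k_x \varphi^k_y| \;\le\; C \int_{E_*}^{\|A_p\|} \bE \bigl| G_{x,y}(E+i0) \bigr|^s \, dE,
\end{equation*}
where $E_* = \omega_{\lfloor n^{1-\alpha}\rfloor}^2$. A Weyl-type bound on the integrated density of states of $A_p$ near the ground mode, combined with a Wegner estimate to exclude abnormal eigenvalue accumulation, yields $E_* \gtrsim n^{-2\alpha}$ almost surely for all large $n$.

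Second, I would prove the fractional-moment bound $\bE|G_{x,y}(E+i0)|^s \le C\, e^{-c|x-y|\,\gamma(E)}$ for $E$ in the above range, with $\gamma(E)$ the Lyapunov exponent of the transfer matrix associated with the eigenvalue recursion $(A_p\varphi)_x = E\varphi_x$. Since the masses are i.i.d.\@ with a smooth distribution supported in $[m_-,m_+]$, the nondegeneracy hypotheses of Furstenberg's theorem hold and the standard regularity input of the fractional-moment method is available, so $\gamma(E)>0$ for all $E>0$. The crucial quantitative input is the low-energy asymptotics $\gamma(E) \asymp E$ as $E\to 0^+$, specific to the mass-disordered harmonic chain (Dyson-type singularity), obtained by a perturbative analysis of the transfer-matrix product around the parabolic unperturbed matrix at $E=0$. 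For $E \gtrsim n^{-2\alpha}$ this yields $\gamma(E) \gtrsim n^{-2\alpha}$, and hence $1/\gamma(E) \lesssim n^{2\alpha} = \xi(\alpha)$.

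Combining the two bounds and integrating in $E$ produces the lemma, the extra factor $\|A_p\|$ from the integration being harmless since it is $\bP$-almost surely bounded by a deterministic constant. The main obstacle is to synchronize the band-edge behavior on both sides: the lower bound $E_* \gtrsim n^{-2\alpha}$ on mode energies and the Dyson singularity $\gamma(E) \asymp E$ of the Lyapunov exponent must hold uniformly in $n$ and with matching powers of $n$, and the fractional-moment constants must not degenerate as $E\to 0$. Both facts concern the zero-mode region where localization ultimately breaks down, and they constitute the delicate part of the argument; inside the bulk of the spectrum the localization length is $O(1)$, so those contributions to the sum are negligible and the entire analysis concentrates on the narrow window $k \sim n^{1-\alpha}$.
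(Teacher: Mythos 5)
The paper does not actually prove Lemma~\ref{localizationlemma} here: it is imported verbatim from \cite{BHO} and \cite{A20}, and behind those from \cite{Theo} and \cite{FA}, where the argument runs through transfer matrices and Furstenberg theory (positivity and low-energy asymptotics of the Lyapunov exponent, together with a Carmona--Klein--Martinelli-type conversion into eigenfunction decay). Your proposal therefore takes a genuinely different route via the Aizenman--Warzel eigenfunction correlator and the fractional-moment method. Your high-level reduction is sound, and you correctly identify the real content of the lemma: for $k>n^{1-\alpha}$ the energy is bounded below by $\sim n^{-2\alpha}$, and the Dyson singularity $\gamma(E)\asymp E$ at the band edge is what converts that into a localization length $\lesssim n^{2\alpha}$.

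There is, however, a genuine gap in the middle step. The eigenfunction-correlator/fractional-moment machinery is set up for Schr\"odinger-type operators with \emph{additive diagonal} disorder, where the diagonal Green's function is a M\"obius function of the potential at a single site, and the a~priori bound $\sup_{x,E}\bE|G_{x,x}(E+i0)|^s<\infty$ is a one-line consequence of the regularity of the single-site distribution. For $A_p=M^{-1/2}(-\Delta)M^{-1/2}$ the randomness enters multiplicatively: each $m_x$ appears both in the diagonal entry $2/m_x$ and in the two adjacent off-diagonal entries $-1/\sqrt{m_xm_{x\pm1}}$, and the functional dependence of $G(z)$ on $m_x$ is no longer fractional-linear. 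The ``standard regularity input of the fractional-moment method'' that you invoke is exactly the point that needs a new argument for this model; Furstenberg nondegeneracy gives $\gamma(E)>0$ but does not substitute for the a~priori bound. Until this is addressed, the chain of inequalities you propose does not close. Incidentally, your Weyl-plus-Wegner step is more elaborate than necessary: since $0<m_-\le m_x\le m_+$, the min--max principle for the pencil $(-\Delta)\psi=\lambda M\psi$ gives the deterministic sandwich $\mu_k/m_+\le\omega_k^2\le\mu_k/m_-$ with $\mu_k$ the clean Laplacian eigenvalues, so $\omega_{\lfloor n^{1-\alpha}\rfloor}^2\gtrsim n^{-2\alpha}$ for every realization of the masses; no Wegner estimate or almost-sure large-$n$ statement is needed there.
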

	The above lemma is enough to deal with the terms involving $p$ variables. 
	The terms with $r$ variables involve the vectors $\phi^k$ but, instead of directly establishing a localization estimate for these, we will use Lemma 6.3 in \cite{A20}:  
		\begin{lemma} \label{boundomegalemma}
		Recall the definition of $A_p=M^{-\frac12}(-\Delta)M^{\frac12}$ 
		\eqref{mainmatrices}, and its ordered 
		eigenvalues $0=\omega_0<\omega_1<\dots<\omega_{n-1}$. Fix $\alpha,\gamma>0$, 
		such that $0< 2 \alpha<\gamma<1$. There exists almost surely $n_0 \in \bN$ such
		 that $\forall n> n_0$, and for $k \in I(\alpha)=]n^{(1-\alpha)},n-1]\cap \bZ$ 
		 we have:
		 \begin{equation} \label{boundomega}
		 \omega_k^{-1} \leq cn^{\frac{3\gamma}{2}},
		\end{equation}		 	 
		where $c$ is a constant independent of $n$. 
	 \end{lemma}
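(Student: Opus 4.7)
The plan is to sidestep the almost-sure language and prove the bound deterministically, by comparing the disordered spectrum with that of the free (constant-mass) Laplacian via the Courant--Fischer min-max principle. With this approach $n_0$ will in fact be chosen independently of the realization of the masses.

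First I would rewrite the spectral problem for $A_p=M^{-1/2}(-\Delta)M^{-1/2}$ as the generalized eigenvalue problem $-\Delta\psi=\omega^2 M\psi$, using the substitution $\psi=M^{-1/2}v$, and invoke the Courant--Fischer characterization
\[
\omega_k^2 \; = \; \min_{\substack{V\subset\bR^n\\ \dim V=k+1}}\,\max_{\psi\in V\setminus\{0\}}\,\frac{\langle\psi,-\Delta\psi\rangle_n}{\langle\psi,M\psi\rangle_n}.
\]
The compact support assumption gives the almost-sure uniform bound $m_x\leq m_+$, hence $\langle\psi,M\psi\rangle_n\leq m_+\|\psi\|_n^2$ for every $\psi\in\bR^n$, which yields
\[
\omega_k^2 \; \geq \; \frac{\lambda_k(-\Delta)}{m_+},
\]
where $\lambda_k(-\Delta)$ denotes the $k$-th eigenvalue of the free-boundary discrete Laplacian, explicitly $\lambda_k(-\Delta)=4\sin^2(\pi k/(2n))$ for $k=0,\dots,n-1$.

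Next, since $k\in I(\alpha)$ means $k>n^{1-\alpha}$ with $\alpha<1/2$ (from $0<2\alpha<\gamma<1$), the argument $\pi k/(2n)$ lies in $(0,\pi/2)$, so the elementary inequality $\sin x\geq(2/\pi)x$ applies and gives $\lambda_k(-\Delta)\geq 4k^2/n^2\geq 4 n^{-2\alpha}$. Combining, $\omega_k^{-1}\leq(\sqrt{m_+}/2)\, n^{\alpha}$. The hypothesis $2\alpha<\gamma$ forces $\alpha<3\gamma/2$, so $n^{\alpha}\leq n^{3\gamma/2}$ holds for every $n\geq 1$, which delivers the claimed bound with $c=\sqrt{m_+}/2$.

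In this approach there is no genuine obstacle: the argument is purely deterministic, does not invoke Anderson localization (in contrast with Lemma~\ref{localizationlemma}), and the only role of the random masses is to guarantee the uniform upper bound $m_+$. Any real difficulty would arise only in the absence of a deterministic upper bound on the masses, in which case one would need a Borel--Cantelli-type argument to recover the almost-sure form; the compact support assumption in the paper avoids this entirely, so $n_0$ may be taken as a universal constant depending only on $m_+$, $\alpha$, and $\gamma$.
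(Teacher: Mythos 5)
Your proof is correct, and it is genuinely different from the route the paper takes: the paper does not prove the lemma here but cites Lemma~6.3 of \cite{A20}, and the almost-sure formulation together with the suboptimal exponent $n^{3\gamma/2}$ strongly indicate that the argument there is not a direct comparison. Your Courant--Fischer comparison with the free-boundary constant-mass Laplacian is both more elementary and stronger on every count: it gives the deterministic bound $\omega_k^{-1}\le(\sqrt{m_+}/2)\,n/k$ uniformly in $k\ge1$ (hence $\le(\sqrt{m_+}/2)\,n^{\alpha}$ on $I(\alpha)$, which beats $n^{3\gamma/2}$ since $\alpha<\gamma/2$), it removes the need for the almost-sure qualifier and for any $n_0$, and in fact it already establishes the sharper estimate $1/\omega_k\le cn$ for all $k\in\bI_{n-1}$ that the authors mention after the lemma as a conjecture. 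All the individual steps check out: the generalized-eigenvalue form of min-max (legitimate since $M$ is uniformly positive definite under the $[m_-,m_+]$ support assumption), the monotonicity of the Rayleigh quotient in $M$, the explicit Neumann spectrum $4\sin^2(\pi k/(2n))$ for the boundary conditions $v_0=v_1$, $v_{n+1}=v_n$ used in \eqref{eq:discreteLap}, and the Jordan inequality $\sin x\ge(2/\pi)x$ on $[0,\pi/2]$. Minor remark: the inline statement of the lemma contains a typo ($M^{-1/2}(-\Delta)M^{1/2}$); you correctly worked with $A_p=M^{-1/2}(-\Delta)M^{-1/2}$ as in \eqref{mainmatrices}.
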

	 \noindent
	Let us remark that the estimate \eqref{boundomega} is not optimal; we exepect indeed that $1/\omega_k \leq cn$ for all $k \in \bI_{n-1}$. 
	However, \eqref{boundomega} is sufficient for our purposes.

\subsection{Evolution of $r$ and $p$} \label{sec:rp}
	In this section, we prove the two limits \eqref{highermomentclassicalr} and \eqref{highermomentclassicalp} featuring in Theorem~\ref{thmhighermomentclassical}. 
	In the proof, we take advantage of the limit on average \eqref{eq: limit on average} which have been proved in \cite{BHO}. 
	We also use a couple of bounds introduced in \cite{BHO}.

	\begin{proof}[Proof of Theorem \ref{thmhighermomentclassical}: 
	\eqref{highermomentclassicalr}, 
	\eqref{highermomentclassicalp}] 
	We define the \emph{fluctuation} variables $\tilde{r}_x(nt)$ and $\tilde{p}_x(nt)$
	 for any $x \in \bI_n$ as 
	follows:
	\begin{equation} \label{tildevariable}
	\tilde{r}_x(nt):= r_x(nt) -\expval{r_x(nt)}_{\rho^n}, \quad 
	\tilde{p}_x(nt):=p_x(nt)- \expval{p_x(nt)}_{\rho^n}.
	\end{equation}
	Adding and subtracting 
	$\frac1n\sum_{x=1}^n f(\frac{x}{n})\expval{p_x(nt)}_{\rho^n}$ inside the 
	square in \eqref{highermomentclassicalp}, and using the fact that $\expval{\tilde{p}_x(nt)}_{\rho}=0$
	 we get:
	 \begin{equation}
	\begin{split}	 
	 \left\langle \left(\frac1n \sum_{x=1}^n f(\frac{x}{n})p_x(nt) -\int_0^1 
	f(y) \fp(y,t) dy \right)^2 \right\rangle_{\rho^n}= 
	\left\langle \left(\frac1n \sum_{x=1}^n f(\frac{x}{n})\tilde{p}_x(nt) \right)^2
	\right\rangle_{\rho^n}\\
	 + \left\langle \left(\frac1n \sum_{x=1}^n f(\frac{x}{n})\expval{p_x(nt)}_{\rho^n} 
	-\int_0^1 
	f(y) \fp(y,t) dy \right)^2 \right\rangle_{\rho^n}. 
		\end{split}
		\end{equation}	
		Notice that the last term converges to zero almost surely w.r.t.\@ the distribution of the masses thanks to the limit on average \eqref{eq: limit on average}. 
		One can do the same for $r$ as well. 
		Therefore, in order to prove \eqref{highermomentclassicalr}, and 
		\eqref{highermomentclassicalp}
		 it is sufficient to prove: 
		\begin{equation} \label{eq:proof3}
		\cR_n:=\left\langle \left(\frac1n \sum_{x=1}^n f(\frac{x}{n})\tilde{r}_x(nt) \right)^2
	\right\rangle_{\rho^n} \to 0, \quad 
	\cP_n:=\left\langle \left(\frac1n \sum_{x=1}^n f(\frac{x}{n})\tilde{p}_x(nt) \right)^2
	\right\rangle_{\rho^n} \to 0, 
		\end{equation}
	as $n \to \infty$, almost surely w.r.t the distribution of the masses.
	
	 Before proceeding, notice that the evolution equation \eqref{eqofmotion} is linear.
	 Therefore, $\expval{r_x}_{\rho^n}$, $\expval{p_x}_{\rho^n}$ satisfy the same set of 
	 equations with modified initial data. This is still true for $\tilde{r}_x$ and 
	 $\tilde{p}_x$. Consequently, one can solve the equation of motion featuring 
	 $\tilde{r}_x$ and $\tilde{p}_x$ as in \eqref{soleqmotion}.  More precisely,
	 denote the vectors $\tilde{p}=p-\expval{p}_{\rho^n}$, 
	 $\tilde{r}=r-\expval{r}_{\rho^n}$ and define 
	 \begin{equation} \label{tildenitial}
	 \hat{\tilde{p}}_k= \hat{\tilde{p}}_k(0):=\langle \varphi^k, M^{-\frac12} \tilde{p} 
	 \rangle_n, \quad  
	 \hat{\tilde{r}}_k:= \langle \phi^k, \tilde{r} \rangle_{n-1}, 
	\end{equation}	 
	then $\tilde{r}_x(t)$, $\tilde{p}_x(t)$ is given by:
	 \begin{equation}\label{tildesol}
	\begin{split}	 
	& \tilde{r}_x(t)= \sum_{k=1}^{n-1}  \phi^k_x \hat{\tilde{r}}_k(t) = \sum_{k=1}^{n-1}
	\left(\cos(\omega_kt) \hat{\tilde{r}}_k(0) + \sin(\omega_kt) \hat{\tilde{p}}_k(0) \right) 
	\phi^k_x, \\
	& \tilde{p}_x(t)=\sum_{k=0}^{n-1} \sqrt{m_x} \varphi^k_x \hat{\tilde{p}}_k(t)=
	\sum_{k=0}^{n-1} \left(\cos(\omega_kt) \hat{\tilde{p}}_k(0) - \sin(\omega_kt) 
	\hat{\tilde{r}}
	_k(0) \right)\sqrt{m_x} \varphi^k_x.
		\end{split}	 
	 \end{equation}   
	 In order to prove \eqref{eq:proof3}, we divide each sum into three parts, and prove that
	each part converges to zero almost surely w.r.t the distribution of the masses. Based
	 on this, we divide the rest of the proof into three steps.
	
	 \paragraph{Step 1: Low modes.}
	  Lemma \ref{localizationlemma} only provides  
	 localization of the high modes. Hence we treat the contribution of the low modes separately.
	  For proper $\gamma$ such that $0<\gamma<\frac12$, define $\tilde{p}^o(nt)$, and 
	  $\tilde{r}^o(nt)$ as the ``low mode portion'' of $p$, and $r$: for $x \in \bI_n$ let 
	  \begin{equation} \label{lowmodedef}
			\tilde{p}^o_x(nt):= \sum_{k \in \bZ \cap [0,n^{(1-\gamma)}]} \hat{\tilde{p}}_k 
			(nt) \sqrt{m_x} \varphi_x^k, \quad 	  
			\tilde{r}^o_x(nt):= \sum_{k \in \bZ \cap [1,n^{(1-\gamma)}]} \hat{\tilde{r}}_k 
			(nt)  \phi_x^k. 
		\end{equation}	    
		In this step, we prove for any $t \in [0,T]$: 
		\begin{equation} \label{lowmodelem}
		\mathcal{L}_n := \left\langle \left( \frac1n \sum_{x=1}^n f(\frac{x}{n})
		\tilde{r}^o_x(nt) \right)^2 \right\rangle_{\rho^n} \to 0,  \quad
		\mathcal{L}_n' := \left\langle \left( \frac1n \sum_{x=1}^n f(\frac{x}{n})
		\tilde{p}^o_x(nt) \right)^2 \right\rangle_{\rho^n} \to 0.
 		\end{equation}
		for any realization of the masses.
		
		First, notice that $\rho^n$ \eqref{localgibbsstate} is a Gaussian probability 
		density function, and $\{r_x\}_{x=1}^{n-1}$ and $\{p_x\}_{x=1}^n$ 
		are independently distributed. Therefore, 
		we have for proper $x,y \in \bI_n$:  
		\begin{equation} \label{cov1}
		\expval{\tilde{r}_x\tilde{r}_y}_{\rho^n}=\frac{\delta_{x,y}}{\beta(\frac{x}{n})},
			\quad \expval{\tilde{p}_x\tilde{p}_y}_{\rho^n}=\delta_{x,y}
			 \frac{m_x}{\beta(\frac{x}{n})}, \quad \expval{\tilde{r}_x\tilde{p}_y}=0. 
		\end{equation}		
		Recall the definition of $\hat{\tilde{r}}_k$, $\hat{\tilde{p}}_k$ \eqref{tildenitial}. 
		For any $k$, there exists $c>0$ independent of $n$, and 
		independent of the realization of the masses such that
		\begin{equation} \label{initialbound}
		\begin{split}		
		\expval{\hat{\tilde{p}}_k^2}_{\rho^n} = \sum_{x,y=1}^n \varphi^k_x \varphi^k_y
		\frac{\expval{\tilde{p}_x\tilde{p}_y}_{\rho^n}}{\sqrt{m_x m_y}}=
		\sum_{x=1}^n \beta^{-1}(\frac{x}{n}) (\varphi^k_x)^2 \leq c, \\
		\expval{\hat{\tilde{r}}_k^2}_{\rho^n} = \sum_{x=1}^{n-1} \beta^{-1}(\frac{x}{n})
		(\phi^k_x)^2 \leq c, \quad
		\expval{\hat{\tilde{p}}_k\hat{\tilde{r}}_{k'}}_{\rho^n}=0, 
		\end{split}		
		\end{equation}		 
	where first we exploit \eqref{cov1}, then we take advantage of the fact that 
	$0<\beta_-<\beta(y)$ (in fact, we can take $c=\beta_-^{-1}$), and finally we use the fact 
	that $\phi^k$, and $\varphi^k$ are members of two orthonormal basis. Recall 
	$\hat{\tilde{p}}_k(nt)$, $\hat{\tilde{r}}_k(nt)$ from \eqref{tildesol}, then from
	 \eqref{initialbound} it is evident that for any $t \in [0,T]$, and any $k \in 
	 \bI_{n-1}^o$ 
	 \begin{equation} \label{initialboundt}
	 \expval{\hat{\tilde{p}}_k^2(nt)}_{\rho^n} \leq c, \quad 
	 \expval{\hat{\tilde{r}}_k^2(nt)}_{\rho^n}\leq c,
	 \end{equation}
		where, as before $c$ is independent of $n$ and the realization of the masses.
			
		Let $f_x:= f(\frac{x}{n})$ and $\tilde{I}(\gamma):= [0,n^{(1-\gamma)}] \cap \bZ$.
		We bound $\mathcal{L'}_n$ defined in \eqref{lowmodelem} as follows: 
	\begin{align} \label{lowmodecomp}
	&0 \leq \mathcal{L'}_n = \frac{1}{n^2} \sum_{x,y=1}^{n-1} f_xf_y 
	\expval{\tilde{p}_x^o(nt) \tilde{p}_y^o(nt)}_{\rho^n} \leq 
	\frac{C}{n^2} \sum_{x,y=1}^{n}  
	\frac{\Big|\expval{\tilde{p}_x^o(nt) \tilde{p}_y^o(nt)}_{\rho^n}\Big|}{\sqrt{m_x}
	\sqrt{m_y}} \nonumber\\
	&\le\frac{C}{n^2}\sum_{x,y=1}^{n} \frac{1}{\sqrt{m_x m_y}}\expval{(\tilde{p}_x^o(nt))^2}_{\rho^n}^{\frac12}
	\expval{(\tilde{p}_y^o(nt))^2}_{\rho^n}^{\frac12} =C
	\left(\frac{1}{n} \sum_{x=1}^n \frac{1}{\sqrt{m_x}}\expval{(\tilde{p}_x^o(nt))^2}_{\rho^n}^{\frac12}  \right)^2 \nonumber\\  
	&\le\frac{C}{n} \sum_{x=1}^n \frac{\expval{(\tilde{p}_x^o(nt))^2}_{\rho^n}}{m_x}
	=\frac{C}{n} \sum_{x=1}^n \left\langle
	 \sum_{k,k' \in \tilde{I}(\gamma)} \hat{\tilde{p}}_k(nt) \hat{\tilde{p}}_{k'}(nt)
	 \varphi^k_x \varphi^{k'}_x \right\rangle_{\rho^n} \\
	 &\le\frac{C}{n}\sum_{k,k'\in \tilde{I}(\gamma)} 
	 \expval{\hat{\tilde{p}}_k(nt) \hat{\tilde{p}}_{k'}(nt)}_{\rho^n} \sum_{x=1}^n 
	\varphi^k_x \varphi^{k'}_x= \frac{C}{n}\sum_{k \in \tilde{I}(\gamma)} 
	\expval{(\hat{\tilde{p}}_k(nt))^2}_{\rho^n} \nonumber\\ 
	&\le cC\frac{|\tilde{I}(\gamma)|}{n}=cC \frac{n^{(1-\gamma)}}{n} \to 0\nonumber
	\end{align}
	as $n \to \infty$, for any realization of the masses; where, in the first line we bounded 
	$|f_xf_y|\sqrt{m_x m_y}$ by 
	$C=||f||_{\infty}^2m_+>0$ which  is independent of $n$ (recall $f$ is 
	continuous on $[0,1]$, and the distribution of the masses is supported in $[m_-,m_+]$),
	 in the second line we take advantage of the Cauchy-Schwartz 
	inequality i.e.\@  $|\expval{AB}_{\rho^n}|^2 \leq \expval{A^2}_{\rho^n}
	\expval{B^2}_{\rho^n}$, in the third line we used another Cauchy-Schwartz inequality 
	($|\langle a,b \rangle_n|^2 \leq |a|^2|b|^2$ for $a,b \in \bR^n$), 
	in the third line, we also
	 used the definition 
	of $\tilde{p}^o_x(nt)$ from \eqref{lowmodedef} and obtained a double sum by squaring 
	this definition, in the fourth line we used the linearity of $\expval{\cdot}_{\rho^n}$
	 as well as the fact that $\varphi^k$ is an orthonormal basis 
	 ($\expval{\varphi^k,\varphi^{k'}}_{n}=\delta_{k,k'}$), finally in the last inequality 
	 we used the second bound in \eqref{initialboundt} ($\expval{\hat{\tilde{p}}_k^2(nt)}_{\rho^n} \leq c$).
	 
	 We can prove $\cL_n \to 0$ deterministically in a similar way with some small modifications:
	  we should use the definition of $\tilde{p}^o$ instead of $\tilde{r}^o$, then use the 
	  fact that $\expval{\phi^k,\phi^{k'}}_{n-1}=\delta_{k,k'}$, and finally take advantage 
	  of the second bound in \eqref{initialbound} ( 
	  $\expval{\hat{\tilde{r}}_k^2(nt)}_{\rho^n} \leq c$).\\
	  
	  \paragraph{Step2: High modes.}
	  Recall $0<\gamma<\frac12$, $I(\gamma)= ]n^{(1-\gamma)},n]\cap \bZ$, and \eqref{lowmodedef}, then define the ``high mode portion''
	  of $p$, and $r$ as follows: 
	  \begin{equation} \label{highmodedef}
			\begin{split}			
			&\tilde{p}_x^{\bullet}(nt):= \tilde{p}_x(nt)-\tilde{p}_x^{o}(nt)=\sum_{k \in I(\gamma)} \hat{\tilde{p}}_k(nt) \sqrt{m_x}\varphi_x^k,\\
			&\tilde{r}_x^{\bullet}(nt):= \tilde{r}_x(nt)-\tilde{r}_x^{o}(nt)=\sum_{k \in I(\gamma)} \hat{\tilde{r}}_k(nt) \phi_x^k.
		\end{split}		
		\end{equation}
 In this step we prove that 
	 \begin{equation} \label{highmodelemrp}
		\mathcal{U}_n' := \left\langle \left( \frac1n \sum_{x=1}^n f(\frac{x}{n})
		\tilde{r}^{\bullet}_x(nt) \right)^2 \right\rangle_{\rho^n} \to 0,  \quad
		\mathcal{U}_n := \left\langle \left( \frac1n \sum_{x=1}^n f(\frac{x}{n})
		\tilde{p}^{\bullet}_x(nt) \right)^2 \right\rangle_{\rho^n} \to 0,
 		\end{equation}
	 	almost surely w.r.t the distribution of the masses.
	 	Take $0<2 \gamma <\theta <1$.
	 	 Expanding the sum and bounding $|f(y)f(y') m_xm_y| \leq 
	 	C=||f||_{\infty}^2m_+$, we have for $n$ sufficiently large:
	 	\begin{equation}\label{highmodedecouplerp}
	 	\begin{split}
	 	&0 \leq  \mathcal{U}_n \leq \frac{C}{n^2} \sum_{x,y=1}^n \left|
	 	\frac{\expval{\hmp_x(nt) \hmp_y(nt)}_{\rho^n}}{\sqrt{m_x m_y}}
	 	 \right|= \\ &C \left( \underbrace{\frac1{n^2} \sum_{|x-y|\leq 2n^{\theta}}  \left|
	 	\frac{\expval{\hmp_x(nt) \hmp_y(nt)}_{\rho^n}}{\sqrt{m_x m_y}} \right|}_{:=\mathcal{U}_n^<} +
	 	\underbrace{\frac1{n^2}\sum_{|x-y|> 2n^{\theta}}  \left|
	 	\frac{\expval{\hmp_x(nt) \hmp_y(nt)}_{\rho^n}}{\sqrt{m_x m_y}}\right|}_{:=\mathcal{U}_n^>}
		\right)	 	
	 	\end{split}
	 	\end{equation}
		
	\paragraph{Step 2.1: High modes I ($|x-y| >2 n^{\theta}$).} 
	We defined $\mathcal{U}_n^{<}$, $\mathcal{U}_n^{>}$ in 
	\eqref{highmodedecouplerp} implicitly. First, we prove 
	$\mathcal{U}_n^{>} \to 0$, almost surely w.r.t the distribution of the masses.
	Before proceeding, recall the definition of 
	$\hat{\tilde{p}}_k(0)\equiv \hat{\tilde{p}}_k$, and $\hat{\tilde{r}}_k(0) \equiv
	 \hat{\tilde{r}}_k$ from \eqref{tildenitial}. 
	 By using \eqref{cov1}, observe that for proper $k,k'$ (we use the shorthand notation $\beta^{-1}\big(\frac{x}{n} \big)=: b_x$):
	\begin{equation} \label{covkk'}
	\begin{split}
		&\expval{\hat{\tilde{p}}_k \hat{\tilde{p}}_{k'}}_{\rho^n}=\sum_{z,z'} 
		\frac{\varphi^k_z \varphi^{k'}_{z'}}{\sqrt{m_z m_{z'}}}	\expval{\tilde{p}_z 
		\tilde{p}_{z'}}_{\rho^n} = \sum_{z=1}^n b_z \varphi^k_z \varphi^{k'}_{z}, \quad
		\expval{\hat{\tilde{r}}_k \hat{\tilde{r}}_{k'}}_{\rho^n}= \sum_{z=1}^{n-1}
		b_z \phi^k_z \phi^{k'}_{z} .
	\end{split}
	\end{equation}	 
	Thanks to \eqref{highmodedef} we have: 
	\begin{equation} \label{MAINdecomp}
	\begin{split}
		&\mathcal{U}_n^>= \frac1{n^2} \sum_{|x-y|>2n^{\theta}}\left|\sum_{k,k' \in 
		I(\gamma)} \expval{\hat{\tilde{p}}_k(nt) \hat{\tilde{p}}_{k'}(nt) }_{\rho^n}
		\varphi^k_x \varphi^{k'}_y  \right|	=	\\
		&\frac1{n^2} \sum_{|x-y|>2n^{\theta}}\left|\sum_{k,k' \in 
		I(\gamma)} \varphi^k_x \varphi^{k'}_y
		\left( \expval{\hat{\tilde{p}}_k\hat{\tilde{p}}_{k'}}_{\rho^n} \cos(\omega_knt)
		\cos(\omega_{k'}nt)+  \expval{\hat{\tilde{r}}_k\hat{\tilde{r}}_{k'}}_{\rho^n} 
		\sin(\omega_knt)
		\sin(\omega_{k'}nt)  \right)\right| \leq \\
		& \frac{1}{n^2} \sum_{|x-y|>2n^{\theta}} 
		\left|\sum_{z=1}^n b_z 
		\left(\sum_{k\in I(\gamma)} \varphi^k_x\varphi^k_z \cos(\omega_k nt) \right)
		\left(\sum_{k \in I(\gamma)} \varphi^{k'}_y \varphi^{k'}_z \cos(\omega_k nt) \right)
		 \right| + \\
		 &	\frac{1}{n^2} \sum_{|x-y|>2n^{\theta}} 
		\left|\sum_{z=1}^{n-1} b_z 
		\left(\sum_{k\in I(\gamma)} \varphi^k_x\phi^k_z \sin(\omega_k nt) \right)
		\left(\sum_{k \in I(\gamma)} \varphi^{k'}_y \phi^{k'}_z \sin(\omega_k nt) \right)
		 \right| \leq  \\
		 &\underbrace{\frac{C'}{n^2} \sum_{|x-y|>2n^{\theta}}\sum_{z=1}^n 
		 \left(\sum_{k\in I(\gamma)} |\varphi^k_x||\varphi^{k}_z|\right)\left(
		 \sum_{k'\in I(\gamma)} |\varphi^{k'}_y||\varphi^{k'}_z|\right)}_
		 {:= \mathcal{U}_n^{>,p}}+ \\ 	
		 &\underbrace{\frac{C'}{n^2} \sum_{|x-y|>2n^{\theta}}\sum_{z=1}^n 
		 \left(\sum_{k\in I(\gamma)} |\varphi^k_x||\phi^{k}_z|\right)\left(
		 \sum_{k'\in I(\gamma)} |\varphi^{k'}_y||\phi^{k'}_z|\right)}_
		 {:= \mathcal{U}_n^{>,r}},
	\end{split}
	\end{equation}
	where in the second line we used the definition of $\hat{\tilde{p}}_k(nt)$ 
	\eqref{tildesol},
	and the 
	fact that $\expval{\hat{\tilde{r}}_k\hat{\tilde{p}}_{k'}}_{\rho^n}=0$
	 \eqref{initialbound}, in the third line we take advantage of \eqref{covkk'}, and in the 
	 fourth line we bounded $|\sin(\cdot)|$, $|\cos(\cdot)| \leq 1$, and $|b_z|=\beta^{-1}
	 (\frac{z}{n}) \leq C'=\beta_-^{-1}$ uniform in $n$. 
	 
	 For any $x \in \bI_n$, denote	$J(x):= \{z \in \bI_n \big| \: |x-z| \leq n^{\theta} \}
	 $. Notice the definition of $\mathcal{U}_n^{<,p}$, $\mathcal{U}_n^{>,p}$ from 
	 the last lines of \eqref{MAINdecomp}. Take a single term from the main sum 
	 featuring $\mathcal{U}_n^{<,p}$ and observe for 
	 $n$ sufficiently large (recall $\bE$ is the expectation w.r.t 
	 the distribution of the masses and recall that here $|x-y|>n^{2\theta}$):  
	 \begin{equation} \label{eq:4}
	\begin{split}	 
	& \bE\left(\sum_{z=1}^n 
		 \left(\sum_{k\in I(\gamma)} |\varphi^k_x||\varphi^{k}_z|\right)\left(
		 \sum_{k'\in I(\gamma)} |\varphi^{k'}_y||\varphi^{k'}_z|\right) \right) =
	  \sum_{z \in J(x)} \bE\left( 
		\left(\sum_{k\in I(\gamma)} |\varphi^k_x||\varphi^{k}_z|\right)\left(
		 \sum_{k'\in I(\gamma)} |\varphi^{k'}_y||\varphi^{k'}_z|\right)\right)+ \\
		 &\sum_{z \notin J(x) } \bE\left(
		 \left(\sum_{k\in I(\gamma)} |\varphi^k_x||\varphi^{k}_z|\right)\left(
		 \sum_{k'\in I(\gamma)} |\varphi^{k'}_y||\varphi^{k'}_z| \right)\right)
		\leq \sum_{z\in J(x)} \bE \left(\sum_{k' \in I(\gamma)} |\varphi^{k'}_y 
		\varphi^{k'}_z| \right) + \\ & \sum_{z \notin J(x)} \bE\left(
		\sum_{k\in I(\gamma)} |\varphi^k_x\varphi^{k}_z|\right) \leq
		C \sum_{z \in J(x)} \exp(-c|y-z|/n^{2 \gamma}) + 
		C\sum_{z \notin J(x)} \exp(-c|x-z|/n^{2 \gamma}) \leq \\
		& 4Cn^{\theta} \exp(-c(n^\theta-n^{2\gamma})) + Cn \exp(-c(n^\theta-n^{2\gamma}))
		\leq 4C \exp(-\frac{c}{4}(n^{\theta})),
	\end{split} 		
	 \end{equation}
	where in the first inequality we bounded: 
	\begin{equation} \label{eq:5}
	\sum_{k \in I(\gamma)} |\varphi^k_x||\varphi^{k}_z| \leq
	\left(\sum_{k \in I(\gamma)}|\varphi^k_x|^2 
	\right)^{\frac12}  \left(\sum_{k \in I(\gamma)} 
	|\varphi^k_z|^2 \right)^{\frac12} 
	 \leq \left(\sum_{k=0}^{n-1}|\varphi^k_x|^2 
	\right)^{\frac12}  \left(\sum_{k=0}^{n-1} 
	|\varphi^k_z|^2 \right)^{\frac12} = 1,
	 \end{equation}
	thanks to a Cauchy-Schwartz inequality and the fact that $\varphi^k$ is  an orthonormal
	basis. We bounded $\sum_{k' \in I(\gamma)}|\varphi^{k'}_y||\varphi^{k'}_z| \leq 1$ 
	similarly. The second inequality in \eqref{eq:4} is deduced from Lemma 
	\ref{localizationlemma} (modification of \eqref{localizationestimate} by a constant). 
	Notice that this is the crucial place where we use localization. The fourth inequality 
	in \eqref{eq:4} is deduced from the definition of $J(x)$ and the fact that 
	$|x-y|>2n^{\theta}$: more precisely, for $z \notin J(x)$, $|x-z|>n^{\theta}$, and for 
	$z \in J(x)$ (i.e. $|z-x|\leq n^{\theta}|$), 
	we have $|y-z|>n^{\theta}$ thanks to the choice of $y$: $|x-y|> 
	2n^{\theta}$. In the last inequality we use the choice of $\theta$: 
	$0<2 \gamma < \theta<1$, and that $n$ is sufficiently large. 
	
	Recalling the definition of $\cU_n^{<,p}$ \eqref{MAINdecomp}, thanks to the latter 
	estimate \eqref{eq:4} there exists $c_1,C>0$ independent of $n$ such that for 
	$n$ sufficiently large we have:
	\begin{equation} \label{eq:6}
	\bE\Big(\big|\cU_n^{>,p}\big|)\Big) \leq C \exp(-c_1 n^{\theta}) \implies
	\sum_{n} \bE(\big|\cU_n^{>,p}\big|) < \infty \implies \cU_n^{>,p} \to 0,
	\end{equation}
	almost surely w.r.t the distribution of the masses, 
	where we used the fact that $\theta>0$, which means that the 
	above series is summable, and 
	finally we conclude thanks to Borel-Cantelli's lemma. 
	
	Recall $\cU_n^{>,r}$  from \eqref{MAINdecomp}. This expression can be treated similar to 
	$\cU_n^{>,p}$ thanks to Lemma \ref{boundomegalemma}: 
	First, recall for $z \in \bI_{n-1}$,
	 $\phi^k_z= \frac{1}{\omega_k}(\frac{\varphi^k_{z+1}}{m_{z+1}}-\frac{\varphi^k_z}
	 {\sqrt{m_z}})$. Let us denote $\tilde{\omega}:=\omega_{k_*}$ with $k_*:= \min\{ 
	 I(\gamma) \}$. For any $n$ define the sequence $a_n:= \tilde{\omega} n^{3\frac{\gamma}
	 2}$. Since $\tilde{\omega} \leq \omega_k$ for any $k \in I(\omega)$ we have
	 $C,c>0$ independent of $n$ such that:
	 \begin{equation} \label{eq:7}
	 	\bE\left(a_n \sum_{k \in I(\gamma)} |\varphi^k_x| |\phi^k_z|\right)
	 		 	\leq n^{\frac{3\gamma}{2}} 2C \exp(-\frac{c}{2}|z-x|/n^{2\gamma}),
	 \end{equation}
	where we used the definition of $\phi^k$, and estimate \eqref{localizationestimate}.
		
	Having above estimate, by doing a computation similar to \eqref{eq:4}, and \eqref{eq:5}
	 we can deduce for $n$ sufficiently large there exist $c',C>0$ independent of $n$:
	 \begin{equation} \label{eq:71}
	\bE(|a_n \cU_n^{>,r}|) \leq Cn^{\frac{3\gamma}{2}} \exp(-c' n^{\theta}) \leq 
	C \exp(-\frac{c'}{2}n^{\theta}).   
	\end{equation}	  
	  	Therefore, by Borel-Cantelli's lemma we have $a_n \cU_n^{>,r} \to 0 $, 
	  	almost surely
	 w.r.t the distribution of the masses. However, thanks to Lemma \ref{boundomegalemma}
	  (i.e. \eqref{boundomega}), 
	  we know there exists almost surely $n_0$, such that for $n> n_0$, we have 
	 $a_n \geq \frac{1}{c}$ with $c>0$. Using these two facts we deuce that 
	 $\cU_n^{>,r} \to 0 $, almost surely w.r.t distribution of the masses.
	   Combining $\cU_n^{>,r} \to 0 $, with 
	 \eqref{eq:6} ($\cU_n^{>,p}\to 0$) we get $\cU_n^> \to 0$ almost surely w.r.t the  
	 distribution of the masses thanks to \eqref{MAINdecomp}.
	 
	 Recall the definition of $\cU_n'$ \eqref{highmodelemrp}. We can define $\cU_n^{'<}$
	 , $\cU_n^{'>}$ corresponding to $r$ variable similar to $\cU_n^>$, $\cU_n^<$ as in 
	 \eqref{highmodedecouplerp}: 
	 \begin{equation} \label{highdecouple2}
	 	\cU_n^{'<} := \frac{1}{n^2} \sum_{|x-y|\leq2n^\theta} \left|\expval{\hmr_x(nt) 
	 	\hmr_y(nt)}_{\rho^n} \right|, \quad 
	 	\cU_n^{'>} := \frac{1}{n^2} \sum_{|x-y|>2n^\theta} \left|\expval{\hmr_x(nt) 
	 	\hmr_y(nt)}_{\rho^n} \right|.
	\end{equation}	  
	 Then we can prove $\cU_n^{'>} \to 0$ almost surely, with respect to the 
	 distribution of the masses as we did for $\cU_n^{>}$. We only sketch the former, since 
	 these proofs are almost identical. 
	 First, one can compute $\cU_n^{'>}$ similar to 
	 \eqref{MAINdecomp} and decompose the $r$ and $p$ contribution and define 
	 $\cU_n^{'>,r}, \cU_n^{'>,p}$ similar to \eqref{MAINdecomp}. Then one can 
	 observe that $\cU_n^{'>,p}=\cU_n^{>,r} \to 0$ almost surely. Proving  
	 $\cU_n^{'>,r} \to 0$, almost surely, is similar to the previous terms: First 
	 we decompose the sum in $\cU_n^{'>,r}$ into two parts 
	 similar to \eqref{eq:4}, then we use proper estimates. The only 
	 difference lies in the following estimate:  
	 Define $b_n =\tilde{\omega}^2 n^{3 \gamma}$ (recall $\tilde{\omega}= \omega_{k_*} $ 
	 with $k_*= \min \{ I(\gamma) \}$). Then similar to \eqref{eq:7}, one can observe 
	 that there exist $c,C$ such that 
	 \begin{equation}\label{eq:8}
	 \bE\left(b_n \sum_{k \in I(\gamma)} |\phi^k_x \phi^k_z| \right) \leq n^{3 \gamma} C \exp(-c|x-z|/
	 n^{2\gamma}). 
	\end{equation}	
	Consequently, one can deduce that $b_n \cU_n^{'>,r} \to 0$ almost surely, as we did 
	for $a_n\cU_n^{>,r}$ in \eqref{eq:71}. Therefore, thanks to Lemma \ref{boundomega} we 
	can deduce there exists almost surely $n_0$ such that for $n>n_0$, we have  
	$b_n\geq \frac1{c^2}$. Hence,  we have  $\cU_n^{'>,r} \to 0$ almost surely. This 
	finishes the proof of the fact that $\cU_n^{'>} \to 0$ almost surely w.r.t the 
	distribution of the masses.
	
	\paragraph{Step 2.2: High modes II ($|x-y| \leq 2n^{\theta}$).}
	In this step, we prove $\cU_n^<$ and $\cU_n^{'<}$ defined in \eqref{highmodedecouplerp} and \eqref{highdecouple2} respectively, converge to zero almost surely. 
	Similarly to the computation in \eqref{eq:4}, thanks to the definition of $\hmp_x(nt)$ \eqref{highmodedef}, 
	 and identities \eqref{initialbound}, \eqref{covkk'}, we have for any 
	 $x \in \bI_n$ (recall
	 $I(\gamma)=]n^{1-\gamma},n] \cap \bZ$, $b_z=\beta^{-1}(\frac{z}{n})$): 
	 \begin{equation} \label{eq:10}
		\begin{split}		
		&\expval{\frac{(\hmp_x(nt))^2}{m_x}}_{\rho^n} =\sum_{k,k' \in I(\gamma)}	
			\varphi^k_x\varphi^{k'}_x \expval{\hat{\tilde{p}}_k(nt) 
			\hat{\tilde{p}}_{k'}(nt)}_{\rho^n} = \\
			&\sum_{z=1}^n b_z\left(
			\sum_{k \in I(\gamma)} \cos(\omega_k nt)\varphi^k_x \varphi^k_z  \right)^2+
			\sum_{z=1}^{n-1} b_z\left(
			\sum_{k \in I(\gamma)} \sin(\omega_k nt)\varphi^k_x \phi^k_z  \right)^2 \leq \\
		& C \sum_{z=1}^n \left(
			\sum_{k \in I(\gamma)} \cos(\omega_k nt)\varphi^k_x \varphi^k_z  \right)^2+
		C \sum_{z=1}^{n-1} \left(
			\sum_{k \in I(\gamma)} \sin(\omega_k nt)\varphi^k_x \phi^k_z  \right)^2 =\\ 
			&C \sum_{k,k' \in I(\gamma)}  \cos(\omega_k nt) \cos(\omega_{k'} nt) 
			\varphi^k_x \varphi^{k'}_x \sum_{z=1}^n \varphi^k_z \varphi^{k'}_z + 
			C \sum_{k,k' \in I(\gamma)}  \sin(\omega_k nt) \sin(\omega_{k'} nt) 
			\varphi^k_x \varphi^{k'}_x \sum_{z=1}^{n-1} \phi^k_z \phi^{k'}_z = \\
			&C\sum_{k \in I(\gamma)} (\cos^2(\omega_k nt)+\sin^2(\omega_k nt)) 
			(\varphi^k_x)^2 \leq C\sum_{k=0}^{n-1} (\varphi^k_x)^2 =C,
		\end{split}	
	 \end{equation}
	 where in the first inequality we bounded $b_z$ by $C=\beta_-^{-1}$ (which is 
	 independent of $n$ and the realization of the masses). Notice that in 
	 the first inequality, we used the fact that all the terms in the sum are positive 
	 (i.e. $b_z>0$). In the last two lines we used the fact that $\varphi^k$, $\phi^k$ 
	 are orthonormal basis ($\expval{\varphi^k,\varphi^{k'}}_n =\delta_{k,k'}$, 
	 $\expval{\phi^k,\phi^{k'}}_{n-1} =\delta_{k,k'}$, $\sum_z \varphi^k_z \varphi^{k'}_z=
	 \sum_{z}\phi^k_z\phi^{k'}_z =\delta_{k,k'} $).  
	 
	 Recall the definition of $\cU_n^<$ \eqref{highmodedecouplerp} with $\theta<1$,
	 then thanks to 
	 \eqref{eq:10} we have: 
	 \begin{equation} \label{eq:11}
	 \begin{split}
	 &\cU_n^< = \frac{1}{n^2} \sum_{|x-y| \leq 2 n^{\theta} } 
	 \left| \expval{\frac{\hmp_x(nt) \hmp_y(nt)}{\sqrt{m_xm_y}}}_{\rho^n} \right|
	 \leq \frac{1}{n^2} \sum_{|x-y|\leq 
	 2n^{\theta}} \expval{\frac{(\hmp_x(nt))^2}{m_x}}_{\rho^n}^{\frac12}
	 \expval{\frac{(\hmp_y(nt))^2}{m_y}}_{\rho^n}^{\frac12} \leq \\
	 & \frac{1}{n^2} \sum_{|x-y<2n^{\theta}|} C \leq 8C \frac{n^{1+ \theta}}{n^2} \to 0
	 ,\end{split}
	 \end{equation}
	 where in the first line we bounded $\cU_n^{<}$ by Cauchy-Schwartz inequality, 
	 and in the second line we used \eqref{eq:10}. \
	 Recall the decomposition $\cU_n \leq C(\cU_n^<+ \cU_n^>)$ \eqref{highmodedecouplerp}. 
	 In Step 2.1, we observed that $\cU_n^> \to 0$, almost surely. 
	 Combining this fact with \eqref{eq:11}, we deduce that $\cU_n \to 0$, almost surely 
	 w.r.t the distribution of the masses. 
	 
	 Recall the definition of $\cU_n^{'<}$ \eqref{highdecouple2}. Exactly similar to 
	 \eqref{eq:10},
	 one can bound $\expval{(\hmr_x(nt))^2}_{\rho^n}$ thanks to its definition 
	 \eqref{highmodedef}, and averages \eqref{initialbound}, \eqref{covkk'}: for any 
	 $x \in \bI_n$  there exists $\beta^{-1}_- =C>0$ 
	 independent of $n$ and realization of the masses such that 
	 \begin{equation}\label{eq:12}
	 \expval{(\hmr_x(nt))^2}_{\rho^n} \leq C.
	 \end{equation}	   
		Similar to \eqref{eq:11}, by using a Cauchy-Schwartz inequality and the bound 
		\eqref{eq:12} we deduce, that $\cU_n^{'<} \to 0$. Having the latter, we also 
	proved $\cU_n^{'>} \to 0$, almost surely in the previous step. Therefore, thanks to the
	bound $\cU_n' \leq C (\cU_n^{'>} +\cU_n^{'<}  )$ we conclude $\cU_n' \to 0$ almost 
	surely w.r.t distribution of the masses. This finishes the proof of 
	\eqref{highmodelemrp}.  
	
	\paragraph{Step 3: Summing up.}
	Recall the definition of $\cP_n$, $\cL'_n$, and $\cU_n$ \eqref{eq:proof3}, 
	\eqref{lowmodelem}, and \eqref{highmodelemrp}. Since $\tilde{p}_x(nt)= 
	\tilde{p}_x^o(nt)+\hmp_x(nt)$ (cf. \eqref{highmodedef}, \eqref{lowmodedef}), one can 
	observe that $\cP_n \leq \cL'_n +\cU_n + (\cL_n^{'})^{\frac12}(\cU_n)^{\frac12}$ thanks
	 to a Cauchy-Schwartz inequality  ($|\expval{AB}_{\rho^n}| \leq \expval{A^2}_{\rho^n}
	 ^{\frac12}\expval{B^2}_{\rho^n}^{\frac12}$). Hence we have $\cP_n \to 0$, almost surely 
	 w.r.t the distribution of the masses, since $\cL'_n \to 0$, and $\cU_n \to 0$ almost 
	 surely
	 thanks to \eqref{lowmodelem}, and \eqref{highmodelemrp}. We can deduce $\cR_n \to 0$
	 \eqref{eq:proof3} almost surely similarly, by using  the fact that 
	 $\tilde{r}_x(nt)= \tilde{r}_x^o(nt)+\hmr_x(nt)$,  and $\cL_n \to 0$, $\cU'_n \to 0$,
	 almost surely. This finishes the proof of \eqref{eq:proof3}, and we conclude first two 
	 limit of Theorem \ref{thmhighermomentclassical}: \eqref{highermomentclassicalr}, and  
	 \eqref{highermomentclassicalp}.  
	  \end{proof}
		
		\begin{rem}
			 The bound \eqref{eq:10} is true for $\tilde{p}_x$ as well. 
			Moreover, similar to \eqref{eq:10}, we can bound $\expval{\tilde{p}_x^2(nt)/m_x}_{\rho^n}$ by $\beta_+^{-1}$ from below. This means the temperature at each point is bounded 
			by the maximum and minimum of the temperature profile at each time. 
		\end{rem}
	\begin{rem} \label{fbdd}
	Notice that in the above proof, for proving $\cP_n \to 0$, $\cR_n \to 0$
	 almost surely w.r.t distribution of the masses, \eqref{eq:proof3}, we only used the fact 
	that $f$ is bounded. Therefore,it is straightforward to observe that
	 \eqref{eq:proof3} holds for bounded $f$ as well.
	\end{rem}

	\subsection{Evolution of $e$}
	In this section, we prove \eqref{highermomentclassicale}.  Again, our proof rests on the limit on average \eqref{eq: limit on average}.
	\begin{proof}[Proof of Theorem \ref{thmhighermomentclassical}: \eqref{highermomentclassicale}] 
	Similarly to \eqref{tildevariable}, let us define: 
	\begin{equation} \label{tildee}
	\tilde{e}_x(nt):=e_x(nt)-\expval{e_x(nt)}_{\rho^n} =\frac12 
	\left(\frac{p_x^2(nt)}{m_x}+r_x^2(nt) -\frac{\expval{p_x^2(nt)}_{\rho^n}}{m_x}-
	\expval{r_x^2(nt)}_{\rho^n} \right).
	\end{equation}
	Similarly to \eqref{eq:proof3}, one can add and subtract $\frac1n\sum_{x=1}^n 
	f(\frac{x}{n})\expval{e_x}_{\rho^n}$ inside the square in 
	\eqref{highermomentclassicale}.
	Then using the limit on average \eqref{eq: limit on average}, it is clear that in order to obtain 
	\eqref{highermomentclassicale} it is sufficient to prove
	\begin{equation} \label{eq:proof4}
	\cE_n :=\left\langle \left(\frac1n \sum_{x=1}^n f(\frac{x}{n})\tilde{e}_x(nt) \right)^2
	\right\rangle_{\rho^n} \to 0,
	\end{equation}
	almost surely w.r.t the distribution of the masses. Plugging the definition of 
	\eqref{tildee} into  \eqref{eq:proof4}, and taking advantage of Cauchy-Schwartz 
	inequality ($|\expval{AB}_{\rho^n}| \leq  
	\expval{A^2}_{\rho^n}^{\frac12} \expval{B^2}_{\rho^n}^{\frac12}$), we observe 
	that for deducing \eqref{eq:proof4} it is enough to show: 
	\begin{equation} \label{eq:proof5}
	\begin{split}	
	\frP_n:= \left\langle \left(\frac1n \sum_{x=1}^n f(\frac{x}{n})
	\left(\frac{p_x^2(nt)}{m_x} - \expval{\frac{p_x^2(nt)}{m_x}}_{\rho^n}\right) \right)^2
	\right\rangle_{\rho^n} \to 0, \\
	\frR_n:= \left\langle \left(\frac1n \sum_{x=1}^n f(\frac{x}{n})
	\left(r_x^2(nt) - \expval{r_x^2(nt)}_{\rho^n}\right) \right)^2
	\right\rangle_{\rho^n} \to 0,
	\end{split}
	\end{equation}	  
	almost surely w.r.t the distribution of the masses. In the following, we prove  
	$\frP_n \to 0$ almost surely. The proof of $\frR_n \to 0$ goes in parallel, and we omit
	it for the sake of brevity. \\
	Let us denote $$\bar{p}_x(nt):=\expval{p_x(nt)}_{\rho^n},$$ and
	recall the definition of $\tilde{p}_x(nt)= p_x(nt)-\bar{p}_x(nt)$ \eqref{tildevariable}.
	Since $\expval{\tilde{p}_x(nt)}_{\rho^n}=0$, 
	after a straightforward computation, we have
	for $x,y \in \bI_n$:
	\begin{equation} \label{eq:13}
	\begin{split}	
	&\expval{\left(p_x^2(nt)- \expval{p_x^2(nt)}_{\rho^n}\right)\left(
	p_y^2(nt)-\expval{p_y^2(nt)}_{\rho^n}\right)}_{\rho^n}=
	\\ &\expval{\tilde{p}^2_x(nt)
	\tilde{p}^2_y(nt)}_{\rho^n}- \expval{\tilde{p}^2_x(nt)}_{\rho^n}
	\expval{\tilde{p}^2_y(nt)}_{\rho^n} +4\bar{p}_x(nt)\bar{p}_y(nt) 
	\expval{\tilde{p}_x(nt)\tilde{p}_y(nt)}_{\rho^n}+ \\
	&2 \bar{p}_x(nt) \expval{\tilde{p}_x(nt) \tilde{p}_y^2(nt)}_{\rho^n} + 
	2\bar{p}_y(nt) \expval{\tilde{p}^2_x(nt)
	\tilde{p}_y(nt)}_{\rho^n}.
	\end{split}
	\end{equation}
	Notice that owing to the definition of $\rho^n$ \eqref{localgibbsstate}, for any 
	realization of the masses $(p(0),r(0))$ is a Gaussian random vector, where the randomness refers to the probability distribution $\rho^n$.
	In addition, the evolution equation \eqref{eqofmotion} is linear,  and the vector $(p(nt),r(nt))$
	can be written as $(p(nt),r(nt))^{\dagger}=O(t) (p(0),r(0))^{\dagger}$ for a linear 
	transformation $O(t)$, as we observed in \eqref{soleqmotion}. Therefore, the vector 
	$(p(nt),r(nt)$ is another Gaussian vector. Consequently, 
	$(\tilde{p}(nt),\tilde{r}(nt))$ is a zero mean Gaussian vector (equivalently 
	if $\rho^n(t)$ denotes the solution to the backward equation,  $\rho^n(t)$ will be 
	Gaussian). Hence we can use Isserlis' theorem (Wick's theorem cf. \cite{Wick}) and observe that
	any odd moments of $\tilde{p}$, $\tilde{r}$ is zero. Moreover, 
	for any even moment we have pairing. More precisely, 
	for $k$ odd $\expval{\tilde{p}_{x_1}(nt) \dots \tilde{p}_{x_k}(nt)}_{\rho^n}=0$, 
	for any $x_1,\dots, x_k \in \bI_k$. For even $k$ we have   
	\begin{equation}\label{pairing}
	\begin{split}	
	&\expval{\tilde{p}_{x_1}(nt) \dots \tilde{p}_{x_k}(nt)}_{\rho^n}=\sum_{\pi \in \Pi}
	\prod_{\{i,j\} \in \pi} \expval{\tilde{p}_{x_i}(nt)\tilde{p}_{x_j}(nt)}_{\rho^n} 
	\implies\\
	& \expval{\tilde{p}^2_x(nt)
	\tilde{p}^2_y(nt)}_{\rho^n}- \expval{\tilde{p}^2_x(nt)}_{\rho^n}
	\expval{\tilde{p}^2_y(nt)}_{\rho^n} = 2 \expval{\tilde{p}_x(nt) \tilde{p}_y(nt)}_{\rho^n}^2; \quad \expval{\tilde{p}_x(nt) \tilde{p}_y^2(nt)}_{\rho^n}=0 
	\end{split}	
	\end{equation}  
  where $\pi$ denotes a  
	pairing of $\bI_k$ i.e. a partition of $\bI_n$ into pairs $\{i,j \}$, 
	and $\Pi$ denotes the set of all these pairings; the product here is over all the pairs
	inside a pairing.
	
	As a result of \eqref{pairing}, we rewrite \eqref{eq:13} as: 
	\begin{multline} \label{eq:14}
	\expval{(p_x^2(nt)- \expval{p_x^2(nt)}_{\rho^n})(
	p_y^2(nt)-\expval{p_y^2(nt)}_{\rho^n})}_{\rho} \\
	= 2 \expval{\tilde{p}_x(nt) \tilde{p}_y(nt)}_{\rho^n}^2+ 4\bar{p}_x(nt)\bar{p}_y(nt) 
	\expval{\tilde{p}_x(nt)\tilde{p}_y(nt)}_{\rho^n}.
	\end{multline}
	Before proceeding, let us recall the estimate (3.6) from \cite{BHO}:
	There exists a constant $C>0$ independent of $n$ and the realization of the masses such that, for any $n$ and $x \in \bI_n$ :
	\begin{equation}\label{BHObound}
		\left|\expval{p_x(nt)}_{\rho^n}\right| \leq C, \quad 
		\left|\expval{r_x(nt)}_{\rho^n}\right| \leq C.
	\end{equation}	  	
	Furthermore, following the exact same line of \eqref{eq:10} (with only replacing
	$\hmp_x(nt)$ with $\tilde{p}_x(nt)$ and summing over $\bI_n^o$ instead of $I(\gamma)$),
	 it is evident that there exist $C>0$ independent of $n$ and realization 
	 of the masses such that for any $n$ and $x,y \in \bI_n$, we have (
	 in fact one can take $C= \beta^{-1}_-$): 
	 \begin{equation} \label{thermalbound}
	 \expval{\frac{\tilde{p}_x^2(nt)}{m_x}}_{\rho^n} \leq C, \quad 
	 \expval{\tilde{r}_x^2(nt)}_{\rho^n} \leq C \implies 
	 \left|\frac{\expval{\tilde{p}_x(nt)\tilde{p}_y(nt)}_{\rho^n}}{\sqrt{m_xm_y}}\right|
	 \leq C,
	\end{equation}	    
	where we used a Cauchy-Schwartz inequality.
		
	Recall $\frP_n$ from \eqref{eq:proof5}. Expanding the square, and using the identity 
	\eqref{eq:14} we have, with $f_x:=f(\frac{x}{n})$,
	\begin{equation} \label{eq:15}
	\begin{split}	
	&0 \leq \frP_n =\frac{2}{n^2} \sum_{x,y=1}^n f_xf_y \left(\frac{
	\expval{\tilde{p}_x(nt)\tilde{p}_y(nt)}_{\rho^n}^2}{m_xm_y}+ 2 
	\frac{\bar{p}_x(nt)\bar{p}_y(nt)\expval{\tilde{p}_x(nt)\tilde{p}_y(nt)}_{\rho^n}}{m_xm_y} \right) \leq \\
	&\frac{2c_1}{n^2} \sum_{x,y=1}^n \left|\frac{\expval{\tilde{p}_x(nt)\tilde{p}_y(nt)}_{\rho^n}}{\sqrt{m_xm_y}} \right|\left(\left|\frac{\expval{\tilde{p}_x(nt)\tilde{p}_y(nt)}_{\rho^n}}{\sqrt{m_xm_y}}\right|+ 2\left|\frac{\bar{p}_x(nt) \bar{p}_y(nt)}{\sqrt{m_xm_y}} \right| \right)	\leq 
	\underbrace{\frac{c_2}{n^2} \sum_{x,y=1}^n \left|\frac{\expval{\tilde{p}_x(nt)\tilde{p}_y(nt)}_{\rho^n}}{\sqrt{m_xm_y}} \right|}_{=:\frP_n'}, 
	\end{split}
	\end{equation}	 
	where $c_1$, $c_2$ are independent of $n$ and realization of the masses. 
	In the first inequality we bounded $|f(y)f(y')| \leq c_1$, since $f$ is bounded,
	 and in the second 
	inequality we take advantage of \eqref{thermalbound} to bound the first term, 
	and \eqref{BHObound} as well as the fact that distribution of the masses is compactly 
	supported in $[m_-,m_+]$ to bound the second term. 
	
	Gathering a couple of results from Sec.~\ref{sec:rp}, it is straightforward to observe 
	that $\frP_n' \to 0$ almost surely, where $\frP'_n$ is defined in the last line of
	\eqref{eq:15}: Recall $\hmp_x(nt)$, $\tilde{p}_x^o(nt)$ \eqref{highmodedef}, \eqref{lowmodedef}, where $\tilde{p}_x(nt)= \hmp_x(nt)+
	\tilde{p}_x^o(nt)$ then we have: 
	\begin{equation} \label{eq:1511}
	\begin{split}
		0 \leq \frP_n' \leq  \underbrace{\frac{c_2}{n^2} \sum_{x,y=1}^n \left|\frac{\expval{\tilde{p}^o_x(nt)\tilde{p}^o_y(nt)}_{\rho^n}}{\sqrt{m_xm_y}} \right|}_{
		:=\frP_n^1} + 
		\underbrace{\frac{c_2}{n^2} \sum_{x,y=1}^n\left|\frac{\expval{\tilde{p}^{\bullet}_x(nt)\tilde{p}^{\bullet}_y(nt)}_{\rho^n}}{\sqrt{m_xm_y}} \right|  }_{:=\frP_n^2} + \\
	 \underbrace{\frac{c_2}{n^2} \sum_{x,y=1}^n\left(\left|\frac{\expval{\tilde{p}^{
	 \bullet}_x(nt)\tilde{p}^o_y(nt)}_{\rho^n}}{\sqrt{m_xm_y}} \right| + 
		\left|\frac{\expval{\tilde{p}^{o}_x(nt)\tilde{p}^{\bullet}_y(nt)}_{\rho^n}}{\sqrt{m_xm_y}} \right| \right) }_{:=\frP_n^3}	
	\end{split}
	\end{equation}
	Observe the definition of $\frP_n^1, \frP_n^2, \frP_n^3$ in the above expression.
	 $\frP_n^1 \to 0$ as we proved in \eqref{lowmodecomp}. $\frP_n^2 \leq c'(\cU_n^>+
	\cU_n^<)$ thanks to \eqref{highmodedecouplerp}. However, we observed that 
	$\cU_n^> \to 0 $ almost surely w.r.t the distribution of the masses after \eqref{eq:71}. 
	Moreover, we showed $\cU_n^< \to 0$ in \eqref{eq:11}. Hence, 
	 $\frP_n^2 \to 0$ almost surely w.r.t the distribution of the masses. Therefore, it only
	 remains to prove that $\frP_n^3 \to 0$. Thanks to Cauchy-Schwartz inequality 
	 ($|\expval{AB}|\leq \expval{A^2}^{\frac12} \expval{B^2}^{\frac12}$) we have:
	 \begin{equation}
	  0 \leq \frP_n^3 \leq 2c_2 \left(\frac1n\sum_{x=1}^n \frac{
	  \expval{(\tilde{p}^o_x(nt))^2}_{\rho^n}^{\frac12}}{\sqrt{m_x}}\right)
	  \left(\frac1n \sum_{y=1}^n \frac{\expval{(\tilde{p}^{\bullet}_y(nt))^2}_{\rho^n}^{\frac12}}{\sqrt{m_y}}\right) \leq C'\underbrace{\left(\frac1n\sum_{x=1}^n \frac{
	  \expval{(\tilde{p}^o_x(nt))^2}_{\rho^n}^{\frac12}}{\sqrt{m_x}}\right)}_{:=\cL_n'''}
	  \to 0,
	 \end{equation}
	 where we bounded the second sum thanks to the bound 
	 $\expval{\frac{(\hmp_x(nt))^2}{m_x}}_{\rho^n}\leq C $ \eqref{eq:10}; moreover, the 
	 last limit is deduced from \eqref{lowmodecomp}, where we proved $(\cL'''_n)^2 \to 0$ 
	 ($\cL'''_n$ is positive). Therefore, we conclude that $\frP_n' \to 0$ almost surely 
	thanks to \eqref{eq:1511}, and this finishes the proof of \eqref{eq:proof5} for
	$\frP_n$ thanks to \eqref{eq:15}. As we mentioned the proof of the fact that 
	$\frR_n \to 0$ almost surely w.r.t distribution of the masses is exactly similar, where,
	one should take into account the corresponding bound for $r$ in \eqref{BHObound}, 
	\eqref{thermalbound}. This finish the proof of \eqref{eq:proof5} which finishes the proof of \eqref{highermomentclassicale} and Theorem \ref{thmhighermomentclassical}.
	\end{proof}

\section{Quantum Case} \label{section: quantum}
	 
\subsection{Model and Results}\label{subsec: quantum model and results}
	   The quantum disordered harmonic chain of size $n$ is defined as follows: 
	   Denote the space variable by $\pmb{\xi} \in \bR^{n-1}$,  
	   let the state space be given by the Hilbert space $\cH_n:= L^2(\bR^{n-1},d\pmb{\xi})$, 
	   denote the elements of $\cH_n$ by bold ket notation $\pmb{\ket{\psi}}$,
	   and denote the inner product in $\cH_n$ by $\pmb{\braket{\phi}{\psi}}$. 
	   Denote also the Schwartz space by $\cS(\bR^{n-1};\bC)$, which is dense in $\cH_n$. The 
	   elongation operator $\ssr_x$ for any $x \in \bI_{n-1}$ is defined on 
	   $\cS(\bR^{n-1},\bC)$ as follows: for any $\pmb{\ket{\psi}} \in \cS(\bR^{n-1};\bC)$ 
	   $$\ssr_x \pmb{\ket{\psi(\xi_1,\dots,\xi_x,\dots,\xi_{n-1})}} = \pmb{\xi_x} 
	   \pmb{\ket{\psi(\xi_1,\dots,\xi_x,\dots,\xi_{n-1})}},$$  
	   Correspondingly, for 
	   $x \in \bI_n$, $\ssp_x$ denotes the momentum operator of particle $x$, 
	   and it is defined as follows: for any $\pmb{\ket{\psi}} \in \cS(\bR^{n-1};\bC)$ 
	   $$ \ssp_x \pmb{\ket{\psi(\xi_1,\dots,\xi_x,\dots,\xi_{n-1})}}=-i \Big(\frac{\partial}
	   {\partial \pmb{\xi}_{x-1}}-\frac{\partial}{\partial \pmb{\xi}_x}\Big) 
	   \pmb{\ket{\psi(\xi_1,\dots,\xi_x,\dots,\xi_{n-1})}}.$$
	   We assume free boundary conditions $r_0=r_n=0$, which means 
	   $p_1=i\partial/\partial_{\xi_1}$, and $p_n=-i \partial/\partial_{\xi_{n-1}}$
	   ($\partial/\partial_{\xi_0}=\partial/\partial_{\xi_n}=0$, by convention). 
	   This means that $\sum_{x=1}^n \ssp_x =0$, corresponding to the fact that we define our model from a center of mass observer's viewpoint, without loss of generality.
	   The canonical commutation relations (CCR) read ($[a,b]:=ab-ba$):
	   \begin{equation}\label{CCR}
	   	[\ssr_x,\ssr_y]=[\ssp_x,\ssp_y]=0, \quad 
	   	[\ssr_x,\ssp_y]=i\left(\delta_{x,(y-1)}-\delta_{x,y} \right),
	   	\quad \forall x \in \bI_{n-1}, y \in \bI_{n}.
	   \end{equation} 
	   
	   The Hamiltonian operator $\mathfrak{H}_n$ is defined on $\cS(\bR^{n-1};\bC)$ as 
	   \begin{equation} \label{hamquantum}
	   \mathfrak{H}_n \;=\; \frac12 \sum_{x=1}^n\left(\frac{\ssp_x^2}{m_x}+\ssr_x^2\right) \; =: \; \sum_{x=1}^n \sse_x,
	   \end{equation}
		where $\{m_x \}_{x=1}^{\infty}$ are the i.i.d random variables defined after \eqref{Hamclassical}.  
%
	The operators $\ssp_x$, $\ssr_x$, and consequently $\sse_x$ and $\mathfrak{H}_n$ are essentially self-adjoint on $\cH_n$ (cf. \cite{simon}, \cite{NSS}). 
	Therefore, we consider their closure on $\cH_n$, that we denote with the same symbols. The dense domain of $\frH_n$ is denoted by $\cD(\frH_n)$.
	
	The chain evolves in time according to the Heisenberg dynamics 
	generated by $\frH_n$: For any $t \in \bR$, $e^{it \frH_n}$ is well define thanks to the 
	spectral theorem, since $\frH_n$ is self adjoint. Therefore, using Stone's theorem 
	we can define the one-parameter group of 
	authomorphism $\tau_t^n$, on $\mathcal{B}(\cH_n)$ (set of bounded operator on $\cH_n$) 
	as 
	$$a(t):= \tau_t^n(a)=e^{it \frH_n} a e^{-it\frH_n}, \quad a \in \mathcal{B}(\cH_n). $$
	Notice that $a(t)$ is the solution to the Heisenberg equation 
\begin{equation} \label{heisenberg}
	\dot{a}=i[\frH_n,a], \quad a(0)=a_0,
\end{equation}	   
	where $\dot{a}=\partial_t a(t)$. This equation holds in the strong sense on the proper domain. 
	Appealing to  Stone's theorem, we can extend the domain of $\tau_t^n$ to certain unbounded operators such as $\ssr_x$, $\ssp_x$, $\sse_x$ (cf. \cite{RB1}\cite{RB2}\cite{NSS}\cite{A20}\cite{simon}). 
	Denoting
	 $\tau_t^n(\ssr), \tau_t^n(\ssp), \tau_t^n(\sse) $ by $\ssr_x(t), \ssp_x(t), 
	 \sse_x(t)$ respectively, one can observe that, thanks to the CCR \eqref{CCR},
	 $\ssr_x(t)$ and $\ssp_x(t)$ are the strong solutions to the Heisenberg equations 
	  \begin{equation} \label{eqofmotionquantum}
	  	\begin{split}
	  	&\dot{\ssp}_x=i[\frH_n,\ssp_x]=(\nabla_-\ssr)_x=\ssr_x(t) -\ssr_{x-1}(t), 
	  	\quad x \in 
	  	\bI_n, \\
	  	&\dot{\ssr}_x=i[\frH_n,\ssr_x]=(\nabla_+ M^{-1} \ssp)_x=\frac{\ssp_{x+1}}{m_{x+1}}-
	  	\frac{\ssp_x}{m_x},
	  	 \quad x \in \bI_{n},	  	  	
	  	\end{split}
	  \end{equation}
	on a proper domain, where  $\nabla_-$, $\nabla_+$ and $M$ have been defined in \eqref{eq:discretegrad} and below \eqref{eqofmotion}. 
	Notice that $\ssr$, $\ssp$ in \eqref{eqofmotionquantum} should be understood as vectors of
	elongation and momentum operators (and by abusing the notation 
	matrix multiplication
	$\nabla_- \ssr$ should be understood as multiplication of a finite dimensional 
	matrix and a vector of operators). 
	Let us emphasize that since our system is 
	quadratic, the evolution equation for $\ssr$ and $\ssp$ 
	in the quantum case \eqref{eqofmotionquantum} is
	similar to the classical case \eqref{eqofmotion}.  
	
	Given initial smooth profiles $\bar{p},\bar{r}\in C^1([0,1])$, such that 
	$\bar{r}(0)=\bar{r}(1)=0$,  $\int_0^1 \bar{p}(y)=0$, and  
	$\bar{b} \in C^0([0,1])$ with $0<b_-<\bar{b}(y)<b_+$,
	we define the state (density matrix) $\varrho_n^{\bar{r},\bar{p},\bar{b}} \in \cB(\cH_n)$ such that for each $n$, 
	$\roo$ is trace-class and bounded with $\Tr(\roo)=1$ (as in the classical case, supersripts will be omitted whenever possible).
	In addition, for any operator $a$ such that $a \roo$ be trace-class we define: 
	 \begin{equation} \label{averagestate}
	 \expval{a}_{\roo} := \Tr(a\roo).
	\end{equation}	    
	Finally we impose that $\roo$ represents a local equilibrium state, by imposing the following set of conditions: 
	
	\paragraph{Assumptions on $\roo$:}
	\begin{enumerate}
	
	\item\label{AA1} 
	Local mechanical equilibrium: For any $y \in [0,1]$, 
	\begin{equation} \label{avgassumrp}
	\expval{\ssr_{[ny]}}_{\roo}= \bar{r}\Big(\frac{[ny]}{n}\Big), \quad
	\frac{1}{m_{[ny]}}\expval{\ssp_{[ny]}}_{\roo}=  \frac{1}{\bar{m}}\bar{p}\Big(\frac{[ny]}{n}\Big) +\epsilon_n^{[ny]},
	\end{equation}  
	where for all $y \in [0,1]$, $\epsilon_n^{[ny]} \to 0$ as $n \to \infty$ 
	almost surely w.r.t the distribution of the masses.
	
	\item\label{AA2} 
	Local thermal equilibrium: Let
	\begin{equation} \label{tildedef}
	\tilde{\ssp}_x:=\ssp_x-\expval{\ssp_x}_{\roo}, \quad 
	\tilde{\ssr}_x:= \ssr_x-\expval{\ssr_x}_{\roo}, \quad
	\breve{\sse}_x= \frac{1}{2} \left(\frac{\tilde{\ssp}_x^2}{m_x} + 
	\tilde{\ssr}_x^2 \right). 
	\end{equation}	
	For any test function $f \in C^0([0,1])$, 
	\begin{equation} \label{avgassume}
	\frac1n \sum_{x=1}^n f\big(\frac{x}{n}\big) \expval{\breve{\sse}_x}_{\roo} \to 
	\int_0^1 f(y)\bar{b}(y)dy,
	\end{equation}
	almost surely w.r.t to the distribution of the masses. 	
	Contrary to the classical case, $b(y)$ cannot longer be identified with a local temperature $\beta^{-1}(y)$, cf.~\eqref{fdef} below.
	
	\item\label{AA3} 
	Polynomial clustering: 
	For $x \in \bI_n$, let us use the notations $O_x^r:= \tilde{\ssr}_x$ and $O_x^p:=\tilde{\ssp}_x$, see \eqref{tildedef}. 
	There exists $C>0$ such that for any $n$, any
	$x_1 , x_2 , x_3 , x_4 \in \bI_n$, any $\sharp_1,\sharp_2,\sharp_3,\sharp_4 
	\in \{r,p\}$, and any  realization of the masses, we have: 
	\begin{equation} \label{decayassum}
	\begin{split}	
	&\left|\expval{O_{x_1}^{\sharp_1} O_{x_2}^{\sharp_2}}_{\roo} \right| \leq  \frac{C}{|x_1-x_2|^{a_1}}, \quad \text{with} \quad a_1\geq 2, \\
	&\left| \expval{O_{x_1}^{\sharp_1} O_{x_2}^{\sharp_2} O_{x_3}^{\sharp_3}}_{\roo} \right|
	\leq \frac{C}{|\max\{|x_2-x_1|,|x_3-x_2|\}|^{a_2}}, \quad \text{with} \quad a_2 \geq 3, \\
	 &\left|\expval{O_{x_1}^{\sharp_1} O_{x_2}^{\sharp_2} O_{x_3}^{\sharp_3}
	 O_{x_4}^{\sharp_4}}_{\roo}- 
	 \expval{O_{x_1}^{\sharp_1} O_{x_2}^{\sharp_2}}_{\roo} \expval{O_{x_3}^{\sharp_3}
	 O_{x_4}^{\sharp_4}}_{\roo}\right| \leq \\ & \frac{C}{|x_2-x_3|^{a_3} 
	 |x_1-x_4|^{a_3}}+
	 \frac{C}{|x_1-x_3|^{a_3}|x_2-x_4|^{a_3}},  \quad \text{with}\: a_3 \geq 2
	\end{split}	
	\end{equation}
	with the convention $\frac10=1$.
	\end{enumerate}
	Initially, we prepare the chain in the state $\roo$ having the above properties. 
	Let us emphasize that this state could be either a mixed state or a pure state 
	i.e., $\roo=\pmb{\ketbra{\psi}}$ for a suitable wave function $\pmb{\psi}\in \cH_n$. 
	
	The evolution at the macroscopic level is still given by \eqref{eq:macro},
	 where one  should modify the initial 
	datum \eqref{eq:macro3} by replacing  $\beta^{-1}(y)$ with  $\bar{b}(y)$ featuring in \eqref{avgassume}: 
	\begin{equation} \label{eq:macro3Q}
	\begin{split}
		&\fr(y,0)=\bar{r}(y), \quad \fp(y,0)=\bar{p}(y), \quad \fe(y,0)=
		\frac{\bar{p}^2(y)}{2\bar{m}}+\frac{\bar{r}^2(y)}{2}
		+\bar{b}(y), \\
		&\fr(0,t)=\fr(1,t)=0 \quad \forall t \in [0,T]. 	
	\end{split}
	\end{equation}
We prove the following result: 
	 \begin{theorem} \label{mainthmQ}
	 Let $f \in C^0([0,1])$, and fix $T>0$. 
	 We assume that the initial state $\roo$, parametrized by the smooth profiles $\bar{p},\bar{r},\bar{b}$ satisfies the conditions \eqref{avgassumrp} to \eqref{decayassum}. 
	 Let $(\ssr(nt), \ssp(nt), \sse(nt))$ be the vector of evolved operators according to the Heisenberg evolution \eqref{eqofmotionquantum}. 
	 In addition, let $(\fr(y,t),\fp(y,t),\fe(y,t))$ be solutions to the macroscopic evolution equations \eqref{eq:macro} with initial datum \eqref{eq:macro3Q}.
	 Then for any $t \in [0,T]$,
	 \begin{equation} \label{highermomentclassicalrQ}
	\left\langle \left(\frac1n \sum_{x=1}^n f(\frac{x}{n})\ssr_x(nt) -\int_0^1 
	f(y) \fr(y,t) dy \right)^2 \right\rangle_{\roo} \to 0,
	\end{equation}
	\begin{equation}\label{highermomentclassicalpQ}
	\left\langle \left(\frac1n \sum_{x=1}^n f(\frac{x}{n})\ssp_x(nt) -\int_0^1 
	f(y) \fp(y,t) dy \right)^2 \right\rangle_{\roo^n} \to 0,
	\end{equation}
	\begin{equation}\label{highermomentclassicaleQ}
	\left\langle \left(\frac1n \sum_{x=1}^n f(\frac{x}{n})\sse_x(nt) -\int_0^1 
	f(y) \fe(y,t) dy \right)^2 \right\rangle_{\roo^n} \to 0,
	\end{equation}
	as $n \to \infty$ almost surely w.r.t the distribution of the masses.
	 \end{theorem}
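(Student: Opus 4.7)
The plan is to follow the blueprint of the classical proof in Section \ref{sec:classicalcase}, with the polynomial clustering \eqref{decayassum} of Assumption \ref{AA3} playing the role that the Gaussian structure of the local Gibbs state played before. Using the limit on average of \cite{A20} I first reduce each of \eqref{highermomentclassicalrQ}--\eqref{highermomentclassicaleQ} to the vanishing as $n \to \infty$ of the variance of $\frac{1}{n}\sum_x f(x/n) O_x(nt)$ for $O = \tilde{\ssr}, \tilde{\ssp}, \breve{\sse}$, with the tildes and breve denoting centering as in \eqref{tildedef}. Since the Heisenberg equations \eqref{eqofmotionquantum} coincide in form with the classical equations \eqref{eqofmotion}, the mode decomposition \eqref{soleqmotion} carries over verbatim as an operator identity for $\tilde{\ssp}_x(nt)$ and $\tilde{\ssr}_x(nt)$ in terms of $\hat{\tilde{\ssp}}_k(0), \hat{\tilde{\ssr}}_k(0)$ and the eigenvectors and frequencies of the chain.

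For \eqref{highermomentclassicalrQ} and \eqref{highermomentclassicalpQ}, I would reproduce the three-step argument of Section \ref{sec:rp} (low modes; high modes with $|x-y|>2n^\theta$; high modes with $|x-y|\le 2n^\theta$). The Gibbs identities \eqref{cov1} are replaced by the two-point clustering with $a_1 \geq 2$, which is summable and therefore yields the analog of \eqref{initialbound} uniformly in $k$ and $n$, almost surely in the masses. The low-mode estimate \eqref{lowmodecomp} then proceeds unchanged. The high-mode estimates \eqref{MAINdecomp}--\eqref{eq:71} have to be slightly extended because the cross-covariances $\expval{\hat{\tilde{\ssp}}_k \hat{\tilde{\ssr}}_{k'}}_{\roo}$ need not vanish in the quantum setting, but the same clustering bound controls them and the localization plus Borel--Cantelli argument then applies. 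Non-commutativity of $\tilde{\ssp}_x$ and $\tilde{\ssr}_y$ produces only commutator corrections proportional to the identity, which contribute $O(n^{-1})$ terms and are safely absorbed.

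For the energy limit \eqref{highermomentclassicaleQ} the extra ingredient is that $\roo$ need not be Gaussian, so the Wick identities \eqref{pairing} used in the classical argument are unavailable. They are replaced by the three- and four-point clustering of \eqref{decayassum}: after reducing, as in \eqref{eq:13}--\eqref{eq:15}, to sums of the form $\frac{1}{n^2}\sum_{x,y} f(x/n) f(y/n)\, \expval{\tilde{\ssp}_x(nt)\tilde{\ssp}_y(nt)}_{\roo}^2/(m_x m_y)$ together with contributions involving the products $\expval{\ssp_x(nt)}_{\roo}\expval{\ssp_y(nt)}_{\roo}$, the exponents $a_2 \geq 3$ and $a_3 \geq 2$ turn the connected three- and four-point corrections into $o(1)$ errors after summation in $x,y$. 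The resulting principal terms are then handled by the two-point machinery already developed in the proof of the $\ssr$ and $\ssp$ limits.

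The main obstacle, and where the argument requires the most care, is that Assumption \ref{AA3} is formulated at time zero while the quantities to be controlled live at the macroscopic time $nt$. The bridge is the same as in the classical proof: expand $\tilde{\ssp}_x(nt), \tilde{\ssr}_x(nt)$ into modes, split low versus high, and invoke the localization estimate \eqref{localizationestimate} (with $2\gamma < \theta < 1$) so that high-mode contributions decay exponentially in $n^\theta$ while low-mode contributions are negligible because their number is only $n^{1-\gamma}$. Carrying this time-zero-to-time-$nt$ translation through for the four-point sums produced by the energy term, in such a way that each of the error pieces generated by the clustering inequality \eqref{decayassum} remains summable, is the genuinely new bookkeeping step and the place where the polynomial decay rates $a_1,a_2,a_3$ are exploited sharply.
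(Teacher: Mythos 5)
Your proposal follows the same broad strategy as the paper's proof (reduce to variance of the fluctuation operators, decompose into low and high modes, use localization and the clustering bounds, conclude via Borel--Cantelli), and you correctly identify polynomial clustering as the substitute for the classical Gaussian covariance identities. However, there is a genuine technical gap in the high-mode step that your single-threshold decomposition (``high modes with $|x-y|>2n^\theta$'') does not address. In the classical proof, after expanding $\expval{\tilde p^\bullet_x(nt)\tilde p^\bullet_y(nt)}$ into modes, the time-zero covariance $\expval{\tilde p_z\tilde p_{z'}}_{\rho^n}=\delta_{z,z'}m_z/\beta_z$ is \emph{diagonal}, so only a single sum over $z$ survives, and the regime $|x-y|>2n^\theta$ together with localization on $|x-z|$ or $|y-z|$ suffices, cf.\ \eqref{MAINdecomp}--\eqref{eq:4}. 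In the quantum case the initial covariance $\expval{\tilde{\ssp}_z\tilde{\ssp}_{z'}}_{\roo}$ is only polynomially decaying with exponent $a_1\geq 2$, so a genuine double sum over $(z,z')$ appears. If $z$ is within $n^\theta$ of $x$ and $z'$ is within $n^\theta$ of $y$ with $|x-y|>4n^\theta$, you only get $|z-z'|\gtrsim n^\theta$, and since there are $\sim n^{2\theta}$ such pairs, the clustering bound $\sum_{(z,z')}C|z-z'|^{-2}$ yields an $O(1)$ quantity --- not $o(1)$. This forces the introduction of a second exponent $\theta<\theta'$ and the threshold $|x-y|>4n^{\theta'}$ with inner cutoff $n^\theta$, as in the paper's $\tilde J(x,y)$ and \eqref{Qcov2}, so that $|z-z'|>n^{\theta'}$ while the number of pairs is $\sim n^{2\theta}$, producing the decay $n^{-2(\theta'-\theta)}$. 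Without this two-scale structure the argument fails precisely when $a_1=2$, which the assumption allows.

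A secondary imprecision concerns the energy term: you say the reduction is carried out ``as in \eqref{eq:13}--\eqref{eq:15}'', which in the classical case relies on Wick pairing to kill the odd moments and rewrite $\expval{\tilde p_x^2\tilde p_y^2}-\expval{\tilde p_x^2}\expval{\tilde p_y^2}=2\expval{\tilde p_x\tilde p_y}^2$. Since $\roo$ is not assumed Gaussian, that identity is unavailable, and the paper instead uses the decomposition $\ssp_x^2-\expval{\ssp_x^2}=(\tilde{\ssp}_x^2-\expval{\tilde{\ssp}_x^2})+2\bar{\ssp}_x\tilde{\ssp}_x$ plus operator Cauchy--Schwarz, then controls the truncated four-point function $\expval{(\tilde{\ssp}^\bullet_x)^2(nt)(\tilde{\ssp}^\bullet_y)^2(nt)}-\expval{(\tilde{\ssp}^\bullet_x)^2(nt)}\expval{(\tilde{\ssp}^\bullet_y)^2(nt)}$ directly, feeding the $(13)(24)$ and $(14)(23)$ structure of \eqref{decayassum} through the same double-threshold localization machinery, together with auxiliary uniform bounds on fourth mode moments $\expval{\hat{\tilde{\ssp}}_k^4}_{\roo}\leq C$. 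Your remark about approximately Wick is in the right spirit, but as written it is not what closes the argument; the explicit bound on the truncated four-point function and the fourth-moment estimates are the working ingredients. Finally, the comment about commutator corrections of order $O(n^{-1})$ is moot rather than needed: the operators $\tilde{\ssp}_x(nt)$ all commute among themselves (and likewise the $\tilde{\ssr}_x(nt)$), see \eqref{CCRt}, and after the Cauchy--Schwarz split no mixed $\ssp$--$\ssr$ commutators enter.
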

	 
	In order to make the above theorem more concrete, let us give an example of state that satisfies conditions \eqref{avgassumrp} to \eqref{decayassum} stated above.
	Given momentum and elongation profiles $\bar{p}$ and $\bar{r} \in  C^1([0,1])$ as before, and a profile $\beta \in C^0([0,1])$ such that $0<\beta_-\leq \beta(y) \leq \beta_+ $,
	we define the \textit{locally Gibbs state} as the density operator acting on $\cH_n$ given by 
	 \begin{equation} \label{locallyGibbs}
	 \Roo=\frac{1}{Z_n} \exp(-\frac{1}{2}\sum_{x=1}^n \Big[ \frac{\beta(\frac{x}{n})}{m_x} 
	 (\ssp_x-\bar{p}(\frac{x}{n})\frac{m_x}{\bar{m}})^2+\beta(\frac{x}{n})(\ssr_x-\bar{r}
	 (\frac{x}{n}))^2 \Big] ) =: \frac{1}{Z_n}\exp(-H_{\beta}^n),
	\end{equation}
	with $Z_n$ a normalizing constant such that $\Tr(\Roo)=1$.
	The operator $H_{\beta}^n$ implicitly defined by \eqref{locallyGibbs} is essentially self-adjoint on $\cS(\bR^{n-1},\bC)$.
	Its closure is self-adjoint and we denote it with the same symbol. 
	Since $H_{\beta}^n$ has a discrete spectrum with non-negative eigenvalues, one can observe that $\Roo$ is a well-defined density operator
	(bounded and trace-class, cf.~\cite{A20}). 
	Let us denote the average with respect to $\Roo$ by 
	\begin{equation} \label{avgthermal}
	  	\expval{a}_{\beta} := \Tr(a \Roo)
	\end{equation}	   
	for any suitable operator $a$.
	Thanks to the properties of $\Roo$, it is straightforward to observe that all the averages appearing subsequently are well defined and bounded (cf. \cite{A20}).
	 Given the distribution of the masses $\mu$, and temperature profile $\beta$, we 
	 define the macroscopic thermal energy profile as follows:
	 \begin{equation} \label{fdef}
	 	\frf_{\beta}^{\mu} (y) := \lim_{n \to \infty}
	 	\bE\left( \expval{\breve{\sse}_{[ny]}}_{\beta}\right).
	\end{equation}	
	with  $\breve{\sse}_x$ defined in \eqref{tildedef}.
	In \cite{A20}, it is proven that $\frf_{\beta}^{\mu}$ is well-defined and continuous, and $\frf_{\beta}^{\mu}(y)=\frf^{\mu}(\beta(y))$, where $\frf^{\mu}(\beta_{eq})$ is 
	the corresponding function for a constant profile of temperature $\beta(y)=\beta_{eq}
	 , \: \forall y$ (cf. \cite{A20} Appendix A). 
	 
	 \begin{theorem} \label{thmGibbs}
	 The state $\Roo$ defined in \eqref{locallyGibbs} satisfies conditions \eqref{avgassumrp} to \eqref{decayassum} with $\bar{p},\bar{r}$, $\bar{b}(y)=\frf_{\beta}^{\mu}(y)$. 
%
	  \end{theorem}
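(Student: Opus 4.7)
The plan is to exploit the quasi-free (Gaussian) nature of $\Roo$: since the exponent $H_{\beta}^n$ in \eqref{locallyGibbs} is a quadratic form in $(\ssr,\ssp)$, $\Roo$ is a centered Gaussian state in the shifted variables $(\tilde\ssr_x,\tilde\ssp_x)$ of \eqref{tildedef}. By Wick's theorem for quasi-free states, every odd correlator of centered observables vanishes identically, so the three-point bound in \eqref{decayassum} is trivial. The four-point correlator decomposes as the sum of three Wick pairings and, after subtracting $\expval{O_{x_1}^{\sharp_1}O_{x_2}^{\sharp_2}}_{\Roo}\expval{O_{x_3}^{\sharp_3}O_{x_4}^{\sharp_4}}_{\Roo}$, only the two ``crossed'' pairings $\expval{O_{x_1}^{\sharp_1}O_{x_3}^{\sharp_3}}_{\Roo}\expval{O_{x_2}^{\sharp_2}O_{x_4}^{\sharp_4}}_{\Roo}$ and $\expval{O_{x_1}^{\sharp_1}O_{x_4}^{\sharp_4}}_{\Roo}\expval{O_{x_2}^{\sharp_2}O_{x_3}^{\sharp_3}}_{\Roo}$ remain; a uniform polynomial decay $|\expval{O_x^{\sharp}O_y^{\sharp'}}_{\Roo}|\leq C|x-y|^{-2}$ of the two-point functions therefore implies the full bound \eqref{decayassum} with $a_3=2$.

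For \eqref{avgassumrp}, I would use that $H_{\beta}^n$ is even under the sign flip of $\tilde\ssr_x$ and $\tilde\ssp_x-(m_x/\bar m)\bar p_x$: on an \emph{unconstrained} quasi-free state this yields $\expval{\ssr_x}_{\Roo}=\bar r_x$ and $\expval{\ssp_x}_{\Roo}=(m_x/\bar m)\bar p_x$ exactly. The first identity holds as stated; the second is modified by the operator identity $\sum_x\ssp_x=0$ built into the Hilbert space construction, which projects out the center-of-mass direction and produces a correction $-c\,m_x$ where $c=\bar m^{-1}(\sum_x m_x\bar p(x/n))/(\sum_x m_x)$. Because the center-of-mass setup imposes $\int_0^1\bar p(y)\,dy=0$, the law of large numbers yields $c\to 0$ almost surely w.r.t.~the distribution of the masses, providing $\epsilon_n^{[ny]}\to 0$ as required.

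For \eqref{avgassume}, the quasi-free structure gives an explicit formula for $\expval{\breve\sse_x}_{\Roo}$ as a functional of $H_{\beta}^n$ and the matrices $A_p$, $A_r$. The almost sure convergence to the equilibrium thermal energy $\frf^{\mu}(\beta_{\mathrm{eq}})$ for the constant-temperature Gibbs state has been established in \cite{A20}. Using the continuity of $\beta$, I would partition $[0,1]$ into small sub-intervals on which $\beta$ is nearly constant, apply the constant-temperature result of \cite{A20} locally on each sub-interval, and conclude by continuity of $\frf^{\mu}$ together with a Riemann-sum approximation of the integral.

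The \emph{main obstacle} is the two-point polynomial decay. These functions can be written through functional calculus as off-diagonal matrix elements of the form $(M^{1/2}G_{\beta}(A_p)M^{1/2})_{x,y}$ (and an $A_r$ analogue), with $G_{\beta}(\omega)=\omega\coth(\beta\omega/2)/2$ in the constant-temperature case and a more intricate object for a varying profile, since $\beta(x/n)$ does not commute with $A_p$. The difficulty is twofold: $G_{\beta}$ behaves singularly as $\omega\to 0$ so the low modes contribute non-trivially, and the variation of $\beta$ across the chain prevents a simultaneous diagonalization. I would split the sum using the cutoff $I(\gamma)$ of Section \ref{sec:classicalcase}: the high-mode part decays exponentially after averaging over the masses by Lemma \ref{localizationlemma}, while the low-mode part is controlled by the quantitative lower bound $\omega_{k_*}\geq n^{-3\gamma/2}$ of Lemma \ref{boundomegalemma} combined with the smoothness and extended character of the low eigenvectors, yielding the polynomial rate required for $a_1\geq 2$ uniformly in the disorder realization.
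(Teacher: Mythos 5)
Your reduction of \eqref{decayassum} via Wick's theorem for the quasi-free state $\Roo$ (odd moments vanish, four-point function reduces to crossed pairings of two-point functions) matches the paper's first step exactly, and your treatment of \eqref{avgassumrp} and \eqref{avgassume} is essentially consistent with what the paper does (it cites \cite{A20} for both). The genuine gap is in your strategy for the two-point decay, where you diverge from the paper and your proposal would not close.

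First, you claim the varying $\beta$ profile ``prevents a simultaneous diagonalization.'' It does not: the paper diagonalizes $H^n_\beta$ directly via the temperature-dressed matrices $A_p^\beta = M_\beta^{-1/2}(-\nabla_-\beta^o\nabla_+)M_\beta^{-1/2}$ and $A_r^\beta$, yielding bosonic modes for $H^n_\beta$ itself (not $\frH_n$), and then writes the covariance exactly as $\expval{\tilde{\ssp}_x\tilde{\ssp}_y}_{\Roo} = \langle x, \frf(A_p^\beta)\, y\rangle_n$ with $\frf(z)=z^{1/2}\coth(z^{1/2})$. Second, you assert that $G_\beta(\omega)=\omega\coth(\beta\omega/2)/2$ ``behaves singularly as $\omega\to 0$''; this is false---the singularity of $\coth$ at the origin cancels the prefactor and $\frf(0)=1$, so $\frf$ is analytic in a disk $|z-\alpha|<\alpha+\pi^2$ containing the whole spectrum of $A_p^\beta$ (which is bounded deterministically since masses are in $[m_-,m_+]$ and $\beta$ is bounded). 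There is no low-mode pathology to handle. Combining this analyticity with the tridiagonal structure of $A_p^\beta$ (so $\langle x,(A_p^\beta-\alpha I)^k y\rangle=0$ for $k<|x-y|-1$) gives \emph{exponential} decay $|\expval{\tilde{\ssp}_x\tilde{\ssp}_y}_{\Roo}|\le C\mathfrak q^{|x-y|}$ \emph{for every realization of the masses}, which is what Assumption \eqref{decayassum} requires.

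Your alternative---splitting into low and high modes, invoking Lemma \ref{localizationlemma} for the high modes and Lemma \ref{boundomegalemma} plus ``extended character of low eigenvectors'' for the low modes---fails on two counts. The localization estimate \eqref{localizationestimate} is a bound \emph{in expectation} over the masses, while \eqref{decayassum} must hold realization by realization (the paper notes one could replace this with bounds in expectation, but then stronger decay rates are required, and you would still need to set up that substitute version). And the low modes are genuinely delocalized: $\psi^0_x$ is essentially flat, so the low-mode contribution to the covariance does not decay in $|x-y|$ at all, and the lower bound on $\omega_{k_*}$ gives you no help with spatial decay. The resolution is precisely the removable singularity of $\frf$ that you missed: once you know $\frf$ is entire on a disk covering the spectrum, the analytic functional calculus handles all modes simultaneously and deterministically, with no need for localization at all.
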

	
	Before proceeding, let us briefly mention that in the assumption \eqref{decayassum}, the estimates need to hold for all realizations of the masses. 
	However, it is possible to show that they could be replaced by estimates on the expectations (w.r.t $\mu$) of the same quantities, at the cost of stronger decay rates. 

	\subsection{Evolution of $r,p$}
	 \subsubsection{Time evolution}
	 Since our system is harmonic, equation of motion \eqref{eqofmotionquantum} can be 
	 solved similarly to the classical case \eqref{soleqmotion}. Let us briefly recall the 
	 desired transformations in this case from \cite{A20}, mainly to fix proper notations.
	 Recall the random matrices $A_p$, and $A_r$ \eqref{mainmatrices}, and their 
	 corresponding set of eigenvalues and eigenvectors ($0< \omega_1^2< \dots <\omega_{n-1}
	 ^2$ and $\{\varphi^k \}_{k=0}^n$ for the former and $ \omega_1^2< \dots 
	 \omega_{n-1}^2$ and $\{\phi^k \}_{k=1}^n$ for the latter). Define the following set of 
	 operators (recall that $\expval{,}_n$ is the inner product in $\bR^n$):
	 \begin{equation} \label{hatQ1}
	 \hat{\ssp}_k:= \langle \varphi^k,M^{-\frac12} \ssp \rangle_n= \sum_{x=1}^n
	 \frac{\varphi^k_x}{\sqrt{m_x}} \ssp_x, \quad \forall k \in \bI_{n-1}^o; \quad 
	 \hat{\ssr}_k:= \langle \phi^k, \ssr \rangle_{n-1}=\sum_{x=1}^{n-1}\phi^k_x 
	 \ssr_x, 
	 \quad \forall k \in \bI_{n-1}.
	 \end{equation}
	After a straightforward computation (cf. Sect.3 of \cite{A20}), we have the following
	commutation relations for $k,k' \in \bI_{n-1}^o$ (by convention $\hat{\ssr}_0=0$): 
	\begin{equation} \label{CCRhat}
	[\hat{\ssr}_k,\hat{\ssr}_{k'}]=[\hat{\ssp}_k,\hat{\ssp}_{k'}]=0, 
	\quad [\hat{\ssr}_k,\hat{\ssp}_{k'}]=i\omega_k \delta_{k,k'}.
	\end{equation}	 
	Consequently, we define the corresponding bosonic operators 
	$\hat{\ssb}_k$, $\hat{\ssb}_k^*$, 
	having suitable commutation relations (thanks to \eqref{CCRhat})
	as follows: for any  $k,k' \in \bI_{n-1}$:
	\begin{equation} \label{bosonic}
	\begin{split}
	&\hat{\ssb}_k := \frac{1}{\sqrt{2 \omega_k}}(\hat{\ssr}_k+ i\hat{\ssp}_k), \quad 
	\hat{\ssb}_k^* := \frac{1}{\sqrt{2 \omega_k}}(\hat{\ssr}_k- i\hat{\ssp}_k),\\
	&[\hat{\ssb}_k, \hat{\ssb}_{k'}^*]= \delta_{k,k'}, \quad 
	[\hat{\ssb}_k, \hat{\ssb}_{k'}]=[\hat{\ssb}_k^*, \hat{\ssb}_{k'}^*]=0.
	\end{split}	
	\end{equation}
		Thanks to \eqref{hatQ1} and \eqref{bosonic}, and by using the definition
		of $\varphi^k$, $\phi^k$, we can rewrite the Hamiltonian 
	$\mathfrak{H}_n$ in terms of bosonic operators:
	\begin{equation} \label{hamboson}
		\mathfrak{H}_n = \frac12 \sum_{x=1}^n(\frac{\ssp_x^2}{m_x}+\ssr_x^2)=
		\frac12 \hat{\ssp}_0 + \frac12 \sum_{k=1}^{n-1}(\hat{\ssp}_k^2 +\hat{\ssr}_k^2)
		=\frac12 \sum_{k=1}^{n-1}(\hat{\ssp}_k^2 +\hat{\ssr}_k^2)
		=\sum_{k=1}^{n-1} \omega_k(\hat{\ssb}_k^* \hat{\ssb}_k+\frac12),
	\end{equation}		
	where we used the fact that $\ssp_0=\sum_{x=1}^n \ssp_x=0$. Notice that thanks to the
	above expression, one could obtain the spectrum of $\mathfrak{H}_n$ explicitly,
	which is a discrete spectrum (cf. Sect.3 of \cite{A20}). In addition, 
	thanks to the above 
	representation of $\mathfrak{H}_n$, and commutation relations \eqref{bosonic}, 
	the time evolution of the bosonic operators (action of of $\tau^n_t$ i.e., solution to the 
	Heisenberg equation \eqref{heisenberg}) is given by: 
	 \begin{equation} \label{timebosonic}
	 \hat{\ssb}_k(t)= e^{- i \omega_k t} \hat{\ssb}_k(0), \quad 
	 \hat{\ssb}_k^*(t)= e^{i \omega_k t} \hat{\ssb}_k^*(0).
	 \end{equation}
	By using the inverse of \eqref{bosonic}, we get for $k\in \bI_{n-1}$: 
	\begin{equation} \label{kevolution}
	\hat{\ssp}_k(t) =\cos(\omega_kt) \hat{\ssp}_k(0)- \sin(\omega_kt) \hat{\ssr}_k(0),
	\quad
	\hat{\ssr}_k(t) =\cos(\omega_kt) \hat{\ssr}_k(0)+ \sin(\omega_kt) \hat{\ssp}_k(0)
	\end{equation}	 
	Plugging the latter inside the inverse of \eqref{hatQ1}, the time evolution of 
	$p_x$, $r_x$ is given by (recall the definition of $\hat{\ssp}$, 
	$\hat{\ssr}$ \eqref{hatQ1}): 
	\begin{equation} \label{Qevolution}
\begin{split}	
	&\ssp(t)=\sum_{k=0}^{n-1}  M^{\frac12} \varphi^k \hat{\ssp}_k (t) = \sum_{k=0}^{n-1} \big(\cos(\omega_kt)\hat{\ssp}_k(0)-\sin(\omega_kt) \hat{\ssr}_k(0)\big)M^{\frac12} \varphi^k, \\
	&\ssr(t)=\sum_{k=1}^{n-1} \phi^k \ssr_k(t)=\sum_{k=1}^{n-1} \big(\cos(\omega_kt)\hat{\ssr}_k(0)+\sin(\omega_kt) \hat{\ssp}_k(0)\big)\phi^k.
\end{split}
\end{equation}
	These equations are formally similar to their classical counterpart \eqref{soleqmotion}, but $\ssr$ and $\ssp$ are (non-commuting) operators rather than numbers.

\subsubsection{Proof for $r,p$}
In this section, we prove the first two limits in Theorem \ref{mainthmQ}, i.e.\@ \eqref{highermomentclassicalrQ} and \eqref{highermomentclassicalpQ}. 
Similarly to the classical case we use the explicit time evolution \eqref{Qevolution}, as well as the localization estimates \eqref{localizationestimate}. 
However, our state is no longer Gaussian and simple correlation identities \eqref{cov1}, \eqref{covkk'} need to be replaced by \eqref{decayassum}.

\begin{proof} [Proof of \eqref{highermomentclassicalrQ} and \eqref{highermomentclassicalpQ}.]  
Recall that the state $\roo$ satisfies assumptions \eqref{avgassumrp} to \eqref{decayassum}. 
Let us define the \emph{average} variables 
$\bar{\ssp}_x(nt)$, $\bar{\ssr}_x(nt)$, and \emph{fluctuation} operators $\tilde{\ssp}_x(nt)$, $\tilde{\ssr}_x(nt)$  as follows for any $x \in \bI_n$: 
\begin{equation}\label{avgfluc}
\begin{split}
\bar{\ssp}_x(nt):= \expval{\ssp_x(nt)}_{\roo}, \quad \bar{\ssr}_x(nt):= \expval{\ssr_x(nt)}_{\roo};  \\
\tilde{\ssp}_x(nt) := \ssp_x(nt)-\bar{\ssp}_x(nt), \quad  
 \tilde{\ssr}_x(nt) := \ssr_x(nt)-\bar{\ssr}_x(nt).
\end{split}
\end{equation} 
Thanks to the linearity of the trace, and the evolution equation \eqref{eqofmotionquantum},
$(\bar{\ssp}(t)$, $\bar{\ssr}(t)) \in \bR^{2n-1}$ 
satisfies the following system of ODEs: 
\begin{equation} \label{ODEQ}
\frac{d}{dt} \bar{\ssp}_x(t) = \bar{\ssr}_x(t)-\bar{\ssr}_{x-1}(t), \quad
\frac{d}{dt} \bar{\ssr}_x(t)=\frac{\bar{\ssp}_{x+1}(t)}{m_{x+1}} - \frac{\bar{\ssp}_x(t)}
{m_x},
\end{equation}
where the initial data are given by $\bar{\ssr}_x(0)=\bar{r}(\frac{x}{n})$, and $\bar{\ssp}_x(0)=\frac{m_x}{\bar{m}} \bar{p}(\frac{x}{n})+ \epsilon^x_n$ for all $x$, thanks to \eqref{avgassumrp}.
In \cite{A20}, it is shown that 
\begin{equation} \label{hydrolimitQrp}
 \lim_{n \to \infty} \frac1n \sum_{x=1}^n 
 \bar{\ssp}_x(nt) f(\frac{x}{n}) \to \int_{0}^1 \fp(y,t) f(y) dy, \quad
  \lim_{n \to \infty} \frac1n \sum_{x=1}^n 
 \bar{\ssr}_x(nt) f(\frac{x}{n}) \to \int_{0}^1 \fr(y,t) f(y) dy,
\end{equation} 
i.e.\@ the limit on average \eqref{eq: limit on average}.
The initial conditions in \cite{A20} are not the same as here, but the difference vanishes in the limit. 
	As in the classical case, we can add and subtract  
	$\frac1n \sum_{x=1}^n f(\frac{x}{n}) \expval{\ssp_x(nt)}_{\roo}$ inside the square in \eqref{highermomentclassicalpQ}, and similarly for $r$. 
	Then by using the basic properties of the trace (linearity and scalar multiplication), and limits \eqref{hydrolimitQrp}, 
	the proof of \eqref{highermomentclassicalrQ} and \eqref{highermomentclassicalpQ} boils down to
	proving that 
	\begin{equation} \label{eqQ:proof3}
		\mathsf{R}_n:=\left\langle \left(\frac1n \sum_{x=1}^n f(\frac{x}{n})\tilde{\ssr}_x(nt) \right)^2
	\right\rangle_{\roo} \to 0, \quad 
	\mathsf{P}_n:=\left\langle \left(\frac1n \sum_{x=1}^n f(\frac{x}{n})
	\tilde{\ssp}_x(nt) \right)^2
	\right\rangle_{\roo} \to 0, 
		\end{equation}
	as $n \to \infty$, almost surely w.r.t the distribution of the masses, where one should 
	recall the fluctuation operators $\tilde{\ssp}_x, \tilde{\ssr}_x$ \eqref{avgfluc}.
	
	The evolution of $\tilde{\ssp}(nt)$ and $\tilde{\ssr}(nt)$ can be obtained explicitly from \eqref{eqofmotionquantum} and \eqref{ODEQ}.
	For $k \in \bI_n^o$, let 
	 \begin{equation} \label{Qtildenitial}
	 \hat{\tilde{\ssp}}_k= \hat{\tilde{\ssp}}_k(0):=\langle \varphi^k, M^{-\frac12} 
	 \tilde{\ssp} 
	 \rangle_n, \quad  
	 \hat{\tilde{\ssr}}_k:= \langle \phi^k, \tilde{\ssr} \rangle_{n-1}, 
	\end{equation}	 
	where the vectors $\tilde{\ssp}$ and $\tilde{\ssr}$ are defined in \eqref{tildedef}, 
	\eqref{avgfluc}. 
	Then $\tilde{\ssr}_x(t)$, $\tilde{\ssp}_x(t)$ is given by:
	 \begin{equation}\label{Qtildesol}
	\begin{split}	 
	& \tilde{\ssr}_x(t)= \sum_{k=1}^{n-1}  \phi^k_x 
	\hat{\tilde{\ssr}}_k(t) = \sum_{k=1}^{n-1}
	\left(\cos(\omega_kt) \hat{\tilde{\ssr}}_k(0) + \sin(\omega_kt) \hat{\tilde{\ssp}}_k(0) \right) 
	\phi^k_x, \\
	& \tilde{\ssp}_x(t)=\sum_{k=0}^{n-1} \sqrt{m_x} \varphi^k_x \hat{\tilde{\ssp}}_k(t)=
	\sum_{k=0}^{n-1} \left(\cos(\omega_kt) \hat{\tilde{\ssp}}_k(0) - \sin(\omega_kt) 
	\hat{\tilde{\ssr}}
	_k(0) \right)\sqrt{m_x} \varphi^k_x,
		\end{split}	 
	 \end{equation} 
	where $\hat{\tilde{\ssr}}_k(t)$, and $\hat{\tilde{\ssr}}_k(t)$ are
	implicitly defined. 
	
	Before proceeding, we need bounds similar to \eqref{initialboundt} in the quantum case.
	This can be done thanks to the first bound in \eqref{decayassum}:
	\begin{multline} \label{Qpkbound}
		\expval{\hat{\tilde{\ssp}}_k^2}_{\roo}= \sum_{x,y=1}^n 
		\frac{\varphi^k_x \varphi^k_y}{\sqrt{m_xm_y}} \expval{\tilde{\ssp}_x
		\tilde{\ssp}_y}
		\leq \sum_{x=1}^n \frac{(\varphi^k_x)^2}{m_x} \expval{\tilde{\ssp}_x^2} +
		\sum_{l=1}^{n-1} \sum_{|x-y|=l} \left| \frac{\varphi^k_x \varphi^k_y}{
		\sqrt{m_xm_y}} \right| \left| \expval{\tilde{\ssp}_x \tilde{\ssp}_y} \right| \\ 			
		\le C_1 \sum_{x=1}^n (\varphi^k_x)^2+ \sum_{l=1}^{n-1} \frac{C_2}{l^2} 
		\sum_{|x-y|=l}  \left|\varphi^k_x \varphi^k_y \right| \leq C_1 + 
		\sum_{l=1}^{n-1} \frac{C_3}{l^2}  \leq C,	
	\end{multline}	
	where we have dropped the subscript $\roo$ on the r.h.s.
	To get this, we used the definition of $\hat{\tilde{\ssp}}_k$ \eqref{Qtildenitial}, we used \eqref{decayassum} in the second line, 
	we took advantage of the following Cauchy-Schwartz inequality,
	$$\sum_{|x-y|=l} |\varphi^k_x \varphi^k_y|=
	2\sum_{x=1}^{n-l} |\varphi^k_x\varphi^k_{x+l}| \leq 
	2 \left(\sum_{x=1}^{n-l} (\varphi^k_x)^2 \right)^{\frac12}
	 \left(\sum_{x=1}^{n-l} (\varphi^k_{x+l})^2 \right)^{\frac12} 
	 \leq 2\sum_{x=1}^n (\varphi^k_x)^2 \leq 2, $$
	and finally, in the last inequality of \eqref{Qpkbound}, we used the fact that 
	the featuring series is summable. Notice that the constant $C>0$ in \eqref{Qpkbound}
	is independent of $n$ and the realization of the masses. Similarly for $r$, 
	by an exact 
	identical computation, and using corresponding definition and bounds 
	\eqref{Qtildenitial},\eqref{decayassum} we have: 
	there exist $C'>0 $ such that for all $n$ and any
	 realization of the masses we have:
	 \begin{equation} \label{Qrkbound}
	 \expval{\hat{\tilde{\ssr}}_k^2}_{\roo} \leq C'.
\end{equation}	 	 
In the proceeding, we follow somewhat similar steps as in the classical setup, where each
step is modified due to the quantum nature:

\paragraph{Step1: Low modes.}
For a proper $\gamma$ with $0<\gamma<\frac12$, define $\tilde{\ssp}^o(nt)$, and 
$\tilde{\ssr}^o(nt)$ as low modes portion of $\ssp(nt)$, and $\ssr(nt)$ (we
will choose $\gamma$ opportunely later): for $x \in \bI_n$:
\begin{equation} \label{Qlowmodedef}
\tilde{\ssp}^o_x(nt):= \sum_{k \in \bZ \cap [0,n^{(1-\gamma)}]} \hat{\tilde{\ssp}}_k 
			(nt) \sqrt{m_x} \varphi_x^k, \quad 	  
			\tilde{\ssr}^o_x(nt):= \sum_{k \in \bZ \cap [1,n^{(1-\gamma)}]} 
			\hat{\tilde{\ssr}}_k 
			(nt)  \phi_x^k,
\end{equation}
where definition of $\hat{\tilde{\ssp}}_k(nt)$, $\hat{\tilde{\ssr}}_k(nt)$ 
should be recalled from 
\eqref{Qtildesol}, and by convention we set $\tilde{\ssr}_n(nt)=0$. Here we prove that: 
\begin{equation} \label{Qlowmodelemm}
\mathsf{L}_n:= \expval{\frac1n\left(\sum_{x=1}^n f(\frac{x}{n}) \tilde{\ssp}_x^o(nt)  \right)^2}_{\roo} \to 0, \quad \mathsf{L'}_n:= \expval{\frac1n\left(\sum_{x=1}^n f(\frac{x}{n}) \tilde{\ssr}_x^o(nt)  \right)^2}_{\roo} \to 0,
\end{equation}
almost surely w.r.t the distribution of the masses.

Before proceeding, thanks to the above mentioned assumptions, one can check that 
$(\roo)^{\frac12}a$ is a Hilbert-Schmidt operator for any operator $a$ that is a linear or quadratic function of $\mathsf p_x$ and $\mathsf r_x$. 
Therefore, it is straightforward to see that we can use the following form of Cauchy-Schwartz inequality for our desired operators: 
\begin{equation}\label{CSineq}
\left|\expval{ab^*}_{\roo} \right|^2 \leq \expval{aa^*}_{\roo} \expval{bb^*}_{\roo}.
\end{equation}
Later, we will simply use this inequality, 
and  we will not mention its legitimacy anymore which is easy to check in each case.

Following exactly the same steps as in \eqref{lowmodecomp} yields:
\begin{equation} \label{Qlowmodecomp}
0 \leq \mathsf{L}_n \leq \frac{C}{n} \sum_{k \in \tilde{I}(\gamma)}\expval{(
\hat{\tilde{\ssp}}_k(nt))^2}_{\roo},
\end{equation}
where $C>0$ is a constant independent of $n$ and the realization of the masses. 
Notice that the only difference between \eqref{Qlowmodecomp} and  \eqref{lowmodecomp} is the fact that 
in the former we should use the operator form of the Cauchy-Schwartz inequality \eqref{CSineq}.  
By using the definition of $\hat{\tilde{\ssp}}_k(nt)$ from \eqref{Qtildesol}, thanks to \eqref{Qpkbound}, \eqref{Qrkbound} and a Cauchy-Schwartz
 inequality, we have for any $k \in \bI_{n-1}^o$ 
 \begin{equation}\label{Qtrpkbound}
 \expval{(\hat{\tilde{\ssp}}_k(nt))^2}_{\roo} \leq C, \quad
  \expval{(\hat{\tilde{\ssr}}_k(nt))^2}_{\roo} \leq C,
\end{equation}  
where $C>0$ is independent of $n$, and the distribution of the masses. Combining 
\eqref{Qtrpkbound} with \eqref{Qlowmodedef} and recalling the fact that 
$\tilde{I}(\gamma)=[0,n^{1-\gamma}] \cap \bZ$ with $\gamma>0$ we deduce that 
$\mathsf{L}_n \to 0$ as $n \to \infty$. Again, a similar computation, using the
definition of $\hat{\tilde{\ssr}}^o$, the fact that $\phi^k$ is an orthonormal basis,
 and the second bound in \eqref{Qtrpkbound} yield  $\mathsf{L}_n'\to 0$ 
 deterministically. This finishes the proof of this step, i.e.\@ \eqref{Qlowmodelemm}.
 
\paragraph{Step2: High modes.}
As in \eqref{highmodedef}, define the ``high mode operators'' for $x  \in \bI_n$: 
\begin{equation} \label{Qhighmodedef}
\tilde{\ssp}_x^{\bullet}(nt):= \tilde{\ssp}_x(nt) - \tilde{\ssp}_x^o(nt)= \sum_{k \in 
I(\gamma)} \hat{\tilde{\ssp}}_k(nt) \sqrt{m_x}\varphi^k_x, \quad
\tilde{\ssr}_x^{\bullet}(nt):= \tilde{\ssr}_x(nt) - \tilde{\ssr}_x^o(nt)= \sum_{k \in 
I(\gamma)} \hat{\tilde{\ssr}}_k(nt) \sqrt{m_x}\phi^k_x,
\end{equation}
	where we choose $0<\gamma<\frac12$  opportunely later, and $I(\gamma)= ] n^{(1-
	\gamma),n}] \cap \bZ$. Moreover, by convention we have $\tilde{\ssr}_n(nt)=0$. 
	 Following the lines of the classical case, for a proper choice of 
	 $0<\gamma<\frac12$ we prove:
	 \begin{equation} \label{Qhighmodeleprp}
	 \mathsf{U}_n := \expval{\left(\frac1n \sum_{x=1}^n f(\frac{x}{n})
	 \tilde{\ssp}^{\bullet}_x(nt) \right)^2}_{\roo} \to 0, 
	 \quad  \mathsf{U'}_n := \expval{\left(\frac1n \sum_{x=1}^n f(\frac{x}{n})
	 \tilde{\ssr}^{\bullet}_x(nt) \right)^2}_{\roo} \to 0,
	 \end{equation}
	 almost surely, w.r.t distribution of the masses.
	 
	 \paragraph{Step 2.1: High modes $|x-y|>4n^{\theta'}$.}
	 Let $0<2\gamma<\theta<\theta'<1$. Fix $n$, and consider $x,y \in \bI_n$ such that 
	 $|x-y|>4n^{\theta'}$. Let us bound 
	 $|\expval{\tilde{\ssp}_x^{\bullet}(nt)\tilde{\ssp}_y^{\bullet}(nt)}_{\roo}|
	 /{m_xm_y}$ (recall $\tilde{\varphi}^k_x=\frac{\varphi^k_x}{\sqrt{m_x}}$):
	 \begin{equation} \label{Qeq:1}
	\begin{split}	 
	 &\frac{|\expval{\tilde{\ssp}_x^{\bullet}(nt)\tilde{\ssp}_y^{\bullet}(nt)}_{\roo}|}
	 {m_xm_y}= \left| \sum_{k,k' \in I(\gamma)} 
	 \expval{\hat{\tilde{\ssp}}_k(nt)
	 \hat{\tilde{\ssp}}_{k'}(nt)}_{\roo} \tilde{\varphi}^k_x 
	 \tilde{\varphi}^{k'}_y \right|\leq \\  
	 &\underbrace{\left|\sum_{k,k' \in I(\gamma)} \sin(\omega_k nt) \sin(\omega_{k'}nt) 
	 \tilde{\varphi}^k_x 
	 \tilde{\varphi}^{k'}_y 
	 \expval{\hat{\tilde{\ssr}}_k
	 \hat{\tilde{\ssr}}_{k'}}_{\roo}\right|}_{=: \mathsf{C}_{xy}^{rr}} + 
	 \underbrace{\left|\sum_{k,k' \in I(\gamma)} \cos(\omega_k nt) \sin(\omega_{k'}nt) 
	 \tilde{\varphi}^k_x 
	 \tilde{\varphi}^{k'}_y 
	 \expval{\hat{\tilde{\ssp}}_k
	 \hat{\tilde{\ssr}}_{k'}}_{\roo}\right|}_{=: \mathsf{C}_{xy}^{pr}}+ \\
	 &\underbrace{\left|\sum_{k,k' \in I(\gamma)} \sin(\omega_k nt) \cos(\omega_{k'}nt) 
	 \tilde{\varphi}^k_x 
	 \tilde{\varphi}^{k'}_y 
	 \expval{\hat{\tilde{\ssr}}_k
	 \hat{\tilde{\ssp}}_{k'}}_{\roo}\right|}_{=: \mathsf{C}_{xy}^{rp}} +
	 \underbrace{\left|\sum_{k,k' \in I(\gamma)} \cos(\omega_k nt) \cos(\omega_{k'}nt) 
	 \tilde{\varphi}^k_x 
	 \tilde{\varphi}^{k'}_y 
	 \expval{\hat{\tilde{\ssp}}_k
	 \hat{\tilde{\ssp}}_{k'}}_{\roo}\right|}_{=: \mathsf{C}_{xy}^{pp}},
	 \end{split}
	\end{equation}	  
	where we used the definition of $\tilde{\ssp}^{\bullet}_x(nt)$ \eqref{Qhighmodedef},
	as well as the definition of $\hat{\tilde{\ssp}}_k(nt)$ \eqref{Qtildesol}. Notice the
	definition of $\mathsf{C}^{pp}_{xy}$, $\mathsf{C}^{pr}_{xy}$, $\mathsf{C}^{rp}_{xy}$,
	and $\mathsf{C}^{rr}_{xy}$ in the above expression. Let us bound 
	$\mathsf{C}^{pp}_{xy}$. The other terms can be treated similarly.
	Define 
	$$\tilde{J}(x,y):= \{(z,z') \in \bI_n \times \bI_n \: \big| \: |x-z|\leq n^{\theta},
	|y-z'|\leq n^{\theta} \}.$$ 
	Plugging the definition of $\hat{\tilde{\ssp}}_k$ in $\mathsf{C}_{xy}^{pp}$, 
	rearranging the terms, using the linearity of trace,and bounding $|\cos(\cdot)|$ by 
	one we 
	have:
	\begin{equation} \label{Qcov1}
	\begin{split}
	&0 \leq \mathsf{C}^{pp}_{xy}=
	 \left|\sum_{z,z'=1}^n  \left(\sum_{k \in I(\gamma)} \cos(\omega_knt) 
	 \tilde{\varphi}^k_x \tilde{\varphi}^k_z \right)  
	 \left(\sum_{k' \in I(\gamma)} \cos(\omega_{k'}nt) 
	 \tilde{\varphi}^{k'}_y \tilde{\varphi}^{k'}_{z'} \right)
	 \expval{\tilde{\ssp}_z \tilde{\ssp}_{z'}}_{\roo}  \right|  \leq \\
	 &\underbrace{	
	 \sum_{(z,z')\in \tilde{J}(x,y)} \left(\sum_{k \in I(\gamma)}
	 	|\tilde{\varphi}^{k}_x \tilde{\varphi}^k_z| \right) 
	 	\left(\sum_{k' \in I(\gamma)}
	 	|\tilde{\varphi}^{k'}_y \tilde{\varphi}^{k'}_{z'}| \right) 
	 	\left|\expval{\tilde{\ssp}_z\tilde{\ssp}_{z'}}_{\roo} \right|}_{\mathsf{C}_{xy}^{pp,1}} + \\
	 	& \underbrace{	
	 \sum_{(z,z')\notin \tilde{J}(x,y)} \left(\sum_{k \in I(\gamma)}
	 	|\tilde{\varphi}^k_x \tilde{\varphi}^k_z| \right) 
	 	\left(\sum_{k' \in I(\gamma)}
	 	|\tilde{\varphi}^{k'}_y \tilde{\varphi}^{k'}_{z'}| \right) 
	 	\left|\expval{\tilde{\ssp}_z\tilde{\ssp}_{z'}}_{\roo} \right|}_{\mathsf{C}_{xy}^{pp,2}} 
	\end{split}
	\end{equation}
	Let us first bound $\mathsf{C}_{xy}^{pp,1}$. Each term involving sum over $k$ and 
	$k'$ is bounded by $1$ as we observed  in \eqref{eq:5}. Hence we have: 
	\begin{equation}  \label{Qcov2}
	0 \leq \mathsf{C}_{xy}^{pp,1} \leq \sum_{(z,z') \in \tilde{J}(x,y)}
	\left| \expval{\tilde{\ssp}_z \tilde{\ssp}_{z'}}_{\roo} \right| \leq 
	\sum_{(z,z')\in \tilde{J}(x,y)} \frac{C}{|z-z'|^2}\leq  
	\sum_{(z,z')\in \tilde{J}(x,y)} \frac{C}{n^{2\theta'}} =
	|\tilde{J}(x,y)| \frac{C}{n^{\theta'}} \leq  \frac{8C}{n^{2(\theta'-\theta)}},
	\end{equation}
	where in the third inequality we take advantage of the first bound in \eqref{decayassum}; moreover, in the fourth inequality we use 
	the fact that  
	 $|x-y|> 4n^{\theta'}$, and thanks to the definition of $\tilde{J}(x,y)$, for each 
	term in the above sum $|z-z'|>n^{\theta'}$. In the last inequality the definition 
	of  $\tilde{J}(x,y)$ is used.
	
	Now  we treat $\mathsf{C}_{xy}^{pp,2}$. Recall he definition of 
	$J(x)= \{ z \in \bI_n \: | \: |x-z| \leq n^{\theta} \}$. Thus
	for $n$ sufficiently large we get 
	\begin{equation}\label{Qeq:4}
	\begin{split}
	 &0 \leq \bE(\mathsf{C}_{xy}^{pp,2})\leq C 
	 \bE\left(\sum_{z \notin {J}(x)}\sum_{z'=1}^n
	\left(\sum_{k \in I(\gamma)}
	 	|\tilde{\varphi}^k_x \tilde{\varphi}^k_z| \right) 
	 	\left(\sum_{k' \in I(\gamma)}
	 	|\tilde{\varphi}^{k'}_y \tilde{\varphi}^{k'}_{z'}| \right)\right) + \\
	 	&C \bE\left( \sum_{z \in {J}(x)}\sum_{z'\notin J(y)}
	\left(\sum_{k \in I(\gamma)}
	 	|\tilde{\varphi}^k_x \tilde{\varphi}^k_z| \right) 
	 	\left(\sum_{k' \in I(\gamma)}
	 	|\tilde{\varphi}^{k'}_y \tilde{\varphi}^{k'}_{z'}| \right) \right)
	 	\leq C \bE \left(\sum_{(z,z')\in J^c(x) \times \bI_n}
	 	\left(\sum_{k \in I(\gamma)}
	 	|\tilde{\varphi}^k_x \tilde{\varphi}^k_z| \right)  \right) + \\
	 	& C \bE \left( \sum_{(z,z') \in J(x)\times J^c(y)} 
	 	\left(\sum_{k' \in I(\gamma)}
	 	|\tilde{\varphi}^{k'}_y \tilde{\varphi}^{k'}_{z'}| \right) \right)
	 	\leq 2n^2 C' \exp(-c (n^{\theta-n^{2 \gamma}})) \leq 
	 	 C' \exp(-\frac{c}2 n^{\theta}), 
	\end{split}
	\end{equation}
where first we bounded $|\expval{\tilde{\ssp}_z\tilde{\ssp}_{z'}}_{\roo}| \leq C$,
thanks to assumption \eqref{decayassum}, where $C>0$ is independent of $n$ and the 
realization of the masses; in the second line we take advantage of \eqref{eq:5}. In the  
third line  we used the properties of $J(x)$, $J(y)$ and the bound 
\eqref{localizationestimate} from Lemma 
\ref{localizationlemma}. Finally, in the last inequality we take advantage of the
choice of $0<2\gamma<\theta<1$ and the fact that $n$ is sufficiently large. Notice
that $C'>0$ and $c>0$ are independent of $n$.

Thanks to \eqref{Qeq:4}, for $n$ sufficiently large we have:
\begin{equation} \label{Qeq:5}
\bE \left(\frac{1}{n^2} \sum_{|x-y|>4 n^{\theta'}} \mathsf{C}_{xy}^{pp,2} \right) \leq 
C' \exp(-\frac{cn^{\theta}}{2}) \implies \frac{1}{n^2} \sum_{|x-y|>4 n^{\theta'}} \mathsf{C}_{xy}^{pp,2} \to 0,
\end{equation}  
almost surely w.r.t the distribution of the masses, thanks to Borel-Cantelli lemma (recall
that $C>0$, $c'>0$ are independent of $n$). Combining \eqref{Qcov2} and \eqref{Qeq:5},
using the fact that $\theta'>\theta$, and recalling the expression of  $\mathsf{C}_{xy}^{pp}$
 \eqref{Qcov1} we deduce:
 \begin{equation} \label{Qeq:6}
 \frac{1}{n^2} \sum_{|x-y|>n^{4\theta'}} \mathsf{C}_{xy}^{pp} \to 0,
\end{equation}  
almost surely w.r.t the distribution of the masses. We can prove similar identities for
$\sum \mathsf{C}_{xy}^{pr}$,  $\sum \mathsf{C}_{xy}^{pr}$, and $\sum \mathsf{C}_{xy}^{rr}$. To this end first we
should 
decompose them as in \eqref{Qcov1}. Then $\sum \mathsf{C}_{xy}^{pr,1}$ (and other similar terms)
can be treated exactly similar to \eqref{Qcov2}. However, the other terms such as 
$\sum \mathsf{C}_{xy}^{pr,2}$ need more attention: we should use Lemma \ref{boundomegalemma}
to treat this terms as we did in \eqref{eq:7}. Since these lengthy computation follow 
the exact same lines of the previous cases, and they do not contain any novel point,
 we omit them for the sake of brevity. Therefore, thanks to the \eqref{Qeq:1}, by
  bounding $f(\frac{x}{n})$ and $m_x$ by uniform constants, we have: 
 \begin{equation} \label{Qhighbound1}
 	\left|\frac{1}{n^2} \sum_{|x-y|>4n^{\theta'}} f(\frac{x}{n}) f(\frac{y}{n})
 	\expval{\tilde{\ssp}_x^{\bullet}(nt) \tilde{\ssp}_y^{\bullet}(nt)}_{\roo} \right|
 	\leq  \frac{C}{n^2} \sum_{|x-y|>4n^{\theta'}} 
 	\left(\mathsf{C}_{xy}^{pp}+\mathsf{C}_{xy}^{pr} + \mathsf{C}_{xy}^{rp}
 	+\mathsf{C}_{xy}^{rr}   \right) \to 0,
 \end{equation}
 almost surely w.r.t the distribution of the masses.
 Following same steps one can show that
 \begin{equation} \label{Qhighbound22}
 \left|\frac{1}{n^2} \sum_{|x-y|>4n^{\theta'}} f(\frac{x}{n}) f(\frac{y}{n})
 	\expval{\tilde{\ssr}_x^{\bullet}(nt) \tilde{\ssr}_y^{\bullet}(nt)}_{\roo} \right|\to 0
\end{equation}
almost  surely w.r.t.the distribution of the masses. Since this proof is identical, we do 
not bring it here.  This finishes the proof of this step.

\paragraph{Step 2.2: High modes $|x-y|\leq 4n^{\theta'}$.}
Let us bound $\frac{1}{m_x}\expval{(\tilde{\ssp}_x^{\bullet}(nt))^2}_{\roo}$: recall 
the expression \eqref{Qeq:1}, let $x=y$, and bound $(m_x)^{-1}$ by a constant; then we have:
\begin{equation} \label{Qhighbound2}
\frac{\expval{(\tilde{\ssp}_x^{\bullet}(nt))^2}_{\roo}}{m_x} \leq m_x(\mathsf{C}_{xx}^{pp}+
\mathsf{C}_{xx}^{pr}+\mathsf{C}_{xx}^{rp}+\mathsf{C}_{xx}^{rr}), 
\end{equation}
  where definition of $\mathsf{C}_{xx}^{\sharp_1 \sharp_2}$ for 
   $\sharp_1,\sharp_2\in \{r,p\}$ should be recalled from \eqref{Qeq:1}. Let us bound 
   $m_x\mathsf{C}_{xx}^{pr}$ by a constant uniform in $n$ and realization of the masses:
   Similar to \eqref{Qcov1}, by using the definition of $\hat{\tilde{\ssp}}_k$, and 
   $\hat{\tilde{\ssr}}_k$ we get (recall the convention $\tilde{\ssr}_n= \phi^k_n=0$)
   \begin{equation} \label{Qeq:7}
   \begin{split}
		&m_x\mathsf{C}_{xx}^{pr}= \left|\sum_{z,z'=1}^n 
		\underbrace{\left( \sum_{k \in I(\gamma)} \cos(\omega_k nt) \varphi^k_x \varphi^k_z\right)}_{=:\mathsf{a}_z}
		\underbrace{\left( \sum_{k'\in I(\gamma)} \sin(\omega_{k'} nt) \varphi^{k'}_x \phi^{k'}_{z'}
		\right)}_{=: \mathsf{b}_{z'}} \expval{\frac{\tilde{\ssp}_z}{\sqrt{m_z}}\tilde{\ssr}_{z'}}_{\roo} \right|    		
   \end{split}
   \end{equation}
 Notice the definition of $\ssa_z$, $\ssb_z$ in the above expression. Then we have:
 \begin{equation} \label{Qeq:8}
 \begin{split}
	&m_x \mathsf{C}_{xx}^{pr} \leq \sum_{z=1}^n |\ssa_z| |\ssb_z|
	\left| \expval{\frac{\tilde{\ssp}
	_z}{\sqrt{m_z}}\tilde{\ssr}_{z'}}_{\roo} \right|+
	\sum_{l=1}^{n-1} \sum_{|z-z'|=l}  \left|\expval{\frac{\tilde{\ssp}_z}{\sqrt{m_z}}\tilde{\ssr}_{z'}}
	_{\roo} \right| |\ssa_z||\ssb_{z'}| \leq  \\ &C
	\sum_{z=1}^n |\ssa_z||
	\ssb_{z}| + \sum_{l=1}^{n-1} \frac{C}{l^2} \sum_{z=1}^{n-l} \left(
	|\ssa_z||\ssb_{z+l}| +|\ssa_{z+l}||\ssb_z| \right)\leq 
	C \left(\sum_{z=1}^n|\ssa_z|^2\right)^{\frac12}
	 \left(\sum_{z=1}^n|\ssb_z|^2\right)^{\frac12} + \\
	 &   \sum_{l=1}^{n-1} \frac{C}{l} \left(\left(\sum_{z=1}^{n-l} |\ssa_z|^2 \right)^{\frac12}
	   \left(\sum_{z=1}^{n-l} |\ssb_{z+l}|^2 \right)^{\frac12} 
	   + \left(\sum_{z=1}^{n-l} |\ssa_{z+l}|^2 \right)^{\frac12}
	   \left(\sum_{z=1}^{n-l} |\ssb_{z}|^2 \right)^{\frac12}  \right) 
	   \leq \\& C \left(\sum_{z=1}^n|\ssa_z|^2\right)^{\frac12}
	 \left(\sum_{z=1}^n|\ssb_z|^2\right)^{\frac12} +
	 \sum_{l=1}^{n-1}  \frac{2C}{l^2} \left(\sum_{z=1}^n|\ssa_z|^2\right)^{\frac12}
	 \left(\sum_{z=1}^n|\ssb_z|^2\right)^{\frac12},
 \end{split}
 \end{equation}
 where in the second line we used the assumption \eqref{decayassum}, and then we take
 advantage of two Cauchy-Schwartz inequality. Recalling the definition of $\ssb_z$ 
 \eqref{Qeq:7} we have (recall the convention $\phi^k_n=0$): 
 \begin{equation} \label{Qeq:9}
 \begin{split}
	&\sum_{z=1}^n  \ssb_z^2= \sum_{z=1}^{n-1} \sum_{k,k'\in I(\gamma)}
	\sin(\omega_knt) \sin(\omega_{k'}nt) \varphi^k_x\varphi^{k'}_x \phi^k_z \phi^{k'}_z
	= \sum_{k,k'\in I(\gamma)}
	\sin(\omega_knt) \sin(\omega_{k'}nt) \varphi^k_x\varphi^{k'}_x\sum_{z=1}^{n-1}
	 \phi^k_z \phi^{k'}_z = \\
	 & \sum_{k\in I(\gamma)}\sin^2(\omega_k nt) (\varphi^k_x)^2 \leq \sum_{k=0}^{n-1}
	 (\varphi^k_x)^2 =1,
 \end{split}
 \end{equation}
	 where we used the fact that $\phi^k$ and $\varphi^k$ are orthonormal basis
	 ($\sum_{z=1}^{n-1}
	 \phi^k_z \phi^{k'}_z= \delta_{k,k'}$, $\sum_{k=0}^{n-1}
	 (\varphi^k_x)^2 =1$). Similarly, we have $\sum_{z=1}^n \ssa_z^2 \leq 1$. Plugging 
	 the latter  and \eqref{Qeq:9} inside \eqref{Qeq:8}, we get: 
	 \begin{equation} \label{Qeeq111}
	 m_x \mathsf{C}_{xx}^{pr} \leq C + \sum_{l=1}^{n-1} \frac{2C}{l^2} \leq c',
	 \end{equation}
	 where as usual $c'>0$ is a constant independent of $n$ and the realization of 
	 the masses. Similarly, we can bound other terms in \eqref{Qhighbound2} by 
	 constants independent of $n$ and the realization of the masses. Notice that this 
	 constants can be chosen to be independent of $x$ as well. Moreover, the exact 
	 same argument works for $\expval{(\tilde{\ssr}_x(nt)^{\bullet})}_{\roo}$. Hence, 
	 there exists a constant $C>0$ such that for $n$, any $x \in \bI_n$, and 
	 any realization of the masses we have:
	 \begin{equation} \label{Qhighbound3}
	 \frac{\expval{(\tilde{\ssp}_x^{\bullet}(nt))^2}_{\roo}}{m_x} \leq C, 
	 \quad \expval{(\tilde{\ssr}_x^{\bullet}(nt))^2}_{\roo} \leq C.
\end{equation}	  
	Before proceeding, notice that in the above argument, we can replace 
	$\tilde{\ssp}^{\bullet}_x(nt)$ with $\tilde{\ssp}_x(nt)$ and all the steps work.
	Therefore, exists $C>0$ such that for $n$, any $x \in \bI_n$, and 
	 any realization of the masses we have:
	 \begin{equation} \label{Qhighbound4}
	 \frac{\expval{(\tilde{\ssp}_x(nt))^2}_{\roo}}{m_x} \leq C, 
	 \quad \expval{(\tilde{\ssr}_x(nt))^2}_{\roo} \leq C.
\end{equation}
Having \eqref{Qhighbound3}, bounding $f(x/n)/(m_x)^{\frac12}$, and a Cauchy-Schwartz  \eqref{CSineq} yields:
\begin{equation} \label{Qhighbound6}
\begin{split}
\left| \frac{1}{n^2} \sum_{|x-y|\leq 4 n^{\theta'}}
f(\frac{x}{n}) f(\frac{y}{n})  \expval{\tilde{\ssp}_x^{\bullet}(nt)
\tilde{\ssp}_y^{\bullet}(nt)}_{\roo} \right| \leq \frac{C}{n^2} \sum_{|x-y| \leq 4n^{\theta'}} \left|\frac{\expval{\tilde{\ssp}_x^{\bullet}(nt)\tilde{\ssp}_y^{\bullet}(nt)}_{\roo}}{\sqrt{m_x m_y}} 
\right|  \leq \\
\frac{C'}{n^2} \sum_{|x-y|\leq 4n^{\theta'}}
 \left(\frac{\expval{(\tilde{\ssp}_x^{\bullet}(nt))^2}_{\roo}}{m_x} \right)^{\frac12} 
 \left(\frac{\expval{(\tilde{\ssp}_y^{\bullet}(nt))^2}_{\roo}}{m_y} \right)^{\frac12} 
 \leq \frac{C'}{n^2} \sum_{|x-y|\leq 4n^{\theta'}} C \leq \frac{C''}{n^2} \times cn^{1+\theta'}\to 0,
 \end{split}
 \end{equation}
as $n \to \infty$, thanks to the choice of $0<\theta'<1$. 
Similar identity holds for $\ssr$:
\begin{equation}\label{Qhighbound7}
\left| \frac{1}{n^2} \sum_{|x-y|\leq 4 n^{\theta'}}
f(\frac{x}{n}) f(\frac{y}{n})  \expval{\tilde{\ssr}_x^{\bullet}(nt)
\tilde{\ssr}_y^{\bullet}(nt)}_{\roo} \right|\to 0.
\end{equation} 
Combining  \eqref{Qhighbound1}, and \eqref{Qhighbound6} we can deduce 
$\mathsf{U}_n \to 0$ almost surely w.r.tthe distribution of the masses, similarly 
\eqref{Qhighbound22}, and \eqref{Qhighbound7} gives us $\mathsf{U}_n' \to 0$ almost surely.
This conclude the proof of \eqref{Qhighmodeleprp} and this step.

\paragraph{Step 3: Summing up.}
Recall  $\mathsf{P}_n$ \eqref{eqQ:proof3}.
By using the fact that $\tilde{\ssp}_x(nt)=\tilde{\ssp}_x^{o}(nt)+
\tilde{\ssp}_x^{\bullet}(nt)$, and a Cauchy-Schwartz 
($|\expval{AB}_{\roo}| \leq \expval{AA^*}_{\roo}^{\frac12} \expval{BB^*}^{\frac12}_{\roo}$) inequality we observe that $\mathsf{P}_n \leq \mathsf{L}_n + \mathsf{U}_n +
(\mathsf{L}_n)^{\frac12}(\mathsf{U}_n)^{\frac12}$. Combining this with 
\eqref{Qlowmodelemm}, and  \eqref{Qhighmodeleprp} we deduce that  
$\mathsf{P}_n \to 0$ almost surely w.r.tthe distribution of the masses. Similarly, 
we can deduce that $\mathsf{R}_n \to 0 $ almost  surely w.r.tthe distribution of the 
masses. This finishes the proof of \eqref{eqQ:proof3} and consequently \eqref{highermomentclassicalrQ}, \eqref{highermomentclassicalpQ}. 
\end{proof}

\subsection{Evolution of $e$}
Recall average variables $\bar{\ssp}_x(nt)$, $\bar{\ssr}_x(nt)$ in \eqref{avgfluc}.
As we mentioned, they satisfy the same system of ODEs as in \cite{A20} up to a vanishing constant. Therefore, by using the regularity of initial profiles 
$\bar{r}$, $\bar{p}$, we can adapt the argument of \cite{A20}, and deduce the following (cf. \cite{A20}, Sect.4, (4.29)) for any $y \in (0,1)$:
\begin{equation} \label{Qmechanical1}
\bar{\ssr}_{[ny]}(nt) \to \fr(y,t), \quad \frac{\bar{\ssp}_{[ny]}}{m_{[ny]}}\to
\frac{\fp(y,t)}{\bar{m}},
\end{equation}
almost surely w.r.t the distribution of the masses. Therefore, by using the averaging 
lemma of \cite{A20} (cf. (4.13) in \cite{A20}) we can deduce the convergence of the average of
mecanical energy in the following sense (cf. Lemma 6.1 in \cite{A20}): for  any $f \in C^0([0,1])$
\begin{equation} \label{Qmechnical2}
\frac{1}{n}  \sum_{x=1}^n f(\frac{x}{n})
\left( \frac{\expval{\ssp_x(nt)}_{\roo}^2}{2m_x}
+ \frac{\expval{\ssr_x(nt)}_{\roo}^2}{2}\right) \to \int_0^1 f(y)
\left(\frac{\fp(y,t)}{2 \bar{m}}+\frac{\fr^2(y,t)}{2} \right).
\end{equation} 
On the other hand, the evolution of fluctuation variables \eqref{Qtildesol} is 
identical
to the one in \cite{A20}. Although, the ensemble average here is different from 
\cite{A20}, one can observe that having the assumption \eqref{avgassume}, and the 
bounds \eqref{Qpkbound},\eqref{Qrkbound}, and \eqref{Qhighbound1} are sufficient for
using the result of \cite{A20}. More precisely, recall the fluctuation operators 
\eqref{avgfluc} $\tilde{\ssp}_x(nt)$, $\tilde{\ssr}_x(nt)$; having the aforementioned bounds
we can apply Lemma 6.4. of \cite{A20} and deduce for any $f \in C^0([0,1])$:
\begin{equation} \label{Qthermal1}
\begin{split}
&T_n(t)-T_n(0):=\\ &\frac{1}{n} \sum_{x=1}^n f(\frac{x}{n}) 
\left(\frac{\expval{\tilde{\ssp}_x^2(nt)}_{\roo}}{2m_x} + 
\frac{\expval{\tilde{\ssr}_x^2(nt)}_{\roo}}{2} \right) - 
\frac{1}{n} \sum_{x=1}^n f(\frac{x}{n}) 
\left(\frac{\expval{\tilde{\ssp}_x^2(0)}_{\roo}}{2m_x} + 
\frac{\expval{\tilde{\ssr}_x^2(0)}_{\roo}}{2} \right) \to 0, 
\end{split}
\end{equation}
almost surely w.r.t.\@ the distribution of the masses. 
Combining the latter with \eqref{avgassume} yields $T_n(t) \to \int_0^1 f(y)\bar{b}(y)dy$. 
Combining this with \eqref{Qmechnical2}, 
adding and subtracting the average and comparing the obtained expression with explicit solution to the macroscopic evolution equation \eqref{eq: Euler equations},
i.e.\@ $2\fe(y,t)=\fp^2(y,t)/\bar{m}+\fr^2(y,t)+2\bar{b}(y)$, we have the quantum counterpart of \eqref{eq: limit on average} for the energy: 
\begin{equation} \label{hydroeQ}
\frac{1}{n} \sum_{x=1}^n f(\frac{x}{n}) \expval{\sse_x(nt)}_{\roo} =
 \frac{1}{n} \sum_{x=1}^n f(\frac{x}{n}) 
\left(\frac{\expval{\ssp_x^2(nt)}_{\roo}}{2m_x} + 
\frac{\expval{\ssr_x^2(nt)}_{\roo}}{2} \right) \to \int_0^1 f(y) \fe(y,t)dy,
\end{equation}
almost surely w.r.t the distribution of the masses.
Having \eqref{hydroeQ}, we can complete the proof of Theorem \ref{mainthmQ}.

Before proceeding, notice that  thanks to 
	\eqref{CCR} and \eqref{Qevolution} we have for all $k,k' \in \bI_n^o$,
	and all $x,y \in \bI_n$ : 
	\begin{equation} \label{CCRt}
	\begin{split}	
	[\hat{\ssp}_k(nt),\hat{\ssp}_{k'}(nt)]=[\hat{\ssr}_k(nt),\hat{\ssr}_{k'}(nt)]=0, 
	\: \implies  [\tilde{\ssp}_x(nt),\tilde{\ssp}_y(nt)]= [\tilde{\ssr}_x(nt),
	\tilde{\ssr}_y(nt)]= 0, \\
	 [\tilde{\ssp}^o_x(nt),\tilde{\ssp}^o_y(nt)]=
	[\tilde{\ssr}^o_x(nt),\tilde{\ssr}^o_y(nt)]=[\tilde{\ssp}^{\bullet}_x(nt),
	\tilde{\ssp}
	^{\bullet}_y(nt)]=[\tilde{\ssr}^{\bullet}_x(nt),
	\tilde{\ssr}
	^{\bullet}_y(nt)]=0, \\
	[\tilde{\ssp}^{\bullet}_x(nt),\tilde{\ssp}^o_y(nt)]=[\tilde{\ssr}^{\bullet}_x(nt),\tilde{\ssr}^o_y(nt)]=0.
	\end{split}	
	\end{equation}	 
	In the following, we will use above identities without mentioning them.

\begin{proof}[Proof of Theorem \ref{mainthmQ}: \eqref{highermomentclassicaleQ}]
Let us define the following operator:
\begin{equation} \label{Qtildeedef}
\tilde{\sse}_x(nt):= \sse_x(nt)-\expval{\sse_x(nt)}_{\roo}.
\end{equation}
At $t=0$, notice the difference among $\tilde{\sse}_x$ and $\breve{\sse}_x$ 
\eqref{tildedef}. 
As we did before, adding and subtracting the average term, and using 
\eqref{hydroeQ}, proof of \eqref{highermomentclassicaleQ} boils down to proving that
\begin{equation} \label{Qeproof}
\mathsf{E}_n:= \expval{\left(\frac{1}n \sum_{x=1}^n f(\frac{x}{n}) \tilde{\sse}_x(nt)
\right)^2}_{\roo} \to 0,
\end{equation}
almost surely w.r.t the distribution of  the masses. By using the definition of
$\tilde{\sse}_x(nt)$ and a Cauchy-Schwartz inequality \eqref{CSineq}, it is 
sufficient to prove: 

\begin{equation} \label{Qerpproof1}
\begin{split}
\mathtt{P}_n := \expval{\left(\frac1n \sum_{x=1}^n f(\frac{x}{n})
\left(\frac{\ssp_x^2(nt)}{m_x} -\expval{\frac{\ssp_x^2(nt)}{m_x}}_{\roo} \right)\right)^2}_{\roo}\to 0, \\
\mathtt{R}_n := \expval{\left(\frac1n \sum_{x=1}^n f(\frac{x}{n})
\left({\ssr_x^2(nt)} -\expval{{\ssr_x^2(nt)}}_{\roo} \right)\right)^2}_{\roo} \to 0,
\end{split}
\end{equation} 
almost surely w.r.t  the distribution of  the masses. First, recall average variables
and fluctuation operators \eqref{avgfluc},  observe that 
computation of \eqref{eq:13} is valid in the quantum case as well: 
\begin{equation} \label{Qeq:13}
	\begin{split}	
	&\expval{\left(\ssp_x^2(nt)- \expval{\ssp_x^2(nt)}_{\roo}\right)\left(
	\ssp_y^2(nt)-\expval{\ssp_y^2(nt)}_{\roo}\right)}_{\roo}=
	\\ &\expval{\tilde{\ssp}^2_x(nt)
	\tilde{\ssp}^2_y(nt)}_{\roo}- \expval{\tilde{\ssp}^2_x(nt)}_{\roo}
	\expval{\tilde{\ssp}^2_y(nt)}_{\roo} +4\bar{\ssp}_x(nt)\bar{\ssp}_y(nt) 
	\expval{\tilde{\ssp}_x(nt)\tilde{\ssp}_y(nt)}_{\roo}+ \\
	&2 \bar{\ssp}_x(nt) \expval{\tilde{\ssp}_x(nt) \tilde{\ssp}_y^2(nt)}_{\roo} + 
	2\bar{\ssp}_y(nt) \expval{\tilde{\ssp}^2_x(nt)
	\tilde{\ssp}_y(nt)}_{\roo}.
	\end{split}
	\end{equation}
	In contrast  to the classical case, we cannot simply do pairings in $\expval{\cdot}_{\roo}$. 
	Therefore, controlling each term in \eqref{Qeq:13} is more involved. 
	We control each term separately: decompose them into low and high 
	modes, and take advantage of the localization for high modes with $|x-y|$ large enough, and control the remainder. 
	
	We only prove \eqref{Qerpproof1} for $\mathsf{P}_n$, thanks to symmetry in our 
	assumption, the proof for $\mathsf{R}_n$ is identical, up to small modifications which we will omit.
	By using a Cauchy-Schwartz inequality and recalling the definition of 
	$\bar{\ssp}_x(nt)$, $\tilde{\ssp}_x(nt)$ \eqref{avgfluc}, for obtaining 
	\eqref{Qerpproof1} it  is sufficient to prove: 
	\begin{equation}\label{Qprproof2}
	\begin{split}	
	\mathsf{P}_n':= \expval{\left(\frac{1}n \sum_{x=1}^n f(\frac{x}{n})
	\left(\frac{\tilde{\ssp}_x^2(nt)}{m_x}-
	\frac{\expval{\tilde{\ssp}_x^2(nt)}_{\roo}}{m_x}\right)\right)^2}_{\roo} \to 0, \\
	\mathsf{P}_n'' :=  \expval{\left(\frac1n \sum_{x=1}^n
	f(\frac{x}{n}) \frac{\bar{\ssp}_x(nt)\tilde{\ssp}_x(nt)}{m_x}\right)^2}_{\roo}
	\to 0, 	
	\end{split}	
	\end{equation}		
	as $n\to \infty$ almost surely w.r.t distribution of the masses.  The proof of second 
	identity is straightforward: thanks to \eqref{Qmechanical1}, and regularity of  
	the macroscopic equation, $\bar{\ssp}_x/m_x$ is uniformly (in $n$ and distribution
	of the masses) bounded (cf. \cite{A20} (4.11)). On the other hand, proof of 
	\eqref{eqQ:proof3} only rests on the fact that $f$ is bounded and does not use its 
	regularity. Therefore, we absorb $\bar{\ssp}_x/m_x$ inside $f$ and follow similar  
	steps and thanks to \eqref{eqQ:proof3} deduce that: 
	\begin{equation} \label{Qprproof3}
	\mathsf{P}''_n \to 0, 
	\end{equation}	  
	almost surely w.r.t the distribution of the masses. 
	
	Fix $0<2\gamma<\theta<\theta'<1$ (we choose them opportunely later), and recall
	the definition of $\tilde{\ssp}_x^o(nt)$, $\tilde{\ssp}^{\bullet}_x(nt)$ 
	\eqref{Qlowmodedef}, \eqref{Qhighmodedef} (recall 
	$I(\gamma)=]n^{1-\gamma},n] \cap \bZ$, $\tilde{I}(\gamma)=[1,n^{1-\gamma}]
	\: \cap \bZ$). Thanks to a Cauchy-Schwartz inequality proof of the first limit in 
	\eqref{Qprproof2} reduces to the following :
	\begin{equation} \label{Qepproof3}
	\begin{split}
		\Pi_n^o := \expval{\left(\frac1n \sum_{x=1}^n 
		\frac{f(\frac{x}{n})}{m_x} \left((\tilde{\ssp}_x^o)^2(nt) - 
		\expval{(\tilde{\ssp}_x^o)^2(nt)}_{\roo}\right) \right)^2}_{\roo} \to 0,\\
		\Pi_n^{\bullet} := \expval{\left(\frac1n \sum_{x=1}^n 
		\frac{f(\frac{x}{n})}{m_x} \left((\tilde{\ssp}_x^{\bullet})^2(nt) - 
		\expval{(\tilde{\ssp}_x^{\bullet})^2(nt)}_{\roo}\right)
		\right)^2}_{\roo}	\to 0,\\
		\Pi^{o \bullet}:= \expval{\left(\frac1n \sum_{x=1}^n 
		\frac{f(\frac{x}{n})}{m_x} \left(\tilde{\ssp}_x^o(nt) \tilde{\ssp}_x^{\bullet}(nt) - 
		\expval{\tilde{\ssp}_x^o(nt)\tilde{\ssp}_x^{\bullet}(nt)}_{\roo}\right) \right)^2}
		_{\roo} \to 0, 
	\end{split}
	\end{equation}	 
	 as $n \to \infty$ almost surely w.r.t the distribution of the masses. Let us begin
	 with the first limit. To this end, we need following bounds:
	Thanks to assumption 3 \eqref{decayassum}, and by a computation similar to  
	 \eqref{Qpkbound} we have: 
	 \begin{equation} \label{Qk4bound}
	 \begin{split}
		&\expval{\hat{\tilde{\ssp}}_k^4}_{\roo} \leq c\sum_{z_1,z_2,z_3,z_4} 
		|\varphi^k_{z_1}\varphi^k_{z_2}\varphi^k_{z_3}\varphi^k_{z_4}|
		\left|\expval{\tilde{\ssp}_{z_1}\tilde{\ssp}_{z_2}\tilde{\ssp}_{z_3}
		\tilde{\ssp}_{z_4}}_{\roo}\right| \leq 	 \\
		&c \sum_{\sigma}
		\sum_{z_{\sigma_1},z_{\sigma_2}}\frac{C}{|z_{\sigma_1}-z_{\sigma_2}|^2}
		|\varphi^k_{z_{\sigma_1}} \varphi^k_{z_{\sigma_2}}|
		\sum_{z_{\sigma_3},z_{\sigma_4}}\frac{C}{|z_{\sigma_3}-z_{\sigma_4}|^2}
		|\varphi^k_{z_{\sigma_3}} \varphi^k_{z_{\sigma_4}}| \leq C,
	\end{split}
	 \end{equation}
	where in the first line we used \eqref{decayassum}, in the second line we sum 
	over three following permutations $\sigma: \{1,2,3,4\} \to \{1,2,3,4\}$:
	$(12)(34)$, $(13)(24)$, and $(14)(23)$. The last inequality is obtained similar 
	to \eqref{Qpkbound}. Notice that $C$ is independent of $n$, and realization of 
	the masses. Thanks to symmetry of our assumption, we can deduce similar 
	bounds for $\expval{(\hat{\tilde{\ssp}}_k)^j(\hat{\tilde{\ssr}}_k)^{4-j}}_{\roo}$, 
	for $j \in \{0,1,2,3,4\}$. Therefore, we can deduce similar bounds at time $nt$:
	there exists $C>0$ such that for any $n$, $k \in \bI_n^o$, and any realization 
	of the masses:
	\begin{equation} \label{Qk4boundt}
		\expval{\hat{\tilde{\ssp}}_k^4(nt)}_{\roo} \leq C, \quad,
		\expval{\hat{\tilde{\ssr}}_k^4(nt)}_{\roo} \leq C.
	\end{equation}	 
	Recall $\tilde{I}(\gamma)=[0,n^{1-\gamma}] \cap \bZ$ and observe for any constant 
	$C>0$: 
	\begin{equation} \label{Qelowmodes}
		\begin{split}
			& \frac{C}{n^2} \sum_{x,y=1}^n \left|\expval{(\tilde{\ssp}^o_x)^2(nt)
	(\tilde{\ssp}^o_y)^2(nt)}_{\roo}- \expval{(\tilde{\ssp}^o_x)^2(nt)}_{\roo}
	\expval{(\tilde{\ssp}^o_y)^2(nt)}_{\roo}\right| \leq \\& \frac{C'}{n^2}\sum_{x,y=1}^n\frac{1}{m_xm_y}
	 \left(\expval{(\tilde{\ssp}^o_x)^2(nt)
	(\tilde{\ssp}^o_y)^2(nt)}_{\roo}+ \expval{(\tilde{\ssp}^o_x)^2(nt)}_{\roo}
	\expval{(\tilde{\ssp}^o_y)^2(nt)}_{\roo} \right) \leq \\
	&  \frac{C''}{n^2}\sum_{x,y=1}^n \sum_{k_1,k_2,k_3,k_4 \in \tilde{I}(\gamma)}
	\varphi^{k_1}_x \varphi^{k_2}_x	\varphi^{k_3}_y \varphi^{k_4}_y	
	\bigg(\expval{\hat{\tilde{\ssp}}_{k_1}(nt)\hat{\tilde{\ssp}}_{k_2}(nt)\hat{\tilde{\ssp}}_{k_3}(nt)\hat{\tilde{\ssp}}_{k_4}(nt)}_{\roo} +\\
	&\expval{\hat{\tilde{\ssp}}_{k_1}(nt)\hat{\tilde{\ssp}}_{k_2}(nt)}_{\roo}\expval{\hat{\tilde{\ssp}}_{k_3}(nt)\hat{\tilde{\ssp}}_{k_4}(nt)}_{\roo} \bigg)=\\
	&
	\frac{C''}{n^2}\sum_{k_1,k_3 \in \tilde{I}(\gamma)} \bigg(
	\expval{(\hat{\tilde{\ssp}}_{k_1}(nt))^2(\hat{\tilde{\ssp}}_{k_3}(nt))^2}_{\roo} 	
	+\expval{(\hat{\tilde{\ssp}}_{k_1}(nt))^2}
	\expval{(\hat{\tilde{\ssp}}_{k_3}(nt))^2}_{\roo}
		\bigg)	 \leq \frac{\tilde{C} n^{2-2\gamma}}{n^2} \to 0,	
		\end{split}
	\end{equation}	 
	as $n\to \infty$, where we used the fact that  $\sum_{x} \varphi^k_x
	 \varphi^{k'}_x=\delta_{k,k'}$, a Cauchy-Schwartz inequality and bounds 
	 \eqref{Qk4boundt}, \eqref{Qtrpkbound}, as well as the fact that $\gamma>0$.
	 Expanding the square in $\Pi^{o}_n$, bounding $|f(\frac{x}{n})/m_x|$, and 
	 taking advantage of \eqref{Qelowmodes} we deduce that $\Pi_n^o \to 0 $. \\
	 In order to prove two other limits, we need the following bound: there exists
	 $C>0$ such that for any $n$, $x \in \bI_n$, and any realization of the masses:
	 \begin{equation} \label{Q4xbound}
	\begin{split}	 
	 \expval{\tilde{\ssp}_x^4(nt)}_{\roo} \leq C, 
	 \quad \expval{(\tilde{\ssp}_x^o)^4(nt)}_{\roo} \leq C, \quad
	 \expval{(\tilde{\ssp}_x^{\bullet})^4(nt)}_{\roo} \leq C, \\
	 \expval{\tilde{\ssr}_x^4(nt)}_{\roo} \leq C, 
	 \quad \expval{(\tilde{\ssr}_x^o)^4(nt)}_{\roo} \leq C, \quad
	 \expval{(\tilde{\ssr}_x^{\bullet})^4(nt)}_{\roo} \leq C,
	\end{split}	 
	 \end{equation}
	we only show the above bound for $\expval{\tilde{\ssp}_x^4(nt)}_{\roo}$,
	other bound can be obtained similarly. 
	Notice that we have:
	\begin{equation}\label{Qeeq:1}
	\frac{\expval{\tilde{\ssp}_x^4(nt)}_{\roo}}{m_x}=
	\sum_{k_1,k_2,k_3,k_4} \varphi^{k_1}_x\varphi^{k_2}_x \varphi^{k_3}_x \varphi^{k_4}_x
	\expval{\hat{\tilde{\ssp}}_{k_1}(nt) \hat{\tilde{\ssp}}_{k_2}(nt)
	\hat{\tilde{\ssp}}_{k_3}(nt) \hat{\tilde{\ssp}}_{k_4}(nt)}_{\roo}.
	\end{equation}
	Considering the definition of  $\hat{\tilde{\ssp}}_{k_1}(nt)$ \eqref{Qtildesol},
	the last average in the above expression contains $16$ terms. However, thanks to 
	symmetry of our assumption it is sufficient to  bound one of them and the rest are
	treated similarly. Let us take the term involving only $\ssp$ terms. Let us denote 
	\begin{equation} \label{Qea}
	\ssa_z^x := \sum_{k} \varphi^k_x \varphi^k_z \cos(\omega_k nt).  	
	\end{equation}	
	Then by using the definition of $\hat{\tilde{\ssp}}_k$ we have:
	\begin{equation} \label{Qeeq:2}
		\begin{split}
	&\sum_{k_1,k_2,k_3,k_4} \varphi^{k_1}_x\varphi^{k_2}_x \varphi^{k_3}_x \varphi^{k_4}_x
	\cos(\omega_{k_1}nt) \cos(\omega_{k_2}nt) \cos(\omega_{k_3}nt) 
	\cos(\omega_{k_4}nt) 	
	\expval{\hat{\tilde{\ssp}}_{k_1}\hat{\tilde{\ssp}}_{k_2} 
	\hat{\tilde{\ssp}}_{k_3} \hat{\tilde{\ssp}}_{k_4}}_{\roo}= \\
 &	\sum_{z_1,z_2,z_3,z_4=1}^n \expval{\frac{\tilde{\ssp}_{z_1}}{
	\sqrt{m_{z_1}}}\frac{\tilde{\ssp}_{z_2}}{
	\sqrt{m_{z_2}}}\frac{\tilde{\ssp}_{z_3}}{
	\sqrt{m_{z_3}}}\frac{\tilde{\ssp}_{z_4}}{
	\sqrt{m_{z_4}}}}_{\roo} \ssa_{z_1}^x \ssa_{z_2}^x \ssa_{z_3}^x \ssa_{z_4}^x
	\leq \\ &\sum_{\sigma} \sum_{z_{\sigma_1},z_{\sigma_2}} \frac{c}{|z_{\sigma_1}-
	z_{\sigma_2}|^2} |\ssa_{z_{\sigma_1}}||\ssa_{z_{\sigma_2}}| 
	\sum_{z_{\sigma_3},z_{\sigma_4}} \frac{c}{|z_{\sigma_3}-
	z_{\sigma_4}|^2} |\ssa_{z_{\sigma_3}}||\ssa_{z_{\sigma_4}}|\leq C,
	\end{split}
	\end{equation}
	where $\sum_{\sigma}$ is a summation over the three above-mentioned permutations;
	in the second line we used the assumption \eqref{decayassum}. Comparing each sum 
	($\sum_{z_{\sigma_i},z_{\sigma_j}}$) with the expression appearing in 
	\eqref{Qeq:7}, and using the bound \eqref{Qeeq111}  
	it is straightforward to observe that the above expression is bounded by a constant.
	 Notice that as in \eqref{Qeq:7} our argument can 
	be adapted for $\tilde{\ssp}^{\bullet}$, and $\tilde{\ssp}^o$. Hence,we can deduce 
	\eqref{Q4xbound}.
	
	 Recall $0<2 \gamma <\theta <\theta' <1$. 
	 Bounding $f(\frac{x}{n})/m_x$, taking advantage of the Cauchy-
	Schwartz inequality \eqref{CSineq}, and using the bounds \eqref{Qhighbound3},
	and \eqref{Q4xbound} we have: 
	\begin{equation} \label{Qepproof4}
			\left|\frac{1}{n^2} \sum_{|x-y|\leq 16 n^{\theta'}} 
		\frac{f(\frac{x}{n})f(\frac{y}{n})}{m_xm_y} 
		\left((\tilde{\ssp}_x^{\bullet})^2(nt) - 
		\expval{(\tilde{\ssp}_x^{\bullet})^2(nt)}_{\roo}\right)	
		\left((\tilde{\ssp}_y^{\bullet})^2(nt) - 
		\expval{(\tilde{\ssp}_y^{\bullet})^2(nt)}_{\roo}\right)\right|
		\leq C\frac{n^{\theta'}}{n} \to 0,
	\end{equation}
	as $n \to \infty$. 
	Now take $|x-y|> 4n^{\theta'}$, and observe:
	\begin{equation} \label{Qeproof5}
	\begin{split}
		&\left|\expval{(\tilde{\ssp}_x^{\bullet})^2(nt)
		(\tilde{\ssp}_y^{\bullet})^2(nt)}_{\roo} -
		\expval{(\tilde{\ssp}_x^{\bullet})^2(nt)}_{\roo} \expval{(\tilde{\ssp}_y^{\bullet})^2(nt)}_{\roo} \right| = 
		\bigg| \sum_{k_1,k_2,k_3,k_4 \in I(\gamma)}  \varphi^{k_1}_x\varphi^{k_2}_x \varphi^{k_3}_y \varphi^{k_4}_y \times \\
	&\expval{\hat{\tilde{\ssp}}_{k_1}(nt) \hat{\tilde{\ssp}}_{k_2}(nt)
	\hat{\tilde{\ssp}}_{k_3}(nt) \hat{\tilde{\ssp}}_{k_4}(nt)}_{\roo} -
	\expval{\hat{\tilde{\ssp}}_{k_1}(nt) \hat{\tilde{\ssp}}_{k_2}(nt)}_{\roo}
	\expval{
	\hat{\tilde{\ssp}}_{k_3}(nt) \hat{\tilde{\ssp}}_{k_4}(nt)}_{\roo}\bigg|	.
	\end{split}
	\end{equation}
	Using the definition of  $\hat{\tilde{\ssp}}_{k}(nt)$, we obtain 16 different terms. 
	However, tanks to the symmetry of \eqref{decayassum}, and Lemma \ref{boundomega}, all
	these terms can be treated similarly. Therefore, we take the terms involving only 
	$\tilde{\ssp}$. Let us denote 
	\begin{equation} \label{Qaee}
				\tilde{\ssa}_z^x:= \sum_{k\in I(\gamma)} \cos(\omega_knt) \varphi^k_x 
				\varphi^k_z. 	
	\end{equation}
	Then by using the definition of $\hat{\tilde{\ssp}}_k$ we have:
	\begin{equation}\label{Qeproof6}
	\begin{split}
		& \tilde{\mathsf{C}}_{xy}:=\bigg| \sum_{k_1,k_2,k_3,k_4 \in I(\gamma)}  \varphi^{k_1}_x\varphi^{k_2}_x \varphi^{k_3}_y \varphi^{k_4}_y	 \cos(\omega_{k_1} nt)
		 \cos(\omega_{k_1} nt) \cos(\omega_{k_2} nt)  \cos(\omega_{k_3} nt)  \cos(\omega_{k_4} nt) \\& \expval{\hat{\tilde{\ssp}}_{k_1} \hat{\tilde{\ssp}}_{k_2}
	\hat{\tilde{\ssp}}_{k_3} \hat{\tilde{\ssp}}_{k_4}}_{\roo} -
	\expval{\hat{\tilde{\ssp}}_{k_1} \hat{\tilde{\ssp}}_{k_2}}_{\roo}
	\expval{
	\hat{\tilde{\ssp}}_{k_3} \hat{\tilde{\ssp}}_{k_4}}_{\roo} \bigg| = \\
	&\bigg|\sum_{z_1,z_2,z_3,z_4=1}^n \left(\expval{\frac{\tilde{\ssp}_{z_1}}{
	\sqrt{m_{z_1}}}\frac{\tilde{\ssp}_{z_2}}{
	\sqrt{m_{z_2}}}\frac{\tilde{\ssp}_{z_3}}{
	\sqrt{m_{z_3}}}\frac{\tilde{\ssp}_{z_4}}{
	\sqrt{m_{z_4}}}}_{\roo}-\expval{\frac{\tilde{\ssp}_{z_1}}{
	\sqrt{m_{z_1}}}\frac{\tilde{\ssp}_{z_2}}{
	\sqrt{m_{z_2}}}}_{\roo} \expval{\frac{\tilde{\ssp}_{z_3}}{
	\sqrt{m_{z_3}}}\frac{\tilde{\ssp}_{z_4}}{
	\sqrt{m_{z_4}}}}_{\roo}\right) \times \\
	& \tilde{\ssa}_{z_1}^x \tilde{\ssa}_{z_2}^x
	\tilde{\ssa}_{z_3}^x \tilde{\ssa}_{z_4}^x \bigg| 
	\leq \sum_{z_1,z_3} \frac{c}{|z_1-z_3|^2}
	|\tilde{\ssa}_{z_1}^x| |\tilde{\ssa}_{z_3}^y|\sum_{z_2,z_4}  \frac{c}{|z_2-z_4|^2}
	|\tilde{\ssa}_{z_2}^x| |\tilde{\ssa}_{z_4}^y| +
	\\ &\sum_{z_1,z_4} \frac{c}{|z_1-z_4|^2}
	|\tilde{\ssa}_{z_1}^x| |\tilde{\ssa}_{z_4}^y|\sum_{z_2,z_3}  \frac{c}{|z_2-z_3|^2}
	|\tilde{\ssa}_{z_2}^x| |\tilde{\ssa}_{z_3}^y| 
	\end{split}
	\end{equation}	 
	where we used the assumption \eqref{decayassum}. We can decompose each sum
	in the above expression as we did in \eqref{Qcov1}. In fact, comparing above sums,
	with  the expression in \eqref{Qcov1}, thanks to \eqref{Qeq:4}, 
	\eqref{Qeq:5} it is straightforward to observe that
	\begin{equation} \label{Qeproof7}
		\sum_{|x-y|>4n^{\theta'}} \tilde{\mathsf{C}}_{xy} \to 0,
	\end{equation}
	almost surely w.r.t the distribution of the masses, where one should recall 
	\eqref{Qeq:4}, and \eqref{Qcov2}. Combining the above symmetry argument, 
	\eqref{Qeproof5} 
	\eqref{Qeproof7}, and \eqref{Qeproof6}, we have:

	\begin{equation} \label{Qepproof44}
			\left|\frac{1}{n^2} \sum_{|x-y|>4 n^{\theta'}} 
		\frac{f(\frac{x}{n})f(\frac{y}{n})}{m_xm_y} 
		\left((\tilde{\ssp}_x^{\bullet})^2(nt) - 
		\expval{(\tilde{\ssp}_x^{\bullet})^2(nt)}_{\roo}\right)	
		\left((\tilde{\ssp}_y^{\bullet})^2(nt) - 
		\expval{(\tilde{\ssp}_y^{\bullet})^2(nt)}_{\roo}\right)\right|
		\to 0,
	\end{equation}
		almost surely w.r.t the distribution of the masses. Finally,  having
		\eqref{Qepproof44}, \eqref{Qepproof4} we conclude $\Pi^{\bullet}_n \to 0$ 
		almost surely w.r.t the distribution of the masses.\\
		 Now we prove the last 
		limit of \eqref{Qepproof3} (we abbreviate: $a(nt)b(nt)=(ab)(nt)$):
		\begin{equation}
		\begin{split} \label{Qeproof8}
			&\Pi_n^{o\bullet} \leq \frac{C}{n^2}\sum_{x,y=1}^{n} 
			\left|\expval{(\tilde{\ssp}_x^o\tilde{\ssp}_y^o
			\tilde{\ssp}_x^{\bullet} \tilde{\ssp}_y^{\bullet})(nt)}_{\roo}\right|+
			\frac{C}{n^2} \sum_{x,y=1}^n 
			\left|\expval{(\tilde{\ssp}_x^o\tilde{\ssp}_x^{\bullet})(nt)}_{\roo} \right|\left|
			\expval{
			(\tilde{\ssp}_y^o\tilde{\ssp}_y^{\bullet})(nt)}_{\roo} \right|\\	 \leq
			&\frac{C}{n^2}	\sum_{x,y=1}^n
			\expval{(\tilde{\ssp}_x^o\tilde{\ssp}_y^o)^2(nt)}_{\roo}^{\frac12} 
			\expval{(\tilde{\ssp}_x^{\bullet} \tilde{\ssp}_y^{\bullet})^2(nt)}_{\roo}^{\frac12} + \\
			& \frac{C}{n^2} \sum_{x,y=1}^n \expval{(\tilde{\ssp}_x^o)^2(nt)}_{\roo}^{\frac12}
		\expval{(\tilde{\ssp}_y^o)^2(nt)}_{\roo}^{\frac12} \expval{(\tilde{\ssp}_x^{\bullet})^2(nt)}_{\roo}^{\frac12}\expval{(\tilde{\ssp}_y^{\bullet})^2(nt)}_{\roo}^{\frac12}	\leq\\
				&\frac{c'}{n^2} \sum_{x,y=1}^n
				\expval{(\tilde{\ssp}_x^o)^2(nt) (\tilde{\ssp}_y^o)^2(nt)}_{\roo}^{\frac12}
		+\frac{c'}{n^2} \sum_{x,y=1}^n	\expval{(\tilde{\ssp}_x^o)^2(nt)}_{\roo}^{\frac12}
		\expval{(\tilde{\ssp}_y^o)^2(nt)}_{\roo}^{\frac12} 
		\leq  \\
		& c' \left(\frac{C}{n^2} \sum_{x,y=1}^n
				\frac{1}{m_xm_y}\expval{(\tilde{\ssp}_x^o)^2(nt) (\tilde{\ssp}_y^o)^2(nt)}_{\roo} \right)^{\frac12}  + 
				\\ &c'\left(\frac{C}{n^2} 
				\sum_{x,y=1}^n \frac{1}{m_xm_y}	\expval{(\tilde{\ssp}_x^o)^2(nt)}_{\roo}
		\expval{(\tilde{\ssp}_y^o)^2(nt)}_{\roo} \right)^{\frac12} \leq c'' \frac{n^{1-\gamma}}{n} \to 0,
		\end{split}
		\end{equation}
	as $n \to \infty$, where in the above computation we take advantage of the fact that
	above operators commute \eqref{CCRt}; in the first line we bounded $f/m_x$, in
	the second inequality we performed Cauchy-Schwartz \eqref{CSineq}, in the third 
	inequality we performed another Cauchy-Schwartz and used the bounds on "second
	and fourth moments" \eqref{Q4xbound}, \eqref{Qhighbound3}. In the fourth 
	inequality we take advantage of the Jensen inequality thanks to the fact that
	$f(z)=z^{\frac12}$ is concave. Finally, the last inequality is deduced from 
	\eqref{Qelowmodes}.  
	
	This finishes proof of \eqref{Qepproof3}, which yields \eqref{Qprproof2}. This means 
	$\mathsf{P}_n \to 0$ almost surely. As we mentioned the proof of the other limit
	in \eqref{Qerpproof1} is  similar. Hence we conclude \eqref{Qeproof}, and 
	this finishes the proof of \eqref{highermomentclassicaleQ} which complete the proof
	of Theorem \ref{mainthmQ}.
\end{proof}


\subsection{Locally Gibbs state}
In this section, we  me prove Theorem \eqref{thmGibbs}. Let us emphasize that this theorem 
can be deduced  from tools developed in Sect. 5  of \cite{A20}. Here we only mention the main critical
steps and refer reader to \cite{A20} for more details. Before proceeding, 
let us recall  basic notations, definitions, and identities related to $\Roo$ mostly
from \cite{A20}:

Recall the definition of $\Roo$ \eqref{locallyGibbs}. Thanks to  (3.41), (3.42) of
\cite{A20} up to an error that vanishes in the limit,
\begin{equation}
\tilde{\ssp}_x:= \ssp_x - \expval{\ssp_x}_{\Roo}= \ssp_x -\bar{p}(\frac{x}{n})
(\frac{m_x}{\bar{m}}), \quad \tilde{\ssr}_x:=\ssr_x-\expval{\ssr_x}_{\Roo}=
\ssr_x- \bar{r}(\frac{x}{n}).
\end{equation} 
Therefore, $\Roo$ can be written as $\Roo= \frac{1}{Z_n}\exp(-H_n^{\beta})$, 
with 
\begin{equation}
H_{\beta}^n=\frac12\sum_{x=1}^n
\left( \frac{\beta(\frac{x}{n})}{m_x}
 \tilde{\ssp}_x^2 + \beta(\frac{x}{n})\tilde{\ssr}_x^2\right). 
\end{equation}
Let us diagonalize $H_{\beta}^n$. We begin by a couple of definitions:
Let $M_\beta=M\tilde{\beta}^{-1}$, with $\tilde{\beta}:= \diag(\beta(\frac{1}{n}),\dots,\beta(\frac{n}{n}))$ and $\beta^o:=\diag(\beta(\frac{1}{n}),\dots,\beta(\frac{n-1}{n}))$.
Define $A_{p}^{\beta}$ and  $A_r^{\beta}$ as: 
	\begin{equation} \label{thermalmatrices}
		A_p^{\beta}=M_{\beta}^{-\frac12}(-\nabla_-\beta^o \nabla_+)M_{\beta}^{-\frac12}, \qquad A_r^{\beta}= (\beta^o)^{\frac12}(-\nabla_+ M_{\beta}^{-1} \nabla_-)(\beta^o)^{\frac12}.
	\end{equation}
 $A_p^{\beta}$ is symmetric positive semidefinite with almost sure non-degenerate spectrum.
 Denote $\{ \psi^k \}_{k=0}^{n-1}$ to be its orthonormal set of eigenvectors, forming a basis for $\mathbb{R}^n$. Denote the corresponding set of increasing eigenvalues by $\{0=\gamma_0^2<\gamma_1^2<\dots<\gamma_{n-1}^2\}$. Notice $\langle \psi^k, \psi ^{k'}\rangle=\sum_{x=1}^n \psi^k_x \psi^{k'}_x=\delta_{k,k'}$, and $\sum_{k=0}^{n-1}\psi^k_x\psi^{k'}_y=\delta_{x,y}$. $A_r^{\beta}$ is symmetric positive definite. Define for $k \in \mathbb{I}_{n-1}$, 
$$ \tilde{\psi}^k:= \frac{1}{\gamma_k} (\beta^o)^{\frac12} \nabla_+M_{\beta}^{-\frac12} \psi^k,$$  
observe that $\{ \tilde{\psi}_k\}_{k=1}^{n-1}$  is the set of eigenvectors of $A_r^{\beta}$ with similar eigenvalues $\gamma_1^2<\dots<\gamma_{n-1}^2$. 
We define another set  of coordinates: for $k \in\bI_n$ (convention: $\rmr_n=0$):
\begin{equation} \label{thermalcoordinate}
\rmp_k:=\langle \psi^k,M_{\beta}^{-\frac12}\tilde{\ssp} \rangle_n,
\quad \rmr_k:=\langle \tilde{\psi}^k,\tilde{\ssr} \rangle_{n-1}, 
\end{equation}
Correspondingly, we define the following set of  bosonic operators for $k \in \bI_{n-1}$:
\begin{equation} \label{thermalbosonic}
\rmb_k := \frac{1}{2\gamma_k}\left(\rmr_k+i\rmp_k \right), \quad
\rmb_k^* := \frac{1}{2\gamma_k}\left(\rmr_k-i\rmp_k \right).
\end{equation}
 From \eqref{thermalcoordinate}, and \eqref{thermalbosonic}, by using properties of 
 $\psi^k$, $\tilde{\psi}^k$, we have upto a vanishing error: 
 \begin{equation} \label{Hthermal}
 H_{\beta}^n= \frac12\sum_{k=1}^{n-1}(\rmp_k^2+ \rmr_k^2)= \sum_{k=1}^{n-1}
 \gamma_k (\rmb_k^*\rmb_k+\frac12),
 \end{equation}
where we have following commutation relations  for 
$k,k' \in \bI_n$ thanks to \eqref{CCR} and above 
definitions:
\begin{equation} \label{CCRthermal}
\begin{split}
[\rmp_k,\rmp_{k'}]=[\rmr_k,\rmr_{k'}]=0,\quad [\rmr_k,\rmp_{k'}]=i\gamma_k\delta_{k,k'}, \\
[\rmb_k,\rmb_{k'}]=[\rmb_k^*,\rmb_{k'}^*]=0, \quad [\rmb_k^*,\rmb_{k'}]=\delta_{k,k'}.
\end{split}
\end{equation}
The detailed computation of above identities can be found in sec.~3 of \cite{A20}, in particular (3.30)-(3.36).

Having \eqref{Hthermal} and \eqref{CCRthermal} the discrete spectrum of $H_{\beta}^n$
is understood. Moreover, thanks to spectral theorem we can compute average of certain 
observables w.r.t $\Roo$. In particular we have (cf. \cite{A20} Sect.3 
(3.38), (3.48) and Appendix C): for $k,k' \in \bI_{n-1}$, and $x,y \in \bI_n$:
\begin{equation} \label{thermalavgs}
\begin{split}
&\expval{\rmb_k}_{\Roo}= \expval{\rmb_k^*}_{\Roo} =0, \quad \expval{\rmb_k^* \rmb_k}_{\Roo} 
=\frac{\delta_{k,k'}}{e^{\gamma_k}-1}, \implies \\
&\expval{\rmp_k}_{\Roo}=\expval{\rmr_k}_{\Roo}=0, \quad 
\expval{\rmp_k \rmp_{k'}}_{\Roo}=\expval{\rmr_k \rmr_{k'}}_{\Roo}=\delta_{k,k'}\frac{\gamma_k}{2}
\coth(\frac{\gamma_k}{2}).
\end{split}
\end{equation}
Finally, thanks to ladder structure of bosonic operators, by using spectral theorem we can 
observe that $\rmb_k$ has pairing structure in $\Roo$. Denote $\rmb_k^1:= \rmb_k$, 
$\rmb_k^2:= \rmb_k^*$. Then we have for any $k_1,k_2,k_3,k_4$, and any $\sharp_1, \sharp_2
, \sharp_3, \sharp_4 \in\{ 1,2\}$:
\begin{equation}\label{Qpairing}
\begin{split}
&\expval{\rmb_{k_1}^{\sharp_1}\rmb_{k_2}^{\sharp_2}\rmb_{k_3}^{\sharp_3}}_{\roo} =0,\\
&\expval{\rmb_{k_1}^{\sharp_1}\rmb_{k_2}^{\sharp_2}\rmb_{k_3}^{\sharp_3}
\rmb_{k_4}^{\sharp_4}}_{\roo}= \expval{\rmb_{k_1}^{\sharp_1}\rmb_{k_3}^{\sharp_3}}_{\Roo}\expval{\rmb_{k_2}^{\sharp_2}\rmb_{k_4}^{\sharp_4}}_{\Roo}+ \expval{\rmb_{k_1}^{\sharp_1}\rmb_{k_2}^{\sharp_2}}_{\Roo}\expval{\rmb_{k_3}^{\sharp_3}\rmb_{k_4}^{\sharp_4}}_{\Roo}
+ \\ &\expval{\rmb_{k_1}^{\sharp_1}\rmb_{k_4}^{\sharp_4}}_{\Roo}\expval{\rmb_{k_2}^{\sharp_2}\rmb_{k_3}^{\sharp_3}}_{\Roo}.
\end{split}
\end{equation}
Now we state the proof of  Theorem \ref{thmGibbs}.

\begin{proof}[Proof of theorem \ref{thmGibbs}] 
First notice that thanks to the pairing structure \eqref{Qpairing}, the pairing structure 
remains true for $\tilde{\ssp}_x(nt)$, $\tilde{\ssr}_x(nt)$: similar to \eqref{pairing}
with proper modification (since they are linear combination of bosonic operators, it is a 
matter of lengthy but straightforward computation to
see this pairing structure persists). Therefore, $\Roo$ satisfies the third bound in 
\eqref{decayassum}, since odd moments are zero. Moreover, thanks to the pairing structure,
it is sufficient to prove the first bound in \eqref{decayassum}. Finally,
we prove the first bound for $\tilde{\ssp}_x$, $\tilde{\ssp}_y$. Adapting the proof to the case of 
$\tilde{\ssr}_x$, $\tilde{\ssr}_y$ and cross terms is straightforward. 

Inverse of \eqref{thermalcoordinate} reads ($\beta_x:= \beta(\frac{x}{n})$):
\begin{equation} \label{thermalinverse}
\tilde{\ssp}_x= \sqrt{\frac{m_x}{\beta_x}}\sum_{k=1}^{n-1} \psi^k_x \rmp_k. 
\end{equation}  
Hence,by using \eqref{thermalavgs} we have: 
\begin{equation} \label{Qcovxy}
\expval{\tilde{\ssp}_x \tilde{\ssp}_y}_{\Roo}= \sqrt{\frac{m_xm_y}{\beta_x\beta_y}}
\sum_{k,k'=1}^{n-1} \psi^k_x\psi^{k'}_y \expval{\rmp_k\rmp_{k'}}_{\Roo}= 
\sqrt{\frac{m_xm_y}{\beta_x\beta_y}} \sum_{k=1}^{n-1} \frac{\gamma_k}{2} 
\coth(\frac{\gamma_k}{2}) \psi^k_x\psi^k_y.
\end{equation}
Recall the matrix $A_p^{\beta}$, and the fact  that $\psi^k$ is their eigenvectors.
Denoting 
\begin{equation} \label{xcothx}		
			\mathfrak{f}(z)=
			\begin{cases}
				z^{\frac12}\coth(z^{\frac12}), \quad z \neq  0, \\
				1, \quad z=0.
			\end{cases}
	\end{equation}
	by above expression \eqref{Qcovxy} it is clear that we have (up to a
	vanishing error  as $n \to \infty$):
\begin{equation}  \label{Qcovxy2}
\expval{\tilde{\ssp}_x \tilde{\ssp}_y}_{\Roo}= \langle x, \mathfrak{f}(A_{p}^{\beta}) y
\rangle_n,
\end{equation}
where $\ket{x}= (0,0,\dots,1,0,\dots,0)\in \bR^n$ with $1$ located at $x$th position.

The poles of the function $z^{\frac12}\coth(z^{\frac12})$ is the following set: $\{z \in 
\mathbb{C} | z= -k^2\pi^2, k \in \mathbb{Z} \}$, and this function is analytic on the rest 
of the complex plane. The point zero is a removable pole, and  by redefining the function 
at zero, we can remove this pole: $\coth(z)$ has the following Taylor series expression for 
$0<|z|<\pi$: $\coth(z)=z^{-1}+\sum_{n=1}^{\infty}a_n z^{2n-1}$, where $a_n=\frac{2^{2n}
B_{2n}}{(2n)!}$, and $B_{2n}$ 
are Bernoulli numbers. Hence, $z\coth(z)=1+\sum_{n=1}^{\infty} a_nz^{2n}$, and   
$z^{\frac12}\coth(z^{\frac12})$ is given by the following Taylor series: $1+\sum_{n=1}
^{\infty}a_nz^n$ for $0<|z|<\pi^2$.  Consequently, the pole of $\mathfrak{f}$ is given by 
the set $\{z \in \mathbb{C} | z= -k^2\pi^2, k \in \mathbb{N}, k>0 \}.$ 
		Notice that by definition of $A_p^{\beta}$, thanks to properties of masses and $
		\beta$ there is a constant $c_0>0$, uniform in $n$\footnote{This constant can be 
		taken equal to $4\frac{\beta_{max}^2}{m_{min}}$.}, such that for any configuration 
		of the masses, we have $||A_p^{\beta}||_2 <c_0$ ($||\cdot||_2$ denotes the usual 
		operator norm of the matrix). Define $\alpha:=\frac12(c_0+1)$, let $\mathcal{R}:=
		\alpha+\pi^2$, by the above argument $\mathfrak{f}(z)$ is analytic in the open disk 
		$|z-\alpha|<\mathcal{R}$, and $\mathcal{R}$ is the radius of convergence for the 
		Taylor expansion of $\mathfrak{f}$,  $\mathfrak{f}(z)=\sum_{k=0}^{\infty}a_k(z-
		\alpha)^k$. Moreover, by the choice of $\alpha$ and $c_0$, the eigenvalues of 
		$A_p^{\beta}$ and $A_r^{\beta}$ lies in the disk $|z-\alpha|< \mathcal{R}$. So we 
		have
		 the following Taylor expansions for $\mathfrak{f}(A_p^{\beta})$ (cf.
		 \cite{A20} sect. 5 for more details):
		 	\begin{equation}\label{Taylorexpansionmatrix}
			\mathfrak{f}(A_p^{\beta})=\sum_{k=0}^{\infty} a_k (A_p^{\beta}-\alpha I_n)^k. 
	 			\end{equation}	
				
	By a random walk representation argument(cf. (5.11) of \cite{A20}),  since $A_p^{\beta}$ is tridiagonal 
	one can observe that for any $k< |x-y|-1$, we have:
	\begin{equation} \label{Qlowtaylor}
	\langle x, (A_p^{\beta})^k y \rangle_n=0.
	\end{equation}
	Plugging the Taylor series \eqref{Taylorexpansionmatrix} inside the expression
	\eqref{Qcovxy2} and using \eqref{Qlowtaylor} we have:
	\begin{equation} \label{Qboundtaylor}
	|\expval{\tilde{\ssp}_x \tilde{\ssp}_y}_{\Roo}| \leq ||\sum_{k>|x-y|-2} a_k (A_p^{\beta}-\alpha I_n)^k||
	_2  \leq \sum_{k>|x-y|-2} |a_k|||(A_p^{\beta}-\alpha I_n)||_2^k \leq C
	\mathfrak{q}^{|x-y|},
	\end{equation}
for some $0<\mathfrak{q}<1$, where we used the fact that $||(A_p^{\beta}-\alpha I_n)||_2$ 
is inside the radius of convergence of the Taylor series,  and we take advantage of properties of Taylor series (cf. (5.25) of \cite{A20}).
The latter estimate gives us the first bound in \eqref{decayassum}. In fact, it provides
much stronger exponential decay. This finishes proof of Theorem \ref{thmGibbs}. 
\end{proof}
\medskip
\section*{Acknowledgement}
This work was partially supported by ANR-15-CE40-0020-01 LSD of the French National Research Agency. A.H. would like to thank prof. Gigliola Staffilani and M.I.T math department for their hospitality during this project. 

\bibliographystyle{plain}
\bibliography{bibilo2}
 
\end{document}